\let\csname equation*\endcsname\relax
\let\csname endequation*\endcsname\relax
\newcommand{\rd}{\mathrm{d}}
\newcommand{\cF}{\mathcal{F}}
\newcommand{\cU}{\mathcal{U}}
\newcommand{\eps}{\epsilon}
\newcommand{\cN}{\mathcal{N}}
\newcommand{\cI}{\mathcal{I}}
\def\beq{\begin{equation}}
\def\eeq{\end{equation}}
\DeclareMathOperator*{\res}{Res}
\DeclareMathOperator*{\dlog}{\rd \! \log}
\def\bra#1{\left\langle #1\right|}
\def\sqbra#1{\left[ #1\right|}
\def\sqket#1{\left| #1\right]}
\def\gb #1{ \left\langle #1 \right]}
\def\tgb #1{ \left[ #1 \right\rangle}
\newtheorem{prop}{Proposition}
\newtheorem{conj}{Conjecture}
\newtheorem{cor}{Corollary}
\newtheorem{ex}{Example}
\theoremstyle{remark}
\newtheorem{rem}{Remark}
\def\vev#1{\left\langle #1 \right\rangle}
\def\@mkboth#1#2{}
\newlength\appendixwidth
\preto\appendix{\addtocontents{toc}{\protect\patchl@section}}
\newcommand{\patchl@section}{%
  \settowidth{\appendixwidth}{\textbf{Appendix }}%
  \addtolength{\appendixwidth}{1.5em}%
  \patchcmd{\l@section}{1.5em}{\appendixwidth}{}{\ddt}%
}
\begin{document}

\begin{flushright}
SAGEX-22-04, BONN-TH-2022-03, CERN-TH-2022-021
\end{flushright}

\title[Feynman integrals]{The SAGEX Review on Scattering Amplitudes\\ 
\vskip 0.4cm Chapter 3: Mathematical Structures in Feynman Integrals}

\author{Samuel Abreu$^{1,2}$, Ruth Britto$^{3}$ and Claude Duhr$^4$}

 \address{$^1$ Theoretical Physics Department, CERN, 1211 Geneva, Switzerland}
 \address{$^2$ Higgs Centre for Theoretical Physics, School of Physics and Astronomy,
 The University of Edinburgh, Edinburgh EH9 3FD, Scotland, UK}
 \address{$^3$ School of Mathematics and Hamilton Mathematics Institute, Trinity College, Dublin 2, Ireland}
  \address{$^4$ Bethe Center for Theoretical Physics, Universit\"at Bonn, D-53115, Germany}

\ead{samuel.abreu@cern.ch, brittor@tcd.ie, cduhr@uni-bonn.de}

\begin{abstract}
Dimensionally-regulated Feynman integrals are a cornerstone of all perturbative computations in quantum field theory. They are known to exhibit a rich mathematical structure, which has led to the development of powerful new techniques for their computation. We review some of the most recent advances in our understanding of the analytic structure of multiloop Feynman integrals in dimensional regularisation. In particular, we give an overview of modern approaches to computing Feynman integrals using differential equations, and we discuss some of the properties of the functions that appear in the solutions. We then review how dimensional regularisation has a natural mathematical interpretation in terms of the theory of twisted cohomology groups, and how many of the well-known ideas about Feynman integrals arise naturally in this context. This is Chapter 3 of a series of review articles on scattering amplitudes, of which Chapter 0 \cite{Travaglini:2022uwo} presents an overview and Chapter 4 \cite{Blumlein:2022zkr} contains closely related topics.
\end{abstract}

%
%
%
\maketitle
%
%


\tableofcontents

\newpage


%
%

\section{Background and definitions}

An $L$-loop Feynman integral is an integral of the form:
\beq\label{eq:fint}
I(p_1,\ldots,p_E;m_1^2,\ldots,m_p^2;\nu;D) = 
\int\left(\prod_{j=1}^Le^{\gamma_E\eps}\frac{\rd^Dk_j}{i\pi^{D/2}}\right)
\frac{\cN(\{k_j\cdot k_l, k_j\cdot p_l\};D)}{\prod_{j=1}^p(m_j^2-q_j^2-i\varepsilon)^{\nu_j}}\,.
\eeq
Here $\gamma_E=-\Gamma'(1)$ is the Euler-Mascheroni constant,  $\nu = (\nu_1,\ldots,\nu_p)\in\mathbb{Z}^p$, 
and we work in dimensional regularisation with $D=D_0-2\eps$ dimensions and $D_0\in\mathbb{N}$.
Depending on the context,
we will use $D$ or $\epsilon$ interchangeably to denote quantities that depend on the dimensional regulator.
The factors $1/(m_j^2-q_j^2-i\varepsilon)$ are called \emph{propagators}, and
we use the usual Feynman $i\varepsilon$-prescription to deform the integration contour away from the propagator
poles. Hence note the distinction between the dimensional regularisation parameter $\eps$ and the infinitesimal $\varepsilon$ for the Feynman prescription for the propagator.
The integral is a function of the propagator masses $m_1^2,\ldots,m_p^2$ and of the external momenta  
$p_1,\ldots,p_E$, constrained by momentum  conservation $\sum_{j=1}^Ep_j=0$
(we assume without loss of generality that all
external momenta are incoming).
The squared propagator masses are assumed to be positive, and the external momenta are real Minkowski momenta which 
we also assume to be $D$-dimensional (we assume that, aside from momentum conservation,
they are all linearly independent).
The numerator $\cN$ is a polynomial in the dot products involving at least one of the loop momenta $k_i$.
The momenta $q_i$ flowing through the propagators have the form
\beq\label{eq:propmomenta}
q_i = \sum_{j=1}^L\alpha_{ij}k_j + \sum_{j=1}^{E-1}\beta_{ij}p_j\,,\qquad \alpha_{ij},\beta_{ij}\in\{-1,0,+1\}\,.
\eeq
We define $|\nu|$ as the sum of the exponents of the denominators,
\begin{equation}
	|\nu|=\sum_{j=1}^p\nu_j\,.
\end{equation}

The Feynman integral in \cref{eq:fint} is invariant under Lorentz transformations $\Lambda$ in $D$ dimensions:
\beq\label{eq:Lorentz}
I(\Lambda p_1,\ldots,\Lambda p_E;m_1^2,\ldots,m_p^2;\nu;D) = I(p_1,\ldots,p_E;m_1^2,\ldots,m_p^2;\nu;D)\,.
\eeq
As a consequence, the integral only depends on the \emph{external scales}
\begin{equation}\label{eq:extScales}
x=(\{p_i\cdot p_j\}_{1\le i,j\le E},\{m_j^2\}_{1\le j\le p})\,.
\end{equation}
We will often write $I(x; \nu;D)$ instead of $I(p_1,\ldots,p_E;m_1^2,\ldots,m_p^2;\nu;D)$, 
in order to make the Lorentz invariance manifest. We note that because of momentum 
conservation not all $\{p_i\cdot p_j\}_{1\le i,j\le E}$ are independent, so it should be understood
that $x$ contains only an independent set of such dot products.

\begin{rem} It is possible to relax the condition $\nu_i\in\mathbb{Z}$. 
This leads to the notion of \emph{generic} Feynman integrals considered in ref.~\cite{speer1969,speerwestwater}.
In particular, for certain classes of Feynman integrals, an $L$-loop Feynman integral
can be written as an $(L-1)$-loop generic Feynman integral.
\end{rem}

\subsection{Some properties of Feynman integrals}

\paragraph{Feynman graphs.} Feynman integrals as presented in \cref{eq:fint} are associated to \emph{Feynman graphs} with $L$ loops, where 
for every propagator there is an internal edge $e_j$ labelled by $(q_j,m_j^2,\nu_j)$, 
and momentum must be conserved at every vertex. 
The external edges are labelled by the inflowing external momenta.

\paragraph{Mass dimension of a Feynman integral.} 
Consider the action on Feynman integrals of a rescaling of the external momenta and the propagator masses, 
$(p_j,m_j) \to (\lambda\,p_j,\lambda\,m_j)$, $\lambda\in\mathbb{R}^{\ast} = \mathbb{R}\setminus\{0\}$.
It is straightforward to check that all external scales in \cref{eq:extScales} rescale like 
$x\to \lambda^2\, x$. Assuming 
that the numerator is homogeneous (as is always the case in physics applications), 
\begin{equation}
\cN(\lambda^2\,\{k_j\cdot k_l, k_j\cdot p_l\};D) = 
\lambda^{\alpha_{\cN}}\,\cN(\{k_j\cdot k_l, k_j\cdot p_l\};D)\,,
\end{equation}
where $\alpha_{\cN} = [\cN]$ is the \emph{mass dimension} of the numerator, we have
\beq\label{eq:Rescaling}
I(\lambda^2 x;\nu;D) = \lambda^{\alpha_I} \,I( x;\nu;D) \,,
\eeq
where the mass dimension of the integral is
\beq\label{eq:energy_dimension}
[I( x;\nu;D)] = \alpha_{I} = [\cN] + LD-2\,|\nu|\,.
\eeq
Note that  $[I( x;\nu;D)] = m-2L\eps$  for some integer $m\in \mathbb{Z}$,
which implies that the mass dimension of an integral is never zero in dimensional regularisation.

\paragraph{Symmetries.} The behaviour under Lorentz transformations and rescalings, see 
\cref{eq:Lorentz,eq:Rescaling}, can be recast in terms of the infinitesimal action of 
the generators of the Lorentz transformations and dilatations:
\begin{align}
\label{eq:LI}L_{\mu\nu} I(p_1,\ldots,p_E;m_1^2,\ldots,m_p^2;\nu;D) &\,= 0\,, \\
\label{eq:dilatations}\mathcal{D} I(p_1,\ldots,p_E;m_1^2,\ldots,m_p^2;\nu;D) &\,= 
\alpha_I\,I(p_1,\ldots,p_E;m_1^2,\ldots,m_p^2;\nu;D)\,,
\end{align}
with
\begin{align}\begin{split}\label{eq:dilatations_generator}
L_{\mu\nu} = \sum_{i=1}^E\left(p_{i,\mu}\frac{\partial}{\partial p_i^{\nu}}-p_{i,\nu}\frac{\partial}{\partial p_i^{\mu}}\right)\,,\\
\mathcal{D}= \sum_{i=1}^E p_i^{\mu}\frac{\partial}{\partial p_i^{\mu}} + \sum_{i=1}^p m_i\frac{\partial}{\partial m_i}\,.
\end{split}\end{align}

\paragraph{Scaleless integrals.} A Feynman integral is said to be \emph{scaleless} if $x=\vec 0$. 
This means that the integral does not depend on any external scale. 
\begin{prop}
In dimensional regularisation, all scaleless integrals vanish.
\end{prop}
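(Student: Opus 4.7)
The plan is to exploit the rescaling property \cref{eq:Rescaling} in the limit where all external scales vanish. Concretely, under the simultaneous rescaling $(p_j,m_j)\to(\lambda p_j,\lambda m_j)$ with $\lambda\in\mathbb{R}^{\ast}$, the vector $x$ of external scales transforms as $x\to\lambda^2 x$, and the integral picks up the homogeneity factor $\lambda^{\alpha_I}$ given by \cref{eq:energy_dimension}. For a scaleless integral we have $x=\vec 0$, and the crucial observation is that $\lambda^2\vec 0=\vec 0$, so the rescaling acts trivially on the argument.

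First I would substitute $x=\vec 0$ into \cref{eq:Rescaling}. This gives the identity
\begin{equation}
I(\vec 0;\nu;D) = \lambda^{\alpha_I}\,I(\vec 0;\nu;D)\,,
\end{equation}
which must hold for every $\lambda\in\mathbb{R}^{\ast}$. Rearranging, $\left(\lambda^{\alpha_I}-1\right)I(\vec 0;\nu;D)=0$, so either $I(\vec 0;\nu;D)=0$ or else $\lambda^{\alpha_I}=1$ for all $\lambda$, which forces $\alpha_I=0$.

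The second step is to rule out $\alpha_I=0$. By \cref{eq:energy_dimension}, $\alpha_I=[\cN]+LD-2|\nu|$. Substituting $D=D_0-2\eps$, this takes the form $\alpha_I=m-2L\eps$ with $m\in\mathbb{Z}$ and $L\ge 1$. Viewed as a function of the dimensional regulator, $\alpha_I$ is a nonconstant affine function of $\eps$, and in particular $\alpha_I\ne 0$ for generic $\eps$. Choosing any $\lambda>0$ with $\lambda\ne 1$ then yields $\lambda^{\alpha_I}\ne 1$, which forces $I(\vec 0;\nu;D)=0$ as a function of $\eps$ (and hence everywhere by analytic continuation).

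The main conceptual point, which one should flag as the only real subtlety, is that this manipulation is only meaningful because dimensional regularisation gives us a well-defined meaning to the scaleless integral as a function of $\eps$ in the first place: without a regulator, the integral at $x=\vec 0$ is typically ill-defined due to simultaneous ultraviolet and infrared divergences, and the equation $I=\lambda^{\alpha_I} I$ would compare two undefined quantities. The nonvanishing of the mass dimension in dimensional regularisation, noted immediately after \cref{eq:energy_dimension}, is precisely what makes the homogeneity argument decisive rather than vacuous.
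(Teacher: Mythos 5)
Your argument is correct and is essentially the paper's own proof: both exploit the homogeneity relation $I(\lambda^2 x;\nu;D)=\lambda^{\alpha_I}I(x;\nu;D)$ at $x=\vec 0$ together with the fact that $\alpha_I=m-2L\eps\neq 0$ in dimensional regularisation. Your additional remarks on ruling out $\alpha_I=0$ for generic $\eps$ and on why the regulator makes the manipulation meaningful simply make explicit what the paper leaves implicit.
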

\begin{proof}
Since there is no external scale, the integral is a constant. On the other hand, we must have
\beq
I(\lambda^2 x; \nu;D) = \lambda^{\alpha_I} \,I( x; \nu;D) \,,\qquad 
\textrm{for all $\lambda\in\mathbb{R}^{\ast}$}\,.
\eeq
Since $\alpha_I\neq0$ in dimensional regularisation---see \cref{eq:energy_dimension}---the
previous equation can only be consistent if $I(x;\nu;D)=0$.
\end{proof}
\begin{cor}\label{cor:neg_powers}
If $\nu_j\le 0$ for all $1\le j\le p$, then $I(x;\nu;D)=0$ in dimensional regularisation.
\end{cor}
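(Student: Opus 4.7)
My plan is to reduce the corollary to the Proposition by first showing that the integrand becomes polynomial in all variables, then arguing by rescaling of loop momenta. When $\nu_j\leq 0$ for all $j$, each propagator factor $(m_j^2-q_j^2-i\varepsilon)^{\nu_j}$ carries a non-negative integer exponent $-\nu_j$, and is therefore a polynomial in $q_j^2$ and $m_j^2$. Combined with the polynomial numerator $\cN$, the full integrand of \cref{eq:fint} is a polynomial in the loop momenta, external momenta, and masses, with no poles anywhere in loop-momentum space.

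I would then expand this polynomial as a sum of monomials in the $k_j$ whose coefficients are polynomials in the external data $x$. The integral splits into a finite sum of terms of the form $c(x)\,T^{\mu_1\cdots\mu_s}_{n_1\cdots n_s}$, where $T^{\mu_1\cdots\mu_s}_{n_1\cdots n_s}=\int\prod_j\bigl(e^{\gamma_E\eps}\rd^D k_j/(i\pi^{D/2})\bigr)\,k_{n_1}^{\mu_1}\cdots k_{n_s}^{\mu_s}$ is a pure tensor loop integral carrying only Lorentz-index structure and no external scale whatsoever. To show each such $T$ vanishes I would mimic the rescaling argument in the proof of the Proposition, but applied to the loop momenta: the change of variables $k_j\to\lambda k_j$ leaves each $T$ invariant and simultaneously generates a factor $\lambda^{LD+s}$, where $s$ is the total degree in loop momenta. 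Since $LD+s=LD_0+s-2L\eps$ has a nonzero $\eps$-part for $L\geq 1$, we are forced to $T=0$, and hence $I(x;\nu;D)=0$.

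The main subtlety is that the $T$'s are strictly speaking not of the form \cref{eq:fint}, since they contain no propagators at all, and so the Proposition does not apply to them verbatim. One therefore has to re-run the scaleless-rescaling argument one level deeper, at the level of the loop momenta rather than the external momenta. Equivalently, one may simply invoke the standard dimensional-regularisation convention that $\int\rd^D k\,(k^2)^a=0$ for all $a\geq 0$, which is itself precisely this rescaling argument in disguise.
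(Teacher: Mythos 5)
Your proof is correct and follows essentially the same route as the paper: observe that the integrand becomes a polynomial and reduce to scaleless integrals, which vanish by the rescaling argument. The paper simply invokes the preceding Proposition at this point, whereas you spell out the rescaling one level deeper (acting on the loop momenta of the pure tensor integrals $T$); this is a legitimate filling-in of a detail the paper leaves implicit, not a different approach.
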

\begin{proof}
If $\nu_j\le 0$ for all $1\le j\le p$, then the integrand is a polynomial. 
We can then write the integral as a linear combination of scaleless integrals.
\end{proof}

\subsection{Parametric representations}\label{sec:paramRep}

The representation of Feynman integrals in \cref{eq:fint} is the one that is the most
straightforwardly associated with Feynman graphs and Feynman rules.
However, it is not the most convenient to evaluate the integrals and it
obscures some of their properties. In this subsection we present alternative
representations that can be helpful in addressing these questions, mostly focusing
on the Feynman integrals with unit numerator, $\cN=1$. The parametric representations
of the integrals that will be discussed here can also be used as the \emph{definition}
of a Feynman integral in dimensional regularisation, where $D$ is simply another
parameter the integrals depend on, instead of having the physical interpretation
of being the non-integer dimensional space in which the momenta live.

We will not discuss all existing parametric representations
for Feynman integrals, but instead restrict ourselves to the ones that will 
feature in the rest of this review. We refer the reader to other references
for a more detailed discussion, e.g.,~refs.~\cite{Smirnov:2012gma,Panzer:2015ida,Weinzierl:2022eaz}.

\paragraph{Schwinger parametrisation.} The Schwinger-parameter representation is obtained by using:
\begin{equation}
	\frac{1}{X^\nu}=\frac{1}{\Gamma(\nu)}\int_0^\infty \rd\alpha\,\alpha^{\nu-1}e^{-\alpha X}\,,
	\qquad \text{for } X>0\,,\quad \text{Re}(\nu)>0\,,
\end{equation}
to rewrite each of the factors in the denominator of \cref{eq:fint}. Through
standard manipulations (including Wick rotation of the energy component of the loop momenta), 
we can then integrate over the loop momenta and obtain:
\begin{equation}\label{eq:Schwinger}
I(x; \nu;D) = e^{\gamma_E L\eps}\,
\prod_{j=1}^p\int_0^\infty\rd\alpha_j\,\frac{\alpha_j^{\nu_j-1}}{\Gamma(\nu_j)}\,
\cU(\alpha)^{-D/2}\,e^{-\cF(\alpha;x)/\cU(\alpha)}\,.
\end{equation}
Here, the $\alpha_j$ are called Schwinger parameters, and
$\cU(\alpha)$ and $\cF(\alpha;x)$ are polynomials that are associated with the corresponding
Feynman graph (i.e., they can be determined directly from the Feynman graph
associated with the Feynman integral). Their precise form is not important
for us here, but we highlight some of their properties that will be used later:
\begin{itemize}
	\item $\cU(\alpha)$ is the determinant of an $L\times L$ matrix whose entries only depend
	on the Schwinger parameters $\alpha_j$. It is an homogeneous polynomial of degree $L$
	in the $\alpha_j$.
	\item $\cF(\alpha;x)$ depends on both the Schwinger parameters $\alpha_j$ and the external scales $x$.
	It is an homogeneous polynomial of degree $L+1$ in the $\alpha_j$ of mass dimension 2.
	It has the form:
	\begin{equation}\label{eq:FmassDep}
		\cF(\alpha;x)=\cU(\alpha)\sum_{j=1}^pm_j^2\,\alpha_j+\widetilde{\cF}(\alpha;x)\,,
	\end{equation}
	where $\widetilde{\cF}(\alpha;x)$ is independent of all internal masses $m_j^2$.
\end{itemize}
The polynomials $\cU$ and $\widetilde{\cF}$ are also known as the first and second Symanzik polynomials.

\paragraph{Feynman parametrisation.} This is perhaps the most well known parametric representation, 
and can be derived either directly from \cref{eq:fint},
or from the Schwinger-parameter representation.
Starting from \cref{eq:Schwinger}, we insert
\begin{equation}
	1=\int_0^\infty \rd u\,\delta\left(u-\sum_{j=1}^p\alpha_j\right)
\end{equation}
and after some manipulation we obtain
\begin{align}\begin{split}\label{eq:Feynman}
I( x; \nu;D) =\, & e^{\gamma_E L\eps}\,\Gamma\left(|\nu|-\frac{LD}{2}\right)\,
\prod_{j=1}^p\int_0^\infty\rd\alpha_j\,\frac{\alpha_j^{\nu_j-1}}{\Gamma(\nu_j)}\,
\delta\left(1-\sum_{j=1}^{p}\alpha_j\right)\,\,\frac{\cU( \alpha)^{|\nu|-\frac{(L+1)D}{2}}}
{\cF( \alpha; x)^{|\nu|-\frac{LD}{2}}}\,,
\end{split}\end{align}
where the $\alpha_j$ are now called Feynman parameters.
The Feynman-parameter representation in \cref{eq:Feynman} is not as general
as it could be. Indeed, it can be shown that it is in fact a projective integral
over a simplex in real projective space of dimension $p-1$, 
and \cref{eq:Feynman} is only a particular realisation of such an integral. 
Other realisations can, for instance, be obtained by restricting the sum in the delta 
function to be over only a subset of the Feynman parameters (usually
referred to as the \emph{Cheng-Wu theorem} in physics~\cite{Cheng:1987ga}), or even any other linear combination thereof~\cite{Panzer:2015ida}. 

The Feynman-parameter representation has been widely used to compute Feynman integrals
by direct integration, both numerically and analytically. It also allows us to make an observation on
Feynman integrals that is completely obscured in the momentum-space representation 
of \cref{eq:fint} and not as clear in the Schwinger parameter representation
of \cref{eq:Schwinger}: we see that the powers of the denominators $\nu_i$ and the dimension
$D$ appear in a very similar way in the integrand of \cref{eq:Feynman}, namely they are both in the exponents
of the $\cU(\alpha)$ and $\cF(\alpha;x)$ polynomials. 
We will see later that we can find relations between integrals with different
values of the $\nu_i$, but also with different values of $D$.

\paragraph{Cutkosky-Baikov parametrisation.} The final parametric representation we discuss is the Cutkosky-Baikov representation.
We start by noting that the integrand of the Feynman integral in \cref{eq:fint}
depends on $K=L+E-1$ momenta: the $L$ loop momenta $k_1,\ldots,k_L$ and the $E-1$ 
independent external momenta, say $p_1,\ldots, p_{E-1}$. Let us denote these momenta as $t_1,\ldots,t_K$, 
in the order specified above. We then consider all dot products $\tau_{ij}=t_i\cdot t_j$
for $1\leq i\leq L$ and $1\leq j\leq K$ that involve at least one loop momentum.
It is straightforward to check that there are $N=L(L+1)/2+L(E-1)$ such dot products.
We can change variables from the components of the loop momenta to the $\tau_{ij}$
to find:
\begin{align}\begin{split}
	I( x;\nu;D)=\,&
	\frac{(-1)^L\,\pi^{\frac{L-N}{2}}e^{\gamma_E L\eps}\left[G(p_1,\ldots,p_{E-1})\right]^{\frac{E-D}{2}}}
	{\Gamma\left(\frac{D-K+1}{2}\right)\Gamma\left(\frac{D-K+2}{2}\right)\ldots\Gamma\left(\frac{D-E+1}{2}\right)}
	\int_\Delta \prod_{i=1}^L\prod_{j=i}^K \rd\tau_{ij}\\
	&\left[G(t_1,\ldots,t_K)\right]^{\frac{D-K-1}{2}}
	\frac{\cN(\{\tau_{kl}; x\};D)}{\prod_{a=1}^p(m_a^2-q_a^2-i\varepsilon)^{\nu_a}}\,,
\end{split}\end{align}
where $G(a_1,\ldots,a_n)$ is the Gram determinant of the momenta $a_1$ through $a_n$. 
The integration domain $\Delta$ is bounded by the surface $G(t_1,\ldots,t_K)=0$.
We next note that the inverse propagators $z_a=(m_a^2-q_a^2)$ are linear in the $\tau_{ij}$, see \cref{eq:propmomenta}.
Provided there is an invertible transformation from the $z_a$ to the $\tau_{ij}$, we can then trivially
change variables from the $\tau_{ij}$ to the $z_a$. It is very common that there are fewer propagators than 
the number $N$ of $\tau_{ij}$. This situation can be dealt with simply  by considering a larger class of integrals
with extra propagators, so that the transformation from the $z_a$ to the $\tau_{ij}$ is invertible
(this defines a complete family of Feynman integrals, see \cref{sec:ibprels} below), 
and then set the power $\nu_i$ associated with the extra propagators to zero. 
Assuming there is an invertible transformation between the $z_a$ and the $\tau_{ij}$, we can write:
\begin{align}\begin{split}
\label{eq:Baikov}
	I( x;\nu;D)=\,&
	\frac{(-1)^L\,\pi^{\frac{L-N}{2}}e^{\gamma_E L\eps}\left[G(p_1,\ldots,p_{E-1})\right]^{\frac{E-D}{2}}}
	{\Gamma\left(\frac{D-K+1}{2}\right)\Gamma\left(\frac{D-K+2}{2}\right)\ldots\Gamma\left(\frac{D-E+1}{2}\right)}
	\int_\Delta \prod_{a=1}^p\rd z_a\,\mathcal{B}(z)^{\frac{D-K-1}{2}}
	\frac{\cN(\{z_k; x\};D)}{\prod_{c=1}^pz_c^{\nu_c}}\,,
\end{split}\end{align}
where $\mathcal{B}(z)$ is the Baikov polynomial associated with the Feynman integral $I( x;\nu;D)$
(strictly speaking, with the complete family of Feynman integrals defined by $I( x;\nu;D)$):
\begin{equation}\label{eq:baikovPol}
	\mathcal{B}(z)=G(t_1,\ldots,t_K)\,.
\end{equation}

\begin{rem} (Cut Feynman Integrals)\label{rem:cutFeynmanIntegrals}
The Cutkosky-Baikov representation in \cref{eq:Baikov} is often not the most practical for 
computing $I(x;\nu;D)$. However, this representation is well suited for studying
certain properties of Feynman integrals. For instance, it is a natural starting point
for defining and studying so-called \emph{cut Feynman integrals},
where a subset of the inverse propagators are set to zero.
(The representation in \cref{eq:Baikov} was in fact first introduced in ref.~\cite{Cutkosky:1960sp}
precisely to study the Landau singularities of Feynman integrals \cite{Landau:1959fi}, which
are closely related to their cuts; see also refs.~\cite{Harley:2017qut,Bosma:2017ens,Frellesvig:2017aai}
for discussions about computing cut integrals from the Cutkosky-Baikov representation.) 
In the representation of \cref{eq:Baikov}, this cut condition can be imposed by taking residues where 
the corresponding $z_a$ are zero. We note that there is a close connection between evaluating residues 
and choosing a contour that encircles the associated poles, so that cut 
Feynman integrals correspond to Feynman integrals evaluated on modified integration contours 
(see e.g.~ref.~\cite{Abreu:2017ptx} for a discussion in the context of one-loop integrals).
Of particular interest are contours that encircle the poles associated with all active propagators (those whose $\nu_i$ are positive), corresponding to a \emph{maximal cut} of the integral.
Beyond one loop, maximal cuts might not be unique, and there may be inequivalent choices of how to integrate over the variables that are not localised by the maximal-cut conditions.
\end{rem}

Given the relation between cut Feynman integrals and residues, 
it follows that:
\begin{prop}\label{prop:cuts}
	If $\nu_i\leq 0$ for any of the cut propagators of a cut Feynman 
	integral, then the cut integral vanishes.
\end{prop}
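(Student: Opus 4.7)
The proposition is almost a direct consequence of the identification of cuts with residues in the Cutkosky--Baikov representation~\cref{eq:Baikov}, as described in \cref{rem:cutFeynmanIntegrals}. My plan is, for each cut propagator indexed by $i$, to replace the integral $\int \rd z_i\,/\,z_i^{\nu_i}$ by the residue operation $\res_{z_i = 0}$, and then to argue that when $\nu_i \leq 0$ the full $z_i$-dependence of the integrand is regular at $z_i = 0$, so that the residue vanishes.

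Concretely, I would isolate the $z_i$-dependent factors in the integrand of \cref{eq:Baikov}: the factor $1/z_i^{\nu_i} = z_i^{-\nu_i}$, the Baikov polynomial $\mathcal{B}(z)$ raised to the power $(D-K-1)/2$, the numerator $\cN(\{z_k;x\};D)$, and the remaining denominator factors $z_c^{\nu_c}$ for $c \neq i$. The numerator is a polynomial in $z_i$, and the other propagator factors $z_c^{\nu_c}$ with $c\neq i$ are manifestly holomorphic in $z_i$ at $z_i = 0$. If $\nu_i \leq 0$ then $z_i^{-\nu_i}$ introduces a zero rather than a pole at the origin. The residue at $z_i = 0$ of a function regular there is zero, and integrating this vanishing result over the remaining variables produces zero for the cut integral.

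The one step that deserves real care is the behaviour of $\mathcal{B}(z)^{(D-K-1)/2}$ at $z_i = 0$, since for non-integer exponent this is multivalued. However, $\mathcal{B}(z)$ is a polynomial in the $z_k$ and, for generic kinematics, does not vanish identically on the hyperplane $\{z_i = 0\}$, so a single-valued branch of this fractional power can be chosen to be analytic in a neighbourhood of a generic point of $\{z_i = 0\}$. The standard dimensional-regularisation argument---defining the integral by analytic continuation in $D$ from values for which $(D-K-1)/2$ is a large non-negative integer, where holomorphy in $z_i$ is manifest---turns this into a rigorous statement. This is the only point where I would be careful in a detailed write-up; the rest of the argument is the one-line observation that regular integrands have vanishing residues.
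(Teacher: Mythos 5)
Your proposal is correct and follows essentially the same route as the paper, which derives the proposition directly from the identification of cuts with residues at $z_i=0$ in the Cutkosky--Baikov representation: for $\nu_i\le 0$ the factor $z_i^{-\nu_i}$ is regular (indeed vanishing or constant) at the origin, so the residue is zero. Your additional care about the branch of $\mathcal{B}(z)^{(D-K-1)/2}$ is a sensible refinement of the same argument, which the paper leaves implicit.
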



\section{Linear relations among Feynman integrals}
\label{sec:ibps}

Instead of seeing the Feynman integral as a function of the external scales $x$ 
for fixed values of the propagator exponents $\nu$ and the space-time dimension $D$, 
we can also interpret it as a function of $\hat{\nu}=(\nu_0,\nu_1,\ldots,\nu_p)\in\mathbb{Z}^{p+1}$,
with $\nu_0=D_0/2$, for fixed $x$. In other words, if we fix the external scales $x$,
we can associate to every point $\hat{\nu}$ on the lattice $\mathbb{Z}^{p+1}$ the 
Feynman integral $I(x;\nu;D)$. In this way we obtain an infinite \emph{family of Feynman integrals} 
labelled by the lattice points. Note that Corollary~\ref{cor:neg_powers} implies that the integral 
vanishes unless at least one of the $\nu_1,\ldots,\nu_p$ is strictly positive.  
The point of this section is to show that these integrals are not independent, but there are linear 
relations among integrals for different lattice points, and one can always identify a finite basis 
of integrals that generates this infinite family. We start by discussing the linear relations 
relating integrals with different values of $\nu\in\mathbb{Z}^p$ but for the same space-time 
dimension $\nu_0$, and we comment on relations between integrals in different dimensions 
at the end of this section.

\subsection{Total derivatives in dimensional regularisation}
We start by presenting a theorem first used in refs.~\cite{Tkachov:1981wb,Chetyrkin:1981qh} to study linear relations among 
integrals.  In order to state the theorem, it is useful to introduce the notation
\beq
I(p_1,\ldots,p_E;m_1^2,\ldots,m_p^2;\nu;D) = \int\left(\prod_{j=1}^Le^{\gamma_E\eps}
\frac{\rd^Dk_j}{i\pi^{D/2}}\right)\,F_I(k_1,\ldots,k_L;p_1,\ldots,p_E;\nu;D)\,,
\eeq
with
\beq
\,F_I(k_1,\ldots,k_L;p_1,\ldots,p_E;\nu;D) = 
\frac{\cN(\{k_j\cdot k_l, k_j\cdot p_l\};D)}{\prod_{j=1}^p(m_j^2-q_j^2-i\varepsilon)^{\nu_j}}\,.
\eeq

\begin{prop}\label{prop:ibp}
In dimensional regularisation, we have 
\beq\label{eq:IBPgen}
\int {\rd}^Dk_i\,\frac{\partial}{\partial k_i^\mu}
\left[v^{\mu}\,F_I(k_1,\ldots,k_L;p_1,\ldots,p_E;\nu;D)\right]=0\,,\qquad 1\le i\le L\,,
\eeq
for every $D$-dimensional vector $v^{\mu}$.
\end{prop}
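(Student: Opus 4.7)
The plan is to reduce \cref{eq:IBPgen} to an elementary identity for Gaussian integrals via the Schwinger-parameter representation of \cref{eq:Schwinger}, which the paper has just established and which can be adopted as the definition of the $D$-dimensional loop integration.

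First I would introduce a Schwinger parameter $\alpha_j$ for each propagator in $F_I$. The $k_i$-integrand then takes the form $P(k_1,\ldots,k_L)\,e^{-Q}$, where $Q$ is a positive-definite quadratic form in the loop momenta (with coefficients depending on the $\alpha_j$ and the external data) and $P$ is a polynomial inheriting contributions from $\cN$ together with the extra polynomial factor supplied by $v^\mu$. In this representation the integral over $k_i$ is defined for arbitrary complex $D$ by the master Gaussian formula $\int\rd^Dk\,e^{-A k^2 + B\cdot k}=(\pi/A)^{D/2}\,e^{B^2/(4A)}$, extended to arbitrary polynomial prefactors by differentiation in the auxiliary source $B$.

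Second I would prove the Gaussian IBP identity
\beq
\int \rd^D k\,\frac{\partial}{\partial k^\mu}\!\left[v^\mu(k)\,P(k)\,e^{-Q}\right]=0\,.
\eeq
At integer $D$ in the Euclidean regime this is immediate from Stokes' theorem, since the Gaussian suppression annihilates every surface contribution at infinity for any polynomial prefactor. For complex $D$ the identity persists because, after the Gaussian integration is carried out explicitly, both sides are rational functions of $D$ built from Gamma functions and polynomials in the $\alpha_j$ and the external data, so an analytic identity valid in a neighbourhood of an integer $D$ extends everywhere. Since $\partial/\partial k_i^\mu$ commutes with the Schwinger integration over the $\alpha_j$, restoring that integration gives the proposition.

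The main subtlety, and the point that deserves care, is ensuring that $v^\mu F_I$ and $\partial_\mu(v^\mu F_I)$ both belong to the class of integrands on which the Schwinger parametrisation is valid, so that the manipulations above are justified rather than merely formal. Because the loop-momentum derivative only shifts propagator exponents $\nu_j\to\nu_j+1$ and multiplies by extra polynomial structure absorbable into the numerator, both integrands share the same Symanzik polynomials $\cU$ and $\cF$, and their analytic continuation in $D$ is controlled uniformly by the same data. In contrast with fixed-integer dimensional renormalisation, where preserving the vanishing of total derivatives is delicate and can be obstructed by anomalies, the Gaussian regulator implicit in the Schwinger representation removes any boundary-at-infinity issue and the identity holds automatically.
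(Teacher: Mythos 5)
Your proof is correct in substance but follows a genuinely different route from the paper's. The paper (following ref.~\cite{Lee:2008tj}) takes as input the invariance of the dimensionally-regulated measure under affine reparametrisations $k_i^\mu\to\lambda k_i^\mu+v^\mu$, and extracts \cref{eq:IBPgen} from the two infinitesimal versions of that statement: translation invariance gives the identity for constant $v^\mu$, and scaling covariance $\rd^D(\lambda k)=\lambda^D\rd^Dk$ gives it for $v^\mu=k_i^\mu$. You instead ground the identity in the Schwinger/Gaussian representation taken as the \emph{definition} of the $D$-dimensional integral, prove the vanishing of the total derivative at positive integer $D$ by Stokes' theorem with Gaussian damping, and then continue analytically in $D$. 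What the paper's route buys is brevity and a conceptual point the authors exploit in the subsequent remark: the identity is a consequence of symmetries of the regularised measure, which makes transparent why it fails for, say, a hard cutoff. What your route buys is that nothing is postulated: translation and scaling invariance come out of the explicit construction rather than being assumed, the argument covers arbitrary polynomial $v^\mu(k)$ in one stroke rather than treating constant $v$ and $v=k_i$ separately, and the role of analytic continuation is made explicit. Two points in your write-up deserve sharper phrasing. First, Stokes' theorem gives you the identity only at isolated integer values $D=1,2,3,\ldots$, not on "a neighbourhood of an integer $D$" in the complex plane; the correct continuation argument is that after the Gaussian integration both sides are a nonvanishing entire function of $D$ (a power of $\pi/A$) times a \emph{polynomial} in $D$, and a polynomial vanishing at infinitely many integers vanishes identically. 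Second, you should say explicitly that propagators with $\nu_j\le 0$ are polynomials to be absorbed into $P(k)$ rather than Schwinger-parametrised (the representation $X^{-\nu}=\Gamma(\nu)^{-1}\int_0^\infty\rd\alpha\,\alpha^{\nu-1}e^{-\alpha X}$ requires $\mathrm{Re}(\nu)>0$), which your setup accommodates but does not state. Neither point is a gap in the strategy; both are repairs of phrasing.
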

\begin{proof}We follow the argument given in ref.~\cite{Lee:2008tj} and use the simplified notation
\beq
I = \int \rd^Dk_i\,F_I(k_i)\,.
\eeq
We expect the integral to be invariant under general linear changes of variables, 
$k_i^{\mu}\to \lambda\,k_i^{\mu}+v^{\mu}$, with $\lambda$ a non-zero real number and 
$v^{\mu}$ a $D$-dimensional vector independent of $k^{\mu}$:
\beq
I = \lambda^D\,\int \rd^Dk_i\,F_I(\lambda\,k_i+v)\,.
\eeq
\begin{itemize}
\item Under an infinitesimal translation, $k_i^{\mu}\to k_i^{\mu}+\varepsilon v^{\mu}$, we have
\beq
I = \int \rd^Dk_i\,F_I(k_i+\varepsilon v) = I + \varepsilon v^{\mu}\int \rd^Dk_i\,
\frac{\partial}{\partial k_i^{\mu}}F_I(k_i) + \mathcal{O}(\varepsilon^2)  \,.
\eeq
Hence, we must have:
\beq
\int \rd^Dk_i\,\frac{\partial}{\partial k_i^{\mu}}\left[v^{\mu}F_I(k_i)\right]=0\,, 
\textrm{~~with $v^{\mu}$ independent of $k_i^{\mu}$}\,.
\eeq
\item Under an infinitesimal rescaling, $k_i^{\mu}\to e^{\varepsilon}k_i^{\mu}$, we have
\beq
I = e^{D\varepsilon}\int \rd^Dk_i\,F_I(e^{\varepsilon}k_i) = I + 
\varepsilon\left[D\,I + \int \rd^Dk_i\,k_i^{\mu}\frac{\partial}{\partial k_i^{\mu}}F_I(k_i) \right] 
+ \mathcal{O}(\varepsilon^2)\,.
\eeq
Hence, we must have
\beq
\int \rd^Dk_i\,\frac{\partial}{\partial k_i^{\mu}}\left[k_i^{\mu}F_I(k_i)\right]=0\,.
\eeq
\end{itemize}
\end{proof}
\begin{rem} We have stated Proposition~\ref{prop:ibp} for a specific loop momentum $k_i$, but it clearly also holds 
for any of the other loop momenta. 
Note that the regularisation scheme plays an important role, and Proposition~\ref{prop:ibp} may 
be false in another scheme. For example, if we use a cut-off as a regulator, 
${\rd}^Dk_i\to {\rd}^Dk_i\,\theta(\Lambda^2-|k_i^2|)$, the integral is not invariant under shifts 
$k_i^{\mu} \to k_i^{\mu} + v^{\mu}$, and Proposition~\ref{prop:ibp} does not necessarily hold
as it might need to be corrected by boundary terms.
\end{rem}

\subsection{Integration-by-parts relations}\label{sec:ibprels}
We now show that the vanishing of the surface terms in Proposition~\ref{prop:ibp} implies that 
Feynman integrals satisfy linear recursion relations in the propagator exponents $\nu$. 
We first note that, when the differential operator $\frac{\partial}{\partial k_i^{\mu}}v^{\mu}$ 
acts on the propagator $(m_j^2-q_j^2)^{-\nu_j}$, it shifts the value of 
the exponent. For example, if $v\neq k_i$ and $q_j = k_i+p_1$, we  have:
\begin{equation}
\frac{\partial}{\partial k_i^{\mu}}\left[\frac{v^{\mu}}{(m_j^2-q_j^2)^{\nu_j}}\right] 
= 2\nu_j\frac{(k_i+p_1)\cdot v}{(m_j^2-q_j^2)^{\nu_j+1}}\,.
\end{equation}
We see that we recover the same propagator, with the power raised by one unit. 
However, we have also introduced a numerator, which may not be present in our original Feynman integral.
If $v$ is chosen to be a loop or external momentum (or any linear combination thereof), 
then the numerator is a polynomial in the dot products involving loop and/or external momenta. 
We may then express these dot products as inverse propagators. For example, if we choose $v=p_1$ 
in the above example, we have
\begin{align}\begin{split}\label{eq:derIBP}
\frac{\partial}{\partial k_i^{\mu}}&\left[\frac{v^{\mu}}{(m_j^2-q_j^2)^{\nu_j}}\right] 
= 2\nu_j\frac{k_i\cdot p_1+p_1^2}{(m_j^2-q_j^2)^{\nu_j+1}}= \nu_j\frac{q_j^2-k_i^2+p_1^2}{(m_j^2-q_j^2)^{\nu_j+1}}\\
&= \nu_j\Big[-\frac{1}{(m_j^2-q_j^2)^{\nu_j}}+\frac{1}{(m_1^2-q_1^2)^{-1}(q_j^2-m_j^2)^{\nu_j+1}} + \frac{p_1^2-m_1^2+m_j^2}{(q_j^2-m_j^2)^{\nu_j+1}}\Big]\,.
\end{split}\end{align}
We see that we have indeed obtained a linear combination of propagators raised to different powers 
(we assume without loss of generality that there is a propagator with momentum $q_1=k_i$ and mass $m_1$). 
In general, the coefficients of the linear combination are polynomial functions in the external scales 
(and the dimensional regulator $\eps$).
Numerator factors involving loop momenta appear as propagators 
raised to negative integer powers. 
However, we had to \emph{assume} that we can write all scalar products in 
terms of inverse propagators. 
This may not always be the case, and we say that a family of Feynman integrals 
is \emph{complete} if it contains enough propagators to express all dot products involving at least 
one loop  momentum in terms of inverse 
propagators (we also use the word \emph{topology} to refer to a complete family of Feynman integrals, 
while elsewhere in the literature it is sometimes used for families that may not be complete).
Note that this is the same assumption we made in \cref{sec:paramRep} when discussing the Cutkosky-Baikov representation
and, as we argued there, every family of Feynman integrals can be 
completed by adding enough propagators. We therefore always assume from now on that our families of Feynman 
integrals are complete. This discussion can then be summarised as follows:
\begin{prop}
Every complete family of Feynman integrals (that is, every topology) satisfies linear recursion relations in the propagator exponents 
$\nu\in\mathbb{Z}^p$, called \emph{integration-by-parts (IBP)} relations~\cite{Tkachov:1981wb,Chetyrkin:1981qh}. 
The coefficients of the linear combinations are rational functions in the external scales $x$ 
and the dimensional regulator $\epsilon$.
\end{prop}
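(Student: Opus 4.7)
The plan is to exhibit the promised IBP relations explicitly, by applying Proposition~\ref{prop:ibp} for a spanning set of choices of the vector $v^\mu$ and recognising the resulting surface-free identity as a linear combination of members of the same complete family. Concretely, for each loop momentum index $i\in\{1,\ldots,L\}$ I would let $v^\mu$ range over the $L+E-1$ independent momenta $k_1,\ldots,k_L,p_1,\ldots,p_{E-1}$, producing $L(L+E-1)$ candidate identities per topology. The left-hand side of \cref{eq:IBPgen} vanishes by Proposition~\ref{prop:ibp}, so the task is to show that after carrying out the derivative, the integrand becomes a rational-coefficient linear combination of integrands of the same family, possibly with shifted $\nu_j$.

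For a fixed choice of $i$ and $v$, I would carry out $\partial/\partial k_i^\mu$ term by term, as illustrated in \cref{eq:derIBP}. The derivative hits $v^\mu$ itself (yielding $D$ when $v=k_i$ and zero otherwise), the numerator $\cN$, and each propagator. Differentiating the propagator with momentum $q_j=\sum_l \alpha_{jl}k_l+\sum_l \beta_{jl}p_l$ produces a factor $2\nu_j\alpha_{ji}\,q_j\cdot v/(m_j^2-q_j^2)^{\nu_j+1}$, i.e.\ the propagator with exponent raised by one, multiplied by a scalar product $q_j\cdot v$. Expanding $q_j\cdot v$ in the basis $\{t_a\cdot t_b\}$ with at least one loop-momentum index, the whole differentiated integrand becomes a sum whose summands are products of inverse propagators (to shifted exponents) and dot products $\tau_{ab}=t_a\cdot t_b$ involving at least one loop momentum.

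At this point the completeness hypothesis is what does the work: by definition of a complete family, the affine-linear map from the $N=L(L+1)/2+L(E-1)$ variables $\tau_{ab}$ to the inverse propagators $z_a=m_a^2-q_a^2$ is invertible, with coefficients depending only on the external scales $x$. Hence every $\tau_{ab}$, and in particular every $q_j\cdot v$ generated above, is a polynomial in the $z_a$ with coefficients polynomial in $x$. Substituting back, the vanishing surface term becomes an identity of the form
\begin{equation*}
\sum_k c_k(x,\epsilon)\,I(x;\nu+\sigma_k;D)=0,
\end{equation*}
where the shifts $\sigma_k\in\mathbb{Z}^p$ have bounded norm (each differentiation shifts a single exponent by $\pm 1$ relative to the original) and the coefficients $c_k(x,\epsilon)$ are rational in $x$ and $\epsilon$: they are assembled from the integers $\nu_j$, possibly a factor of $D=D_0-2\epsilon$ from the $v=k_i$ contribution, and the polynomial entries of the change of variables from $\tau_{ab}$ to $z_a$. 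The only step that requires care is invoking completeness, which is however guaranteed by the hypothesis of the proposition (and, as recalled in the text, can always be enforced by adjoining auxiliary propagators and later setting their $\nu$ to zero); the rest is bookkeeping.
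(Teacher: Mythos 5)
Your proposal is correct and follows essentially the same route as the paper: apply Proposition~\ref{prop:ibp} with $v^\mu$ chosen among the loop and external momenta, carry out the derivative so that each propagator's exponent is shifted while a dot product $q_j\cdot v$ appears in the numerator, and then use completeness of the family to rewrite all such dot products as polynomials in the inverse propagators, yielding the linear recursion with coefficients rational (in fact polynomial) in $x$ and $\epsilon$. The paper presents this as a discussion around the worked example in \cref{eq:derIBP} rather than as a formal proof, but the content is the same.
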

IBP relations allow one to relate different Feynman integrals from the same family to each other, and therefore to reduce considerably 
the number of Feynman integrals to be computed. 
These relations involve only rational functions of the scales and~$\epsilon$.
It is possible to solve the recursion relations and to 
express every member of a given (complete) family in terms of a basis of integrals. 
Elements of such a basis are conventionally called \emph{master integrals}. 
\begin{prop}
The number of master integrals is always finite.
\end{prop}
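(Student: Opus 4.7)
The plan is to reinterpret a family of Feynman integrals, together with its IBP relations, as a finite-dimensional object in algebraic geometry, and then invoke a general finiteness theorem from the theory of holonomic D-modules (equivalently, twisted cohomology), which fits the framework advertised later in the review.

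First, I would pass from the momentum-space representation in \cref{eq:fint} to the Cutkosky-Baikov representation in \cref{eq:Baikov}. In that form, shifting the propagator exponents $\nu_a \to \nu_a + m_a$ with $m_a \in \mathbb{Z}$ amounts to multiplying the integrand by the Laurent monomial $\prod_a z_a^{-m_a}$, while the algebraic twist $\mathcal{B}(z)^{(D-K-1)/2}$ is unchanged. The whole family $\{I(x;\nu;D) : \nu\in\mathbb{Z}^p\}$ is therefore encoded by pairing one fixed twisted local system against varying monomial test forms.

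Next, I would recast Proposition~\ref{prop:ibp}: the vanishing of boundary terms says exactly that integrals of $\nabla$-exact forms vanish, where $\nabla$ is $d$ plus the logarithmic derivative of the twist. Two integrands differing by a $\nabla$-exact form are thus IBP-equivalent, so the IBP quotient embeds into the twisted de Rham cohomology $H^{\bullet}(U,\nabla)$ of the complement $U$ of the coordinate hyperplanes $\{z_a=0\}$ and of the Landau locus $\{\mathcal{B}=0\}$. Finally, I would invoke the classical finiteness theorem: for generic values of the twist parameters, $H^{\bullet}(U,\nabla)$ is concentrated in middle degree and finite-dimensional, with rank equal (up to sign) to the Euler characteristic of $U$. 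Since $D = D_0 - 2\eps$ is generic in dimensional regularisation, the countable resonant locus is automatically avoided, and a finite basis of master integrals follows.

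The main obstacle is the finiteness theorem itself: it is not internal to the Feynman-integral formalism and requires nontrivial input from algebraic analysis, most cleanly Bernstein's theorem on holonomic D-modules applied to the Bernstein-Sato ideal associated with the Baikov polynomial. A secondary subtlety is checking that integer shifts of $\nu$ genuinely stay within one and the same finite-dimensional cohomology group: this uses non-resonance of the twist exponents for generic $\eps$, and some extra care is needed if one wants a basis that remains valid at the physical integer value $D=D_0$ before expansion in $\eps$.
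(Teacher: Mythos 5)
The paper does not actually prove this proposition: it defers entirely to refs.~\cite{Smirnov:2010hn,Bitoun:2017nre}, and your sketch is in spirit a pr\'ecis of the strategy of precisely those references (holonomic D-modules, twisted cohomology, Euler characteristics, genericity of the twist), so the route is the intended one. As written, however, it has a genuine gap at the key step. Proposition~\ref{prop:ibp} tells you that IBP-exact combinations of integrands are $\nabla_\Phi$-exact and hence integrate to zero; this only shows that the map from the span of the monomial forms to twisted cohomology \emph{descends} to the quotient by IBP relations. A vector space admitting a linear map to a finite-dimensional space is not thereby finite-dimensional: you need that map to be injective, i.e.\ you need the converse inclusion, that every $\nabla_\Phi$-exact relation among the allowed forms is generated by the identities of Proposition~\ref{prop:ibp}. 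Without it, finiteness of $\dim H^{n}(U,\nabla_\Phi)$ bounds the number of independent \emph{integrals} as functions, but not the dimension of the IBP quotient, which is what ``number of master integrals'' means in the surrounding text (a basis reached by solving the IBP recursions). That converse is exactly where the holonomicity/Bernstein--Sato machinery does its real work in \cite{Smirnov:2010hn,Bitoun:2017nre}, and its IBP-only version is closely related to the conjecture the review states separately in section~\ref{sec:other_rels}.

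A second, more technical gap concerns your genericity claim. On $U=\mathbb{P}^{n}(\mathbb{C})\setminus\bigl(\{\mathcal{B}=0\}\cup\bigcup_a\{z_a=0\}\bigr)$, only the exponent of the Baikov polynomial is made generic by $\eps$; the exponents carried by the coordinate hyperplanes $\{z_a=0\}$ are the integers $-\nu_a$, which is precisely the resonant situation excluded by the concentration-in-middle-degree and finiteness theorems you want to invoke. One must either introduce analytic regulators $z_a^{\delta_a}$ and take limits, or pass to relative twisted cohomology --- the very issue the review flags in section~\ref{sec:intersection}. The cited proofs sidestep this by working in the parametric (Feynman/Lee--Pomeransky) representation rather than the Baikov one, so if you insist on the Baikov route you must address it explicitly.
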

The proof of this result can be found in refs.~\cite{Smirnov:2010hn,Bitoun:2017nre}. 
It is also possible to predict the dimension of the basis (up to some caveats) via the number of 
critical points~\cite{Lee:2013hzt} or a certain Euler characteristic~\cite{Bitoun:2017nre}.
We stress nevertheless that there is considerable freedom in how one can choose a basis of integrals 
for the complete family (the dimension of the basis, however, is fixed). 
In fact, different choices may lead to improved methods for 
their evaluation, as will be discussed in \cref{sec:diffeqs}.

To close this subsection, let us introduce some concepts that will be useful later on. We define a map
\beq
\vartheta: \mathbb{Z}^p \to \{0,1\}^p\,; \quad\nu \mapsto \vartheta(\nu) 
= \big(\theta(\nu_1),\ldots,\theta(\nu_p)\big)\,,
\eeq
where $\theta(x)$ denotes the Heaviside step function:
\beq
\theta(x) =\left\{\begin{array}{ll}
1\,,&  \textrm{ if $x>0$}\,,\\
0\,, & \textrm{ else}\,.
\end{array}\right.
\eeq
We say that two Feynman integrals $I(x;\nu;D)$ and $I(x;\nu';D)$ \emph{belong to the same sector}
if $\vartheta(\nu)=\vartheta(\nu')$. Roughly speaking, a \emph{sector} is the collection of all Feynman
integrals of a family that share the same set of active propagators (where we qualify a propagator as 
\emph{active} if it is raised to a strictly positive power $\nu_i$, i.e., $\theta(\nu_i)=1$). Integrals from the 
same sector may, however, differ by the choice of the numerator factors. There is a natural partial 
order on sectors, and we say that $\vartheta(\nu)\ge \vartheta(\nu')$ if $\nu_i\ge \nu'_i$, for all $1\le i\le p$. 
If we derive IBP relations starting from Proposition~\ref{prop:ibp} for a given integral $I(x;\nu;D)$, 
then these relations will only involve integrals from the sector $\vartheta(\nu)$ or from lower sectors. 
It can also happen that a Feynman integral can be expressed as a linear combination of integrals from 
lower sectors only. We call such an integral \emph{reducible}. If all integrals from a given sector 
are reducible, we call the sector reducible.\footnote{\label{fn:sectors}
While we are of course free to choose
reducible integrals as basis elements, we will from now on assume that this is not done.
In particular, we will always choose master integrals that cannot be written as a linear combination
of integrals with fewer propagators, that is no master integral belongs to a reducible sector.} 

\begin{rem}\label{rem:neededIBPs}
In applications it is often not necessary to solve the IBP relations in all sectors. 
Indeed, one often encounters the situation that certain propagators only enter with negative 
exponents (e.g., because they were added to obtain a complete family of integrals). 
In such a scenario it is then only necessary to solve the IBP relations in the subsectors 
that describe the active propagators needed for the application one has in mind (with a caveat discussed in section~\ref{sec:other_rels}).
\end{rem}

\begin{rem}\label{rem:cutIBPs}
	IBP relations for cut Feynman integrals (see Remark~\ref{rem:cutFeynmanIntegrals})
	follow from those for uncut integrals. In an IBP relation,
	the `cut' only acts on Feynman integrals and not on the rational
	functions. According to Proposition~\ref{prop:cuts},
	some of the Feynman integrals in the IBP relation might be
	set to 0. In particular, this implies that the maximal cut of a 
	reducible integral always vanishes.
\end{rem}

\begin{rem}[Lorentz-invariance identities]\label{rem:LII} At the heart of the existence of linear relations among 
Feynman integrals is the fact that first-order differential operators act via shifting the propagator exponents. 
This is of course not restricted to derivatives with respect to loop momenta, as considered in 
Proposition~\ref{prop:ibp}, but it is easy to see that the same conclusion holds for derivatives with respect 
to external momenta (or even propagator masses). In particular, this means that the fact that the generator of a
Lorentz transformation annihilates a Feynman integral (cf.~eq.~\eqref{eq:LI}) translates also into a linear 
relation among Feynman integrals with shifted exponents. These so-called \emph{Lorentz-invariance identities} 
are not independent from the IBP relations generated by Proposition~\ref{prop:ibp}~\cite{Lee:2008tj}. 
\end{rem}


\subsection{Solving IBP relations}\label{sec:solIBPs}

In all but the simplest cases, it is not known how to find a solution to 
the IBP relations in closed form. 
That is, we do not know how to find an expression for an integral
$I(x;\nu;D)$ for generic $\nu$ in terms of a basis of master integrals.
In practical applications, however, we are only interested
in solving the relations for a finite set of $\nu$
(e.g., the ones appearing in the calculation of an amplitude).
As noted in ref.~\cite{Laporta:2000dsw}, instead of solving a recursion problem,
this can be formulated as a linear algebra problem:
one explicitly writes down all IBP relations satisfying some criterion
(typically one puts a bound on the sum of the negative and positive
exponents $\nu_i$ ; we note that care must be taken when choosing this bound,
as if it is too constraining one might miss some relations and find a number of
master integrals that is larger than it should be), which includes all the 
integrals one wishes to rewrite,
and then solves for the `complicated' integrals in terms of the `simple'
integrals. The notion of `complicated' and `simple' integrals requires introducing
an ordering criterion, and it is natural to favour integrals with 
fewer propagators (see~\cref{fn:sectors}). 
The approach to solving the IBP relations proposed in~ref.~\cite{Laporta:2000dsw},
commonly called the \emph{Laporta algorithm},
has since been implemented in a number of public programs such as \texttt{AIR}~\cite{Anastasiou:2004vj},
\texttt{REDUZE2}~\cite{vonManteuffel:2012np},
\texttt{LiteRed}~\cite{Lee:2013mka},
\texttt{FIRE}~\cite{Smirnov:2019qkx}, 
and \texttt{Kira}~\cite{Klappert:2020nbg}
(we only refer to the latest releases of each code).

For integrals with several loops and many scales, solving the
IBP relations is still a very challenging problem and often poses a
bottleneck in their evaluation. The main difficulty comes from the fact
that the number of IBP relations to solve grows very fast, and 
the complexity of the rational functions involved can become unmanageable
when the number of scales increases. Many approaches
have been proposed to overcome these challenges, starting
with the implementation of sophisticated algorithms to solve
such systems of equations in the programs mentioned above.
Aside from these algorithms, there have also been new developments
which are more specifically targeted to IBP relations.
First, one can try to construct simpler IBP relations by choosing 
vectors $v^\mu$ in~\cref{eq:IBPgen} with certain properties. 
For instance, one might choose $v^\mu$ so that
the IBP relations do not involve integrals with propagators raised
to higher powers (this would be natural for reducing amplitudes
to a basis of master integrals, since in that case one is in general trying to
reduce numerator factors)~\cite{Gluza:2010ws}. It is not trivial
to find such vectors $v^\mu$, but it is a question that can be answered
with tools from computational algebraic geometry,
namely by solving syzygy equations~\cite{Schabinger:2011dz,Ita:2015tya,Larsen:2015ped,Georgoudis:2016wff},
which have had a strong impact
in modern approaches to the calculation of scattering amplitudes
\cite{Abreu:2017xsl,Abreu:2017hqn,Abreu:2021asb,Badger:2021imn}.
Second, one might try to choose a basis for which the coefficients
in the IBP relations will be simpler. It has been observed that working
with a so-called \emph{canonical basis} (see Proposition~\ref{prop:pure} below) can be very beneficial
as the singularity structure of the coefficients greatly simplifies~\cite{Usovitsch:2020jrk,Boehm:2020ijp,Smirnov:2020quc},
and in particular the dependence on $D$ and on $x$ completely factorises.
Combined with multivariate partial-fractioning techniques~\cite{Pak:2011xt,Meyer:2016slj,Abreu:2019odu,Boehm:2020ijp,Heller:2021qkz}, one
obtains much more compact solutions to the IBP relations.
Finally, and closely connected to the point above about the simplicity of the coefficients, one can solve
the IBP relations for numerical values of $x$ and $D$, and use
this numerical data to determine the analytic expressions 
\cite{vonManteuffel:2014ixa,Peraro:2016wsq,Peraro:2019svx,Klappert:2020aqs,DeLaurentis:2022otd}. This
approach is particularly powerful when combined with the second
point mentioned above: using numerical evaluations
bypasses the intermediate analytic complexity generated in the 
solution of the IBP system, and one directly reconstructs the much
simpler rational expressions appearing in the solution of the
IBP relations. The numerical evaluations are most commonly
done in finite-field arithmetic~\cite{vonManteuffel:2014ixa,Peraro:2016wsq}, 
for which there exist very efficient 
linear algebra algorithms and where all numerical calculations
are exact, thus removing any questions about numerical precision and
stability.

Let us conclude by mentioning that an alternative approach is being developed to achieve the reduction to master integrals without the need to solve the IBP relations. We will review this new approach in section~\ref{sec:intersection}. Currently, however, this new approach is still in its infancy, and solving the IBP relations using the techniques described in this section remains the method of choice for all applications.


\subsection{Example: The one-loop bubble integral}\label{sec:oneLoopBubIBP}

As an example of the application of IBP relations, we will consider the one-loop bubble integral:
\begin{equation}\label{eq:twoMassBubInt}
	I(p^2;m_1^2,m_p^2;\nu_1,\nu_2;D) = 
	e^{\gamma_E\eps}\int\frac{\rd^Dk}{i\pi^{D/2}}
	\frac{1}{(m_1^2-k^2-i\varepsilon)^{\nu_1}(m_2^2-(k+p)^2-i\varepsilon)^{\nu_2}}\,.
\end{equation}
Note that any polynomial $\cN(\{k^2, p_1\cdot k\};D)$ can always be written as a polynomial
in the propagators and in the variables in $x=(p^2,m_1^2,m_2^2)$, so without loss of generality
we consider the integral with a trivial numerator.

The IBP relations can be constructed from
\begin{equation}
	\int \rd^Dk\frac{\partial}{\partial k^{\mu}}
	\left[v^\mu\frac{1}{(m_1^2-k^2-i\varepsilon)^{\nu_1}(m_2^2-(k+p)^2-i\varepsilon)^{\nu_2}}\right]=0\,,
\end{equation}
with $v^\mu=k^\mu$ and $v^\mu=p^\mu$. For $v^\mu=k^\mu$, we obtain
\begin{align}\begin{split}\label{eq:ibpbub1}
	&(D-2\nu_1-\nu_2)\,I(\nu_1,\nu_2)-\nu_2\,I(\nu_1-1,\nu_2+1)-\nu_2(p^2-m_1^2-m_2^2)\,I(\nu_1,\nu_2+1)\\
	&\,+2\nu_1m_1^2\,I(\nu_1+1,\nu_2)=0\,,
\end{split}\end{align}
and for $v^\mu=p^\mu$ we obtain
\begin{align}\begin{split}\label{eq:ibpbub2}
	&(\nu_1-\nu_2)\,I(\nu_1,\nu_2)-\nu_1\,I(\nu_1+1,\nu_2-1)-\nu_1(p^2+m_1^2-m_2^2)\,I(\nu_1+1,\nu_2)\\
	&+\nu_2\,I(\nu_1-1,\nu_2+1)-\nu_2(m_1^2-p^2-m_2^2)\,I(\nu_1,\nu_2+1)=0\,,
\end{split}\end{align}
where we simplified the notation to only keep the dependence of the integrals on the $\nu_i$, i.e., 
we set $I(\nu_1,\nu_2)=I(p^2;m_1^2,m_p^2;\nu_1,\nu_2;D)$.

To simplify the discussion, let us first consider the case where $m_1^2=m_2^2=0$. 
Before exploring the IBP relations, we note that
in this limit $I(\nu_1,\nu_2)=0$ if either $\nu_1\leq0$ or $\nu_2\leq0$,
as under those conditions the integrals become scaleless.
The IBP relations above can be rewritten as:
\begin{align}\begin{split}\label{eq:ibpbubmassless}
	&I(\nu_1,\nu_2)=-\frac{\nu_1+\nu_2-1-D}{p^2(\nu_2-1)}\,I(\nu_1,\nu_2-1)-\frac{1}{p^2}\,I(\nu_1-1,\nu_2)\,,
	\qquad \textrm{for }\nu_2\neq1\,,\\
	&I(\nu_1,\nu_2)=-\frac{\nu_1+\nu_2-1-D}{p^2(\nu_1-1)}\,I(\nu_1-1,\nu_2)-\frac{1}{p^2}\,I(\nu_1,\nu_2-1)\,,
	\qquad \textrm{for }\nu_1\neq1\,.
\end{split}\end{align}
This means that any integral $I(\nu_1,\nu_2)$ is either 0 or can be
related to $I(1,1)$, which is trivial to compute (e.g., using Feynman parameters):
\begin{equation}
	I(p^2;0,0;1,1;D)=e^{\gamma_E\eps}(-p^2)^{\frac{D-4}{2}}\frac{\Gamma(2-D/2)\,\Gamma(D/2-1)^2}{\Gamma(D-2)} \,.
\end{equation}
We then find that this family of Feynman integrals contains a single master integral.
We also note that for this particular case it is not hard to compute the Feynman integral
for arbitrary powers of the propagators:
\begin{align}\begin{split}\label{eq:masslessBubGen}
	I(p^2;0,0;\nu_1,\nu_2;D)=&\,e^{\gamma_E\eps}(-p^2)^{-\nu_1-\nu_2+D/2}\\
	&\times\frac{\Gamma(\nu_1+\nu_2-D/2)}{\Gamma(D-\nu_1-\nu_2)}
	\frac{\Gamma(D/2-\nu_1)}{\Gamma(\nu_1)}\frac{\Gamma(D/2-\nu_2)}{\Gamma(\nu_2)} \,.
\end{split}\end{align}
From this representation, we recognize the IBP relations in eq.~\eqref{eq:ibpbubmassless} as consequences 
of the well-known recurrence relations between $\Gamma$-functions, $\Gamma(1+z)=z\,\Gamma(z)$.

If instead $m_1^2\neq0$ and $m_2^2=0$, then $I(\nu_1,\nu_2)=0$ if $\nu_1\leq0$.
The IBP relations for this case are obtained from \cref{eq:ibpbub1,eq:ibpbub2}
evaluated at $m_2^2=0$, and we find that there are two master integrals, which we can
choose to be $I(1,0)$ and $I(1,1)$, to which any $I(\nu_1,\nu_2)$ can be related.
The tadpole integral $I(1,0)$ is
\begin{equation}\label{eq:tad}
	I(m^2;1;D)=e^{\gamma_E\eps}\Gamma\left(1-D/2\right)(m^2)^{-1+D/2}\,,
\end{equation}
while the bubble integral $I(1,1)$ is
\begin{equation}\label{eq:oneMBubAllOrder}
	I(p^2;m_1^2,0;1,1;D)=e^{\gamma_E\eps}(m_1^2)^{-2+D/2}\frac{\Gamma(2-D/2)}{D/2-1}
	\,_2F_1\left(1,2-\frac{D}{2};\frac{D}{2};\frac{p^2}{m_1^2}\right)\,,
\end{equation}
where $_2F_1$ is Gauss' hypergeometric function:
\begin{equation}\label{eq:2f1def}
\,_2F_1\left(\alpha,
	\beta;\gamma;x\right)
	=
	\frac{\Gamma(\gamma)}{\Gamma(\alpha)\Gamma(\gamma-\alpha)}
	\int_0^1 u^{\alpha-1} 
	(1-u)^{\gamma-\alpha-1}
	(1-xu)^{-\beta} \rd u\,.
\end{equation}
Just as for the bubble with massless propagators, the IBP relations for this 
case follow from the recurrence relations (known in this case as contiguous relations) satisfied by the 
$_2F_1$ hypergeometric function~\cite{handbook}.

In the case where $m_1^2,m_2^2\neq 0$, $I(\nu_1,\nu_2)=0$ only if both 
$\nu_1\leq0$ and $\nu_2\leq0$. Using the IBP relations in \cref{eq:ibpbub1,eq:ibpbub2}, 
we would find that we can  relate any integral of the form $I(\nu_1,\nu_2)$ to three 
master integrals, which can be chosen to be $I(1,0)$, $I(0,1)$ and $I(1,1)$.
The expression for the tadpoles $I(1,0)$ and $I(0,1)$ is given in \cref{eq:tad},
and the bubble integral can be written as:
\begin{align}\label{eq:twoMBubAllOrder}
\nonumber	I(p^2;m_1^2,m_2^2;1,1;D)=
	&\,e^{\gamma_E\eps}\frac{\Gamma(2-D/2)}{D/2-1}\frac{(-p^2)^{-2+D/2}}{w-\bar{w}}\\
	&\,\left[(-w\bar{w})^{D/2-1}\,_2F_1\left(1,-2+D;\frac{D}{2};\frac{w}{w-\bar{w}}\right)\right.\\
\nonumber	&\left.- \left(-(1-w)(1-\bar{w})\right)^{D/2-1}\,_2F_1\left(1,-2+D;\frac{D}{2};\frac{w-1}{w-\bar{w}}\right)\right]\,,
\end{align}
where $w$ and $\bar{w}$ can be implicitly defined through
\begin{equation}\label{eq:var2mbub}
	(1-w)(1-\bar{w})=\frac{m_1^2}{p^2}\,,\qquad
	w\bar{w}=\frac{m_2^2}{p^2}\,,\qquad
	w-\bar{w}=\sqrt{\lambda\left(1,\frac{m_1^2}{p^2},\frac{m_2^2}{p^2}\right)}\,,
\end{equation}
with $\lambda(a,b,c)=a^2+b^2+c^2-2ab-2ac-2bc$.


\subsection{Dimension-shift relations}

So far we have only considered linear relations connecting Feynman integrals with different propagator
exponents $\nu$, but with the same values for the space-time dimension~$D$ and external scales~$x$. 
The Feynman-parameter representation of Feynman integrals in \cref{eq:Feynman}, however, makes it manifest that there is no 
substantial difference between the dimension~$D$ and the exponents $\nu_i$: they both appear as 
exponents of the polynomials in the integrand. More precisely, the space-time dimension enters the 
exponents of the Feynman parameter integral only through the combination $\nu_0 = D_0/2$. It is therefore 
natural to expect that there are linear relations relating Feynman integrals with different values of $\nu_0$. 
This was worked out in detail in refs.~\cite{Tarasov:1996br,Lee:2009dh}.

\begin{prop}\label{prop:dim_shift}
For Feynman integrals depending on generic and non-zero propagator masses, we have:
\beq\label{eq:dimUp}
I(x;\nu;D-2) = 
(-1)^{L}\,\cU\left(\frac{\partial}{\partial m_1^2},\ldots,\frac{\partial}{\partial m_p^2}\right)\, I(x;\nu;D).
\eeq
\end{prop}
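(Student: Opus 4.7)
The plan is to start from the Schwinger-parameter representation \cref{eq:Schwinger} and exploit the very simple way the internal masses enter, as specified by \cref{eq:FmassDep}. Because $\cF(\alpha;x)/\cU(\alpha) = \sum_{j=1}^p m_j^2\,\alpha_j + \widetilde{\cF}(\alpha;x)/\cU(\alpha)$, and $\widetilde{\cF}$ is mass-independent, differentiation in $m_j^2$ acts on the integrand almost trivially:
\[
\frac{\partial}{\partial m_j^2}\,e^{-\cF(\alpha;x)/\cU(\alpha)} = -\alpha_j\,e^{-\cF(\alpha;x)/\cU(\alpha)}.
\]
Iterating, any polynomial differential operator $P(\partial/\partial m_1^2,\ldots,\partial/\partial m_p^2)$ applied to the Schwinger integrand brings down $P(-\alpha_1,\ldots,-\alpha_p)$ (the prefactor $\cU(\alpha)^{-D/2}\prod_j \alpha_j^{\nu_j-1}/\Gamma(\nu_j)$ is mass-independent and so passes through the derivatives).

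Specialising to $P = \cU$ and using that $\cU$ is homogeneous of degree $L$ in its arguments, I obtain
\[
\cU\!\left(\frac{\partial}{\partial m_1^2},\ldots,\frac{\partial}{\partial m_p^2}\right) e^{-\cF/\cU} = \cU(-\alpha_1,\ldots,-\alpha_p)\,e^{-\cF/\cU} = (-1)^L\,\cU(\alpha)\,e^{-\cF/\cU}.
\]
Substituting this back into \cref{eq:Schwinger} converts the weight $\cU(\alpha)^{-D/2}$ into $(-1)^L\,\cU(\alpha)^{-D/2+1} = (-1)^L\,\cU(\alpha)^{-(D-2)/2}$. Up to the overall sign $(-1)^L$, the right-hand side is then exactly the Schwinger representation of the same Feynman integral with $D$ replaced by $D-2$. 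Multiplying by $(-1)^L$ yields \cref{eq:dimUp}.

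The main subtlety, rather than a calculational obstacle, is the hypothesis of \emph{generic and non-zero} propagator masses. The manipulation treats $m_1^2,\ldots,m_p^2$ as $p$ independent variables, so that $\cU(\partial/\partial m_1^2,\ldots,\partial/\partial m_p^2)$ is unambiguously defined by the combinatorial polynomial read off from the graph. If several propagators share the same mass, or some masses vanish, then the operator has to be re-expressed in terms of the smaller set of independent mass parameters before being applied, and the naïve formula can fail. Ensuring generic masses bypasses this complication; one can then analytically continue the resulting identity in $x$ to reach degenerate mass configurations whenever both sides remain well-defined. A further routine point is the normalisation factor $e^{\gamma_E L \eps}$ in \cref{eq:Schwinger}, which is mass-independent and therefore commutes through the derivative unchanged, so the convention for this prefactor plays no role in the argument.
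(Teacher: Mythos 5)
Your proposal is correct and follows essentially the same route as the paper: both start from the Schwinger representation, use \cref{eq:FmassDep} to show that $\cU(\partial/\partial m_1^2,\ldots,\partial/\partial m_p^2)$ acts on $e^{-\cF/\cU}$ as multiplication by $(-1)^L\,\cU(\alpha)$, and observe that this shifts the exponent of $\cU$ so as to realise $D\to D-2$. Your version merely spells out the intermediate steps (the homogeneity of $\cU$ and the role of the generic-mass hypothesis) that the paper leaves implicit.
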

\noindent
Before we give a derivation of this relation, let us make some comments. 
The differential operator in the right-hand side involves the $\cU$-polynomial that appears in the 
Feynman and Schwinger parametrisations, but with the Feynman/Schwinger parameters $\alpha_i$ replaced 
by the differential operators $\frac{\partial}{\partial m_i^2}$. For this reason we need to consider integrals 
with generic propagator masses. The action of this differential operator produces on the right-hand side 
a linear combination of Feynman integrals in $D$ dimensions with shifted propagator exponents. 
Proposition~\ref{prop:dim_shift} asserts that this linear combination equals the Feynman integral in 
$(D-2)$ dimensions (up to an overall factor). Moreover, once all derivatives have been carried out, 
we can set the masses to non-generic (possibly zero) values, and we obtain a relation between 
integrals in different dimensions also for non-generic masses.

\begin{proof}
We start from the Schwinger parametrisation in eq.~\eqref{eq:Schwinger}. 
The dependence of the integrand in eq.~\eqref{eq:Schwinger} on the space-time dimension and the 
masses is particularly simple: the space-time dimension only enters through the exponent of the 
$\cU$-polynomial, and the masses only appear in the exponent in the integrand, see in particular
\cref{eq:FmassDep}.
We then have
\beq
\cU\left(\frac{\partial}{\partial m_1^2},\ldots,\frac{\partial}{\partial m_p^2}\right)
e^{-\cF(\alpha;x)/\cU(\alpha)} = 
(-1)^{L}\,\cU\left(\alpha_1,\ldots,\alpha_p\right)e^{-\cF(\alpha;x)/\cU(\alpha)}\,.
\eeq
We see that the application of the differential operator amounts to multiplication by $\cU(\alpha)$ 
in the integrand, changing the exponent from $-\nu_0$ to $-(\nu_0-1)$.
\end{proof}

\begin{prop}\label{prop:dim_shift_down}
An integral in $D+2$ dimensions can be written as a linear combination of integrals in $D$
dimensions as:
\beq\label{eq:dimDown}
I(x;\nu;D+2) = \frac{2^L\,G(p_1,\ldots,p_{E-1})}{(D-K+1)_L}\mathcal{B}(b_1,\ldots,b_K)I(x;\nu;D)\,,
\eeq
where $(x)_L$ is the Pochhammer symbol,
$\mathcal{B}$ is the Baikov polynomial defined in \cref{eq:baikovPol}, 
and the $b_i$ are operators that lower
the value of the exponent $\nu_i$, that is 
\begin{equation*}
b_i^a\,I(x;\nu;D)=I(x;\nu_1,\ldots,\nu_i-a,\ldots;D)\,.
\end{equation*}
\end{prop}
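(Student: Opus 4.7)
The plan is to exploit the Cutkosky--Baikov representation \eqref{eq:Baikov}, which conveniently isolates the entire $D$-dependence of $I(x;\nu;D)$ into three factors: the power $[G(p_1,\ldots,p_{E-1})]^{(E-D)/2}$ of the external Gram determinant, the product of $L$ Gamma functions in the denominator, and the exponent $(D-K-1)/2$ of the Baikov polynomial $\mathcal{B}(z)$ inside the integrand. This parallels the use of the Schwinger representation in the proof of Proposition~\ref{prop:dim_shift}, but now the $D$-dependence sits on a polynomial factor in the integrand rather than in the exponent of an exponential, which is precisely why the natural object that appears on the right-hand side is a polynomial in shift operators.

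First I would write both $I(x;\nu;D)$ and $I(x;\nu;D+2)$ from \eqref{eq:Baikov} and form the ratio of their $D$-dependent prefactors. The ratio of Gamma products telescopes via $\Gamma(z+1)=z\,\Gamma(z)$ to
\begin{equation*}
\prod_{j=0}^{L-1}\frac{D-K+1+j}{2}=\frac{(D-K+1)_L}{2^L},
\end{equation*}
which contributes $2^L/(D-K+1)_L$ to the relation. The external Gram factor shifts its exponent by $-1$, producing the single power of $G(p_1,\ldots,p_{E-1})$ quoted in the statement. This accounts for the entire rational coefficient appearing in \eqref{eq:dimDown}.

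Next I would handle the remaining mismatch, which is a single extra factor of $\mathcal{B}(z)$ in the integrand of $I(x;\nu;D+2)$ relative to that of $I(x;\nu;D)$. The only other occurrence of the variables $z_a$ in the integrand is the inverse-propagator monomial $\prod_c z_c^{-\nu_c}$, so multiplication of the integrand by a single $z_a$ inside the integral is identical to the replacement $\nu_a\mapsto\nu_a-1$, i.e., to the action of the shift operator $b_a$. Extending by linearity from monomials to the whole polynomial $\mathcal{B}(z)$ produces the operator $\mathcal{B}(b_1,\ldots,b_K)$ acting on $I(x;\nu;D)$, and assembling all three pieces yields \eqref{eq:dimDown}.

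The main technical obstacle is the book-keeping of signs and powers in the prefactor, in particular making sure that the telescoping of the $L$ Gamma functions genuinely reconstructs the Pochhammer symbol and that the shift of the external Gram exponent lands on the correct side of the fraction. Beyond this, no analytic subtlety arises: the integration contour $\Delta$ is dictated by $G(t_1,\ldots,t_K)=0$ and is the same for both dimensions, and the integrands differ at each point only by an entire polynomial in the $z_a$, so the pointwise identity lifts to an identity of integrals with no boundary contributions.
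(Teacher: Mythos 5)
Your strategy coincides with the paper's own proof, which is only a two-line sketch: start from the Cutkosky--Baikov representation \eqref{eq:Baikov}, observe that inside the integrand $D$ enters only through the exponent of $\mathcal{B}(z)$, and note that the extra factor of $\mathcal{B}(z)$ produced by $D\to D+2$ is a polynomial in the $z_a$, hence a combination of index-lowering operators. Your telescoping of the $L$ Gamma functions into $(D-K+1)_L/2^L$ and your identification of multiplication by $z_a$ with the operator $b_a$ are both correct and simply make that sketch explicit.

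There is, however, one step that fails, and it is exactly the one you flagged as the main technical obstacle. You observe (correctly) that under $D\to D+2$ the exponent of the external Gram determinant in \eqref{eq:Baikov} drops from $(E-D)/2$ to $(E-D)/2-1$, and then assert that this produces ``the single power of $G(p_1,\ldots,p_{E-1})$ quoted in the statement'', i.e.\ a factor of $G$ in the numerator of \eqref{eq:dimDown}. The opposite follows from your own observation: $I(x;\nu;D+2)$ carries one power of $G$ \emph{fewer} than $I(x;\nu;D)$, so when the $(D+2)$-dimensional integral is expressed through the $D$-dimensional one the Gram determinant must land in the denominator,
\begin{equation*}
I(x;\nu;D+2) = \frac{2^L}{(D-K+1)_L\,G(p_1,\ldots,p_{E-1})}\,\mathcal{B}(b_1,\ldots,b_K)\,I(x;\nu;D)\,.
\end{equation*}
Two independent cross-checks confirm this. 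First, mass dimensions: by \eqref{eq:energy_dimension} one has $[I(x;\nu;D+2)]-[I(x;\nu;D)]=2L$, while the operator $\mathcal{B}(b_1,\ldots,b_K)$ raises the mass dimension by $[\mathcal{B}]=2K=2L+2(E-1)$, so a compensating factor of dimension $-2(E-1)=-[G(p_1,\ldots,p_{E-1})]$ is required. Second, in the worked one-loop bubble example that follows the proposition, the overall prefactor of the $(D+2)$-dimensional bubble carries $p^2=G(p_1)$ in the denominator. In other words, a careful version of your bookkeeping would have exposed a misprint in \eqref{eq:dimDown} rather than confirmed it; as written, your argument derives the correct prefactor in its first two paragraphs and then silently matches the Gram factor to the wrong side of the fraction.
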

\begin{proof}
We start from the Cutkosky-Baikov parametrisation in \cref{eq:Baikov}, and note that the
space-time dimension appears in a very simple way. In particular, in the integrand it only
appears in the exponent of the Baikov polynomial. The dimension-shifting relation
of \cref{eq:dimDown} then  follows simply by noting that a polynomial on the inverse-propagator
variables in the numerator gives a linear combination of integrals with
shifted powers of the propagators.
\end{proof}
\noindent

\begin{ex}[Dimension-shift relations for the one-loop bubble]
Let us return to the example of \cref{sec:oneLoopBubIBP}, the one-loop
bubble integral $I(p^2;m_1^2,m_2^2;1,1;D)$. Using \cref{eq:dimUp}, we can
write the $(D-2)$-dimensional integral as a linear combination of integrals in $D$ dimensions.
Then, using the IBP relations of \cref{eq:ibpbub1,eq:ibpbub2}, the latter can be rewritten 
in terms of a set of master integrals. Using the basis of \cref{sec:oneLoopBubIBP}, we get
\begin{align}\begin{split}\label{eq:dimShiftBubUp}
	I(p^2;m_1^2,m_2^2;1,1;D-2)&=I(2,1;D)+I(1,2;D)\\
	&=\frac{1}{\lambda(p^2,m_1^2,m_2^2)}\bigg[ -\frac{m_1^2+p^2-m_2^2}{2m_1^2}(D-2)I(1,0;D)\\
	&-\frac{m_2^2+p^2-m_1^2}{2m_2^2}(D-2)I(0,1;D)+2\,p^2(D-3)\,I(1,1;D)\bigg]\,,
\end{split}\end{align}
where we used the same compact notation as in \cref{eq:ibpbub1,eq:ibpbub2}, but this time
keeping the dependence on $D$ explicit.
The factor of $\lambda(p^2,m_1^2,m_2^2)$ is in this case a singularity related to the integral on the left-hand side; however, IBP relations can generically introduce spurious poles as well. 
We can also relate the $(D+2)$-dimensional bubble to the master integrals in $D$ dimensions
with \cref{eq:dimDown}. Using the same basis as above, we find:
\begin{align}
	I(p^2;m_1^2,m_2^2;1,1;D+2)&=\frac{1}{2p^2(D-1)}\bigg[I(-1,1;D)+I(1,-1;D)\nonumber\\
	&+\lambda(p^2,m_1^2,m_2^2)I(1,1;D)+(p^2+m_1^2-m_2^2)I(1,0;D)\nonumber\\
	&+(p^2+m_2^2+m_1^2)I(0,1;D)
	\bigg]\nonumber\\
	&=-\frac{m_1^2+p^2-m_2^2}{2p^2(D-1)}I(1,0;D)
	-\frac{m_2^2+p^2-m_1^2}{2p^2(D-1)}I(0,1;D)\nonumber\\
	&+\frac{\lambda(p^2,m_1^2,m_2^2)}{2p^2(D-1)}\,I(1,1;D)\,.
\end{align}
The dimension-shift relations for the cases where $m_1^2=0$ and/or $m_2^2=0$ are obtained
from the above by simply setting the masses and the scaleless integrals to zero. As for
the IBP relations, the dimension-shift relations correspond to recurrence relations
of the special functions the integrals evaluate to, either gamma functions or (generalised) hypergeometric
functions (see \cref{eq:masslessBubGen} for an explicit example). 
These recurrence relations are now with respect to the parameter $\nu_0=D/2$,
but, as already pointed out, there is no substantial difference between the exponents $\nu$
and $\nu_0$.
\end{ex}

\subsection{Other relations among Feynman integrals}
\label{sec:other_rels}

Let us conclude this section with some comments about to what extent the IBP and dimensional-shift 
relations capture all relations between Feynman integrals. 
Since IBP and dimensional-shift relations are linear, we need to discuss linear and non-linear relations separately.

Let us start by discussing linear relations among Feynman integrals.  
Currently, there is no indication of homogeneous linear relations among Feynman integrals in the same family, in dimensional regularisation,
that do not follow from the IBP and dimensional-shift relations, and conjecturally we have
\begin{conj}
All linear relations among Feynman integrals from a given complete family follow from IBP 
and dimension-shift relations.
\end{conj}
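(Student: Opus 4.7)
The most natural route, in the spirit of the representations reviewed in section~\ref{sec:paramRep}, is to reformulate the problem cohomologically and then match dimensions. The idea is that for fixed $D$ and $x$, the Baikov representation~\eqref{eq:Baikov} realises every member of the family as the pairing of a fixed integration cycle against a differential form of the schematic type $\mathcal{B}(z)^{(D-K-1)/2}\prod z_c^{-\nu_c}\, \rd z_1\wedge\cdots\wedge \rd z_p$ against a twist $\mathcal{B}(z)^{(D-K-1)/2}$. I would first make precise, following the setup to be developed later in the twisted-cohomology section, that two forms yielding equal integrals against \emph{all} admissible cycles must differ by a form that is exact with respect to the twisted differential $\nabla_\omega = d + d\log(\mathcal{B}\prod z_c^{-\nu_c})\wedge$. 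Proposition~\ref{prop:ibp} says exactly that Feynman integrals annihilate such exact forms, and the resulting integration-by-parts identities generate a subspace of linear relations that we would like to prove is the entire relation space.

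Second, I would separate the role of the dimensional-shift relations from that of the IBPs. Fixing $\nu_0$, the IBP relations alone already produce the finite number of master integrals predicted by the Euler characteristic of the Baikov hypersurface complement~\cite{Bitoun:2017nre}, or equivalently the number of critical points of $\log\mathcal{B}$~\cite{Lee:2013hzt}. Proposition~\ref{prop:dim_shift_down} then shows that shifting $\nu_0 \to \nu_0\pm 1$ is realised inside the same cohomology by multiplication by $\mathcal{B}$ (or by the $\cU$-polynomial acting through mass derivatives, via Proposition~\ref{prop:dim_shift}). The key structural claim to establish is that the combined module over the ring generated by $\{b_i^{\pm 1},\,\partial/\partial m_i^2,\,\text{shifts in }\nu_0\}$ is generated by the master integrals of a single reference dimension, so that the quotient of the free module on all $(\nu_0,\nu)$ by IBPs and dimension shifts has rank equal to the dimension of the twisted cohomology group.

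Third, with this rank equality in hand, the conjecture reduces to a linear-independence statement: the chosen master integrals at a fixed $D$ are linearly independent as functions of $x$ over the field of rational functions in $x$ and $\epsilon$. This can be approached by considering the asymptotic behaviour of the master integrals in suitable limits (singular or large-mass expansions), or by using the explicit monodromy/Picard--Lehmann structure of the associated Pfaffian system discussed in \cref{sec:diffeqs}. If the master integrals span a full-rank local system, no further homogeneous linear relation with rational coefficients can hold.

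The hard part, and the reason this remains a conjecture rather than a theorem, is the last step. One must rule out \emph{accidental} rational dependencies that might appear only for very particular families with degenerate kinematics (e.g.\ reducible Baikov polynomials, non-generic mass configurations that shrink the cohomology dimension, or exotic cancellations among the Laurent coefficients in $\epsilon$). Translating the topological count of refs.~\cite{Lee:2013hzt,Bitoun:2017nre} into a genuine lower bound on the rank of the span, \emph{uniformly} in $\epsilon$ and for arbitrary (possibly non-generic) $x$, is the main obstacle; I would attack it by combining a generic-$x$ twisted-cohomology argument with a careful specialisation argument controlling the co-rank at special points, possibly aided by the intersection-number framework of \cref{sec:intersection} to explicitly exhibit a dual basis that detects every master integral.
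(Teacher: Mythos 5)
First, a point of order: the statement you are asked to prove is presented in the paper as a \emph{conjecture}, in \cref{sec:other_rels}, and the paper offers no proof --- only the observation that no counterexample is known, together with a list of caveats (relations in lower sectors that are only visible via higher-sector IBPs, relations arising at degenerate kinematics, and relations special to four-dimensional external momenta). So there is no proof of record to compare yours against, and any complete argument you supplied would be new. Your proposal is, to your credit, honest about this: you flag the final linear-independence step as the open problem, which means what you have written is a strategy, not a proof.

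Within that strategy there are two concrete gaps beyond the one you name. First, your opening step --- that a form integrating to zero against \emph{all} admissible cycles must be exact for the twisted differential, i.e.\ that the period pairing between twisted cohomology and twisted homology is perfect --- is a theorem only under genericity assumptions on the exponents (no $\alpha_I$, nor their sum, an integer). The paper's own remark in \cref{sec:intersection} points out that exactly these assumptions fail for Feynman integrals, where the exponents are integers shifted by $\epsilon$ and, for uncut integrals, the propagator poles are not regulated by $\epsilon$ at all; one needs analytic regularisation or relative twisted cohomology even to set up the framework, and the perfectness of the pairing in that setting is part of what would have to be proved. Second, even granting perfectness, it only tells you that the relations valid \emph{simultaneously for every cycle} are exhausted by IBPs; the conjecture concerns relations among integrals over the single physical Feynman contour, which is a strictly weaker condition on the integrands. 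You do eventually reduce this to linear independence of the master integrals as functions of $x$ over $\mathbb{C}(x,\epsilon)$, which is the correct reformulation, but that independence statement is essentially equivalent to the conjecture itself, so the cohomological machinery has reorganised the problem rather than closed it. None of this is a criticism of the route --- it is the natural one, and it is consistent with the intersection-theory perspective the paper develops later --- but as written the proposal establishes nothing beyond what the paper already asserts conjecturally.
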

As an example, we already pointed out in Remark~\ref{rem:LII} that the Lorentz-invariance 
identities follow from the IBP relations. In applications, however, there may be various caveats:
\begin{itemize}
\item We have already mentioned that it is useful to separate Feynman integrals into sectors, 
and in concrete applications one only needs to consider those sectors where a certain subset 
of propagators is active. 
Consequently, one only needs to solve the IBP identities in those sectors. It can happen, however, 
that certain relations among integrals from lower sectors are only found if IBP relations involving 
integrals in higher sectors are considered. Hence, if IBP relations from higher sectors are neglected, 
it may appear that there are relations among Feynman integrals in lower sectors that seem not to follow 
from IBP relations (while in fact they do follow from IBP relations in higher sectors).
\item It can also happen that certain new relations arise in limits where the external scales take degenerate values. 
Those additional relations arise from IBP relations once the degeneracy among the scales is resolved, 
cf.,~e.g.,~ref.~\cite{Kalmykov:2011yy,Kniehl:2012hn}. Similar types of relations were derived in ref.~\cite{Kniehl:2016yrh} from functional equations, although the examples shown there relate integrals of different families and so fall outside the scope of the conjecture above.
\item IBP relations detect linear relations of Feynman integrals in generic space-time dimensions $D$. 
In applications one is usually interested in external momenta that lie in 4 space-time dimensions. 
Since at most 4 vectors can be linearly independent in 4 dimensions, it may happen that certain combinations of 
scales vanish when the external momenta are chosen four-dimensional. 
This may lead to new relations for four-dimensional external momenta, 
which were not present for  $D$-dimensional external momenta. We note that, due to momentum conservation, 
this situation arises for the first time for 6 external momenta in 4 space-time dimensions.
\end{itemize}

Finally, let us comment on non-linear relations among Feynman integrals. 
Much less is known about such relations, though over the last couple of years several 
instances of nontrivial quadratic relations 
between Feynman integrals 
(or their maximal cuts) have been discovered~\cite{Broadhurst:2016hbq,Broadhurst:2018tey,Zhou:2017jnm,Zhou:2020glw,Lee:2018jsw,Bonisch:2021yfw}. 
By nontrivial, we mean both that it is not possible to obtain these 
relations as a consequence of linear relations, and that some of the loop integrations do not trivially factorise.
As we will mention in section~\ref{sec:intersection}, the existence of nontrivial quadratic relations seems to 
be very general, and it would be interesting to explore them more generally in the future. Currently, 
there is no example of nontrivial relations of higher degree (cubic or higher).


\section{The method of differential equations}
\label{sec:diffeqs}

The IBP relations reviewed in the previous section allow one to reduce 
the problem of computing a given family of Feynman integrals to the computation 
of a (finite) set of basis integrals, usually called master integrals. 
The master integrals must be evaluated by other means, and
there are various well-established methods to compute them 
(cf.,~e.g., ref.~\cite{Smirnov:2012gma} and references therein for a review).
Some of these methods rely on direct integration of parametric representations
such as the ones in \cref{eq:Schwinger,eq:Feynman}, using various methods to perform the integrals 
(see,~e.g.,~refs.~\cite{Brown:2008um,Anastasiou:2013srw,Panzer:2014gra,Panzer:2014caa,
Bogner:2014mha,Bogner:2015nda,Ablinger:2014yaa,Hidding:2017jkk,Broedel:2019hyg}), 
but over the last decades the method of differential 
equations~\cite{Kotikov:1990kg,Kotikov:1991hm,Kotikov:1991pm,Gehrmann:1999as,Henn:2013pwa,Papadopoulos:2014lla} 
has established itself as one of the most powerful. 
There are several good reviews and lecture series on how to use differential equations
to compute Feynman integrals, see,~e.g.,~refs.~\cite{Henn:2014qga,Henn:2014yza}. 
Here, we attempt to give an overview of the general strategy in a broader mathematical context.

\subsection{Differential equations satisfied by Feynman integrals}

In the following we denote by 
$\vec{\mathcal{I}}(x,\eps)=\big(\mathcal{I}_1(x,\eps),\ldots,\mathcal{I}_N(x,\eps)\big)^T$ 
a vector of basis integrals depending on the scales $x=(x_1,\ldots,x_s)$, 
and we assume that the entries of $\vec{\mathcal{I}}(x,\eps)$ are ordered in a way 
that is compatible with the natural order on the sectors.

Let us consider the derivative of $\vec{\mathcal{I}}(x,\eps)$ with respect to an external 
scale $x_i$.
We can exchange the derivative $ \partial_{x_i}= \frac{\partial}{\partial x_i} $ and the loop integration, and we act with the 
derivative on the loop \emph{integrand}. 
If $x_i$ is a propagator mass, $x_i=m_j^2$, the action on the integrand is easy to compute. 
If $x_i$ is a scalar product between external momenta, $x_i=p_j\cdot p_k$, 
then we can use the chain rule to express $\partial_{x_i}$ in terms of the differential 
operators $p_j^\mu\frac{\partial}{\partial p_k^\mu}$, whose action on the loop integrand
is straightforward and very similar to the calculation done in \cref{eq:derIBP},
see also Remark~\ref{rem:LII}.
The expression one obtains after acting with the differential operators can then be
rewritten in terms of master integrals using the IBP identities described in \cref{sec:ibps}.
In summary, the derivative of master integrals with respect to an external scale $x_i$ 
can be expressed as a linear combination of master integrals. That is, we can write:
\beq
\partial_{x_i}\vec{\mathcal{I}}(x,\eps) = A_{x_i}(x,\eps)\,\vec{\mathcal{I}}(x,\eps)\,,
\eeq 
where the $A_{x_i}(x,\eps)$ are $N\times N$ matrices.
Since IBP relations involve only rational coefficients, the entries of $A_{x_i}(x,\eps)$ 
are rational functions in $x$ and $\eps$. 
Moreover, if the entries of $\vec{\mathcal{I}}(x,\eps)$ are ordered such that they respect 
the natural partial order on the sectors (as we are assuming), then the matrices are block upper-triangular.

\begin{ex}[The differential equations for the one-loop bubble]\label{ex:diffEq1lbNPure}
Let us return once more to the example of the one-loop bubble integral with two massive propagators.
We already established in \cref{sec:oneLoopBubIBP} that there are three master integrals associated 
with this topology. We set
\begin{equation}
    \vec{\mathcal{I}}(p^2,m_1^2,m_2^2;D)=\big({I}(1,1),{I}(1,0),{I}(0,1)\big)^T\,,
\end{equation}
where we use the propagator powers in \cref{eq:twoMassBubInt} to distinguish the three different master integrals. 
From now on we will drop the dependence on all other quantities. 
The derivatives with respect to $p^2$, $m_1^2$ and $m_2^2$ are given by
\begin{align}\begin{split}\label{eq:dep2}
    \partial_{p^2}\,\vec{\cI}=&\,
    \begin{pmatrix}
    \frac{I(0,2)}{2p^2}-\frac{I(1,1)}{2p^2}+\frac{(p^2-m_1^2+m_2^2)I(1,2)}{2p^2}\\
    0\\
    \frac{I(-1,2)}{2p^2}-\frac{I(0,1)}{2p^2}+\frac{(p^2-m_1^2+m_2^2)I(0,2)}{2p^2}
    \end{pmatrix}\\
    =&\,\begin{pmatrix}
    \frac{(D-4)p^4+2(m_1^2+m_2^2)p^2-(D-2)(m_1^2-m_2^2)^2}{2p^2\lambda(p^2,m_1^2,m_2^2)}&
    \frac{(D-2)(p^2+m_2^2-m_1^2)}{2p^2\lambda(p^2,m_1^2,m_2^2)}&
    \frac{(D-2)(p^2+m_1^2-m_2^2)}{2p^2\lambda(p^2,m_1^2,m_2^2)}\\
    0&0&0\\
    0&0&0\\
    \end{pmatrix}\vec{\cI}\,,
\end{split}\end{align}
\begin{align}\begin{split}\label{eq:dem12}
    \partial_{m_1^2}\,\vec{\cI}=&\,
    \begin{pmatrix}
    -I(2,1)\\
    -I(2,0)\\
    0
    \end{pmatrix}
    =\begin{pmatrix}
    \frac{(D-3)(m_1^2-p^2-m_2^2)}{\lambda(p^2,m_1^2,m_2^2)}&
    \frac{(D-2)(m_1^2+m_2^2-p^2)}{2m_1^2\lambda(p^2,m_1^2,m_2^2)}&
    -\frac{D-2}{2p^2\lambda(p^2,m_1^2,m_2^2)}\\
    0&\frac{D-2}{2m_1^2}&0\\
    0&0&0\\
    \end{pmatrix}\vec{\cI}\,,
\end{split}\end{align}
\begin{align}\begin{split}\label{eq:dem22}
    \partial_{m_2^2}\,\vec{\cI}=&\,
    \begin{pmatrix}
    -I(1,2)\\
    0\\
    -I(0,2)
    \end{pmatrix}
    =\begin{pmatrix}
    \frac{(D-3)(m_2^2-p^2-m_1^2)}{\lambda(p^2,m_1^2,m_2^2)}&
    -\frac{D-2}{2p^2\lambda(p^2,m_1^2,m_2^2)}&
    \frac{(D-2)(m_1^2+m_2^2-p^2)}{2m_1^2\lambda(p^2,m_1^2,m_2^2)}&\\
    0&0&0\\
    0&0&\frac{D-2}{2m_2^2}\\
    \end{pmatrix}\vec{\cI}\,,
\end{split}\end{align}
where $\lambda(a,b,c)$ was defined below \cref{eq:var2mbub}. In the equations above,
we first present the action of the differential operator on the integrand of \cref{eq:twoMassBubInt},
and then what one obtains after using IBP relations to rewrite the expressions in terms of master integrals.
Note in particular that $\partial_{p^2}\cI(0,1)$ is only explicitly zero after accounting for IBP relations.
\end{ex}

It is convenient to package the different partial derivatives into a total differential 
with respect to all external scales, $\rd = \sum_{i=1}^s\rd x_i\partial_{x_i}$:
\beq\label{eq:DEQ}
\rd\,\vec{\mathcal{I}}(x,\eps) = A(x,\eps)\,\vec{\mathcal{I}}(x,\eps)\,,
\eeq 
where $A(x,\eps) = \sum_{i=1}^s\rd x_i\, A_{x_i}(x,\eps)$ is a matrix whose entries are rational one-forms.

\begin{rem}The matrices $A_{x_i}$ are in fact not independent. 
Indeed, the total differential must satisfy $\rd^2=0$, and we have
\beq
0=\rd^2 \,\vec{\mathcal{I}}(x,\eps) = \rd\!\left[ A(x,\eps)\,\vec{\mathcal{I}}(x,\eps)\right] 
= \left[\rd A(x,\eps) - A(x,\eps)\wedge A(x,\eps)\right]\,\vec{\mathcal{I}}(x,\eps)\,.
\eeq
It follows that $A(x,\eps)$ must satisfy the \emph{integrability condition}
\beq\label{eq:DEQ_integrability}
\rd A(x,\eps) - A(x,\eps)\wedge A(x,\eps) = 0\,,
\eeq
where `$\wedge$' denotes the wedge product between differential forms. 
Equation~\eqref{eq:DEQ_integrability} gives a set of differential relations between the matrices 
$A_{x_i}(x,\eps)$ that can serve as a useful check of the correctness of the differential equations. 
It is straightforward to verify, for instance, 
that the matrices in \cref{eq:dep2,eq:dem12,eq:dem22} satisfy these relations.
\end{rem}

\begin{rem}
The differential operators $\partial_{x_i}$ are not all independent, 
even if the scales $x_i$ are. Indeed, for every integral $I(x,\nu,\eps)$ 
(basis integral or not) we have the relation:
\beq\label{eq:scaling_relation}
\sum_{i=1}^sx_i\,\partial_{x_i}I(x,\nu,\eps) = \frac{\alpha_I}{2}\,I(x,\nu,\eps) \,,
\eeq 
where $\alpha_I = \big[I(x,\nu,\eps)\big]$ was defined in eq.~\eqref{eq:Rescaling}. 
To see why eq.~\eqref{eq:scaling_relation} holds, we note that the differential 
operator on the left-hand side is the infinitesimal generator of the dilatations 
$x\mapsto \lambda^2\,x$, 
\beq
\sum_{i=1}^sx_i\,\partial_{x_i} = \frac{1}{2}\,\mathcal{D}\,,
\eeq
where $\mathcal{D}$ was defined in eq.~\eqref{eq:dilatations_generator}.
The eigenvalues of the dilatation operator are the 
mass dimensions (the additional factor of $1/2$ comes from the fact that $\alpha_I$ 
is the mass dimension of the integral, and the scales $x_i$ themselves have 
mass dimension equal to~2).

Equation~\eqref{eq:scaling_relation} implies that the nontrivial functional dependence 
of $I(x,\nu,\eps)$ can only be in the ratios $y_i = x_i/x_s$, $1\le i\le s-1$. 
Indeed, if we change variables from 
$(x_1,\ldots,x_s)$ to $(y_1,\ldots,y_s) = (x_1/x_s,\ldots,x_{s-1}/x_s,x_s)$, 
we can easily check that we have the relation:
\beq
\sum_{i=1}^sx_i\,\partial_{x_i} = y_s\,\partial_{y_s} \,,
\eeq 
so that eq.~\eqref{eq:scaling_relation} implies
\beq
I(x,\nu,\eps) = y_s^{\alpha_I/2}\,\hat{I}(y_1,\ldots,y_{s-1},\nu,\eps) 
= x_s^{\alpha_I/2}\,\hat{I}(x_1/x_s,\ldots,x_{s-1}/x_s,\nu,\eps) \,.
\eeq
It is therefore sufficient to consider the derivatives with respect to the ratios 
$y_i$, $1\le i\le s-1$, rather than in the individual scales $x_i$ 
(or, equivalently, we may put $x_s=1$ during the computation). 
We note that this implies that the differential equation satisfied by a one-scale 
integral ($s=1$) is trivial, and the method of differential equations is not suitable 
for computing such integrals. 
These integrals must then either be computed by other means, e.g.,  direct integration or the \emph{dimensional recurrence and analyticity method}~\cite{Lee:2009dh,Lee:2012te}.
If one still wishes to use differential equations,
it is often possible to introduce an additional scale $x_2$. 
The method of differential equations may then be applied to the ratio $y_1=x_1/x_2$, 
and we recover the original integral in the limit $y_1\to \infty$. 
We refer to ref.~\cite{Henn:2013nsa} for details.

Equation~\eqref{eq:scaling_relation} leads to the following differential relation between the master integrals
\beq\label{eq:scaling_relation_matrix}
\sum_{i=1}^sx_i\,\partial_{x_i}\,\vec{\mathcal{I}}(x,\eps) = \frac{1}{2}\big[\vec{\mathcal{I}}(x,\eps)\big]\,\vec{\mathcal{I}}(x,\eps)\,,
\eeq
where $\big[\vec{\mathcal{I}}(x,\eps)\big] = \textrm{diag}\big(\big[{\mathcal{I}}_1(x,\eps)\big],\ldots, \big[{\mathcal{I}}_N(x,\eps)\big]\big)$ is the diagonal matrix whose entries are the mass dimensions of the basis elements. 
Equation \eqref{eq:scaling_relation_matrix} provides another nontrivial check
of the correctness of the differential equations.
For instance, using the matrices in \cref{eq:dep2,eq:dem12,eq:dem22}, we verify
that
\begin{equation}
    \left(p^2\,\partial_{p^2}+m_1^2\,\partial_{m_1^2}+m_2^2\,\partial_{m_2^2}\right)\,\vec{\cI}
    =\frac{1}{2}\begin{pmatrix}
    D-4&0&0\\
    0&D-2 &0\\
    0&0&D-2
    \end{pmatrix}\,\vec{\cI}\,,
\end{equation}
in the case where $\vec{\cI}$ denotes the master integrals for the two-mass bubble
discussed in \cref{sec:oneLoopBubIBP}.
\end{rem}

\subsubsection{Change of basis}
The differential equation~\eqref{eq:DEQ} can be very hard to solve in closed form, 
including the full dependence on the dimensional regulator $\eps$. 
In applications we are only interested in the Laurent expansion of $\vec{\mathcal{I}}(x,\eps)$ 
up to some finite order in $\eps$, and this order is typically relatively low. 
The basis $\vec{\mathcal{I}}(x,\eps)$ is not unique, and we may use this freedom to change basis 
to a new basis $\vec{\mathcal{J}}(x,\eps)$ which brings the differential equation into a form 
that can be more easily solved. 
We are therefore interested in determining how the differential 
equation~\eqref{eq:DEQ} behaves under a change of basis. 

Since $\vec{\mathcal{I}}(x,\eps)$ and $\vec{\mathcal{J}}(x,\eps)$ are bases for the same vector space, there must be an invertible matrix $M(x,\eps)$ such that 
\beq\label{eq:change_of_basis}
\vec{\mathcal{I}}(x,\eps) = M(x,\eps)\,\vec{\mathcal{J}}(x,\eps)\,.
\eeq
The new basis $\vec{\mathcal{J}}(x,\eps)$ satisfies the differential equation
\beq
\rd\vec{\mathcal{J}}(x,\eps) = A'(x,\eps)\,\vec{\mathcal{J}}(x,\eps)\,,
\eeq 
where the matrix $ A'(x,\eps)$ is related to the matrix $ A(x,\eps)$ from eq.~\eqref{eq:DEQ} by
\beq\label{eq:gauge_transformation}
 A'(x,\eps) = M(x,\eps)^{-1}\left[ A(x,\eps) M(x,\eps)-\rd  M(x,\eps)\right]\,.
\eeq
At this point we have to make an important comment: we have already mentioned that, 
due to the rational nature of the IBP relations,  the original matrix $A(x,\eps)$
in eq.~\eqref{eq:DEQ} is a matrix of \emph{rational} one-forms. 
However, we have not specified what the functional dependence of the 
matrix $M(x,\eps)$ is, and no-one forces us to choose $M(x,\eps)$ to have rational entries. 
On the other hand, if the entries of $M(x,\eps)$ are not rational, the new basis 
$\vec{\mathcal{J}}(x,\eps)$ will not consist of Feynman integrals of the form $I(x,\nu,\eps)$
multiplied by rational functions of $x$ and $\epsilon$. 
In the following, we will refer to a basis related to one consisting of integrals of the form
$I(x,\nu,\eps)$ via a rational transformation matrix $M(x,\eps)$ as an \emph{IBP-basis}.
We will only consider changes of bases $M(x,\eps)$ that depend rationally on $\eps$, 
and we will call the transformation in eq.~\eqref{eq:change_of_basis} \emph{rational} 
(or \emph{algebraic}, or \emph{transcendental}) if the entries of $M(x,\eps)$ are {rational}
(or {algebraic}, or {transcendental}) functions of $x$.\footnote{In the case of a transcendental 
transformation, we actually allow the entries of $M(x,\eps)$ to be also {rational} or {algebraic}, i.e., 
there is a strict inclusive hierarchy among rational, algebraic and transcendental transformations.}

\begin{ex}[Change of basis for the one-loop bubble]
Let us return to the example of the one-loop bubble with two massive propagators,
for which we chose the basis (see \cref{sec:oneLoopBubIBP} and \cref{ex:diffEq1lbNPure})
\begin{equation}
    \vec{\mathcal{I}}(p^2,m_1^2,m_2^2;D)=\big({I}(1,1),{I}(1,0),{I}(0,1)\big)^T\,.
\end{equation}
We will later see that it is particularly convenient to consider the change of basis
\begin{equation}\label{eq:cob2mb}
    \vec{\mathcal{I}}(p^2,m_1^2,m_2^2;D)=\frac{2}{2-D}
    \begin{pmatrix}
    \dfrac{1}{\sqrt{\lambda(p^2,m_1^2,m_2^2)}}&0&0\\
    0&1&0\\
    0&0&1
    \end{pmatrix}
    \vec{\mathcal{J}}(p^2,m_1^2,m_2^2;D)\,,
\end{equation}
with $\lambda(a,b,c)$ as defined below \cref{eq:var2mbub}.
This is an example of an algebraic transformation.

For the case where one of the propagators is massless, say $m_2^2=0$, 
the basis is two-dimensional. In \cref{sec:oneLoopBubIBP}
we chose the basis
\begin{equation}
    \vec{\mathcal{I}}(p^2,m_1^2;D)=\big({I}(1,1),{I}(1,0)\big)^T\,.
\end{equation}
For this example, the change of basis equivalent to that of \cref{eq:cob2mb} is
\begin{equation}
    \label{eq:cob1mb}
    \vec{\mathcal{I}}(p^2,m_1^2;D)=\frac{2}{2-D}
    \begin{pmatrix}
    \dfrac{1}{p^2-m_1^2}&0\\
    0&1
    \end{pmatrix}
    \vec{\mathcal{J}}(p^2,m_1^2;D)\,.
\end{equation}
This is an example of a rational transformation.

In both cases above, we can compute the matrices $A'(x,D)$ associated with the bases $\vec{\mathcal{J}}(x;D)$ using
\cref{eq:gauge_transformation}, and we find that
\begin{equation}\label{eq:canBub}
    A'(x,D) = \frac{2-D}{2}\,\widetilde{A}(x)\,.
\end{equation}
That is, the $D$-dependence completely factorises from $A'(x,D)$, and for $D=2-2\epsilon$
the matrix is proportional to $\epsilon$. As we will later see, such bases
have particularly interesting properties.

\end{ex}

\subsection{Solving the differential equations}
\label{sec:solve_deq}

In this section we discuss a strategy for solving the differential equations satisfied by the master integrals (for an alternative strategy in the case of one-variables problems, see,~e.g.,~ref.~\cite{Ablinger:2018zwz}). 
We focus here on analytic approaches, though we point out that it is also possible to solve the differential 
equations numerically, cf.,~e.g.,~refs.~\cite{Aglietti:2007as,Mistlberger:2018etf,Mandal:2018cdj,
Moriello:2019yhu,Hidding:2020ytt,Liu:2020kpc,Liu:2021wks,Liu:2022chg,Liu:2022mfb}.

We are interested in the first few terms in the Laurent series around $\eps=0$ 
of ${\vec{\mathcal{I}}}(x,\eps)$. Typically, some basis integrals will have poles at $\eps=0$, 
but it is generally easy to determine the first non-zero order in the Laurent expansion. 
More generally, we have:
\begin{prop} \label{prop:eps-regular}
For every IBP-basis, there is a rational transformation to an IBP-basis whose elements are finite and non-zero at $\eps=0$.
\end{prop}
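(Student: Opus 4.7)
The plan is to proceed by a simple rescaling: for each basis element $\mathcal{I}_i(x,\eps)$, compensate its leading Laurent behaviour in $\eps$ by multiplying through an integer power of $\eps$, and then show that this operation fits within the class of rational transformations.

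Concretely, the first step is to observe that Feynman integrals in dimensional regularisation admit Laurent expansions in $\eps$ around $\eps=0$ with only finitely many negative powers (the UV and IR divergences generate at most finitely many $1/\eps$ poles at each loop order). Consequently, for each basis element there is a well-defined integer $n_i \in \mathbb{Z}$ such that
\begin{equation}
\mathcal{I}_i(x,\eps) = \eps^{n_i}\,\widetilde{\mathcal{I}}_i(x,\eps)\,, \qquad \widetilde{\mathcal{I}}_i(x,0) \not\equiv 0\,,
\end{equation}
where $\widetilde{\mathcal{I}}_i(x,\eps)$ is regular at $\eps=0$. The fact that the leading coefficient is not the zero function of $x$ is built into the definition of $n_i$: otherwise we could increase $n_i$ by one.

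Next, I would define the new basis by the diagonal transformation
\begin{equation}
\mathcal{J}_i(x,\eps) = \eps^{-n_i}\,\mathcal{I}_i(x,\eps) \,, \qquad i=1,\ldots,N\,.
\end{equation}
This corresponds to the matrix $M(x,\eps) = \mathrm{diag}(\eps^{n_1},\ldots,\eps^{n_N})$ in eq.~\eqref{eq:change_of_basis}. Since the $n_i$ are integers, $M(x,\eps)$ is rational in $\eps$, as required by the standing convention, and it is constant, hence trivially rational, in $x$. Composing $M$ with the rational transformation that, by hypothesis, connects $\vec{\mathcal{I}}$ to a basis of Feynman integrals times rational functions, we obtain another such rational transformation, so the new basis $\vec{\mathcal{J}}$ is still an IBP-basis. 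By construction, each $\mathcal{J}_i(x,0) = \widetilde{\mathcal{I}}_i(x,0)$ is finite and not identically zero, which is the desired statement.

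The only real subtlety, and thus the step that requires the most care, is ensuring that $n_i$ is a genuine integer bounded below: this rests on the finiteness of the pole order of Feynman integrals in dimensional regularisation, a standard but non-trivial input. Everything else is a one-line diagonal rescaling. I would also note in the write-up that this argument shows more: once a basis is rescaled to be finite and non-zero at $\eps = 0$, any further rational rescaling of the form $\mathrm{diag}(\eps^{m_1},\ldots,\eps^{m_N})$ with $m_i\ne 0$ would reintroduce poles or zeros, so the property of being "$\eps$-regular and non-vanishing at $\eps=0$" is preserved only under $\eps$-independent (and more general analytic) rescalings.
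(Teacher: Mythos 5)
Your rescaling argument does prove the proposition as literally stated, but it is not the route the paper takes: the paper gives no self-contained proof and instead points to the constructive algorithm of ref.~\cite{Lee:2019wwn}, and the difference is not merely constructive versus non-constructive. Your proof presupposes knowledge of the leading Laurent order $n_i$ of each basis element, i.e.\ of part of the very answer one is trying to compute, whereas the cited algorithm extracts the transformation from the IBP/differential-equation data alone. As a pure existence proof of the literal statement, though, your argument is fine, granted the standard input (which you correctly flag) that dimensionally regularised Feynman integrals, and hence their $\mathbb{C}(x,\eps)$-linear combinations, are meromorphic in $\eps$ with a pole of finite order at $\eps=0$.

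The point you should treat more carefully is that ``each element finite and non-zero at $\eps=0$'' is weaker than what the paper actually needs from an $\eps$-regular basis. Immediately after the proposition, $\eps$-regularity is used to conclude that the matrix $A(x,\eps)$ in \eqref{eq:DEQ} is finite at $\eps=0$; that implication requires, in addition, that the $\eps\to0$ limits of the basis elements be linearly independent over the rational functions of $x$. Your diagonal rescaling does not deliver this: if, say, $\mathcal{I}_1=f(x)+\mathcal{O}(\eps)$ and $\mathcal{I}_2=f(x)+\eps\, g(x)+\mathcal{O}(\eps^2)$, both elements are already finite and non-zero at $\eps=0$, your $M$ is the identity, and yet the combination $(\mathcal{I}_1-\mathcal{I}_2)/\eps$ still hides a finite function; correspondingly $A(x,\eps)$ can retain a $1/\eps$ pole whose residue annihilates the (degenerate) vector of leading coefficients. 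Removing such hidden $\eps$-suppressed combinations requires genuinely non-diagonal rational transformations applied iteratively, and proving that this iteration terminates is the non-trivial content of the constructive proof in the cited reference. For the same reason, your closing remark---that after the rescaling only $\eps$-independent transformations preserve the property---is misleading: non-diagonal transformations with $\eps$-poles in their entries are precisely what is needed to reach (and can preserve) $\eps$-regularity in the stronger, operationally relevant sense.
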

\noindent
A basis with the property that every basis element is finite and non-zero at $\epsilon=0$ is called 
$\eps$-\emph{regular}~\cite{Lee:2019wwn} (see also ref.~\cite{Chetyrkin:2006dh,vonManteuffel:2014qoa} for the 
closely related notions of $\eps$-\emph{finite} and \emph{quasi-finite} basis).
The proof of this statement is constructive and provides an algorithm to determine the 
$\eps$-\emph{regular} basis~\cite{Lee:2019wwn}. Note that, since the transformation obtained from this algorithm is rational, 
the $\eps$-regular basis will also be an IBP-basis. In the following we therefore assume without 
loss of generality that the starting IBP-basis ${\vec{\mathcal{I}}}(x,\eps)$ is $\eps$-regular. 
This implies that the matrix $A(x,\eps)$ in the differential equation~\eqref{eq:DEQ} 
is finite for $\eps=0$, and we define
\beq
A(x,\eps) = A_0(x) + A_1(x,\eps)\,,\textrm{~~~with~~~} \lim_{\eps\to0}A_1(x,\eps) = 0\,.
\eeq

Let us now return to the discussion of how to solve he differential equation.
We recall that we assumed that the ordering among the basis elements in $\vec{\mathcal{I}}(x,\eps)$ 
is compatible with the natural order on the sectors. We can then write
\beq
\vec{\mathcal{I}}(x,\eps) = \big(\vec{\mathcal{I}}_{\theta_1}(x,\eps), \ldots, \vec{\mathcal{I}}_{\theta_{N_{\textrm{sec}}}}(x,\eps)\big)^T\,.
\eeq
The $\theta_i\in\{0,1\}^p$ label the $N_{\textrm{sec}}$ irreducible sectors, and 
the irreducible basis integrals in the sector $\theta_i$ are collected in
$\vec{\mathcal{I}}_{\theta_i}(x,\eps) = \big(I(x,\nu^{i}_1,\eps), \ldots, I(x,\nu^{i}_{N_{\theta_i}},\eps)\big)^T$,
where  $\nu^{i}_j=(\nu^{i}_{j,1},\ldots,\nu^{i}_{j,p})$.
The irreducible basis integrals in $\vec{\mathcal{I}}_{\theta_i}(x,\eps)$ have all the  propagators of the sector ($\vartheta(\nu^i_j) =\theta_i$), 
and no linear combination of them can be reduced to a lower sector via IBP relations. 
Since the matrix $A(x,\eps)$ is block upper-triangular, we can split the homogeneous system of 
differential equations in eq.~\eqref{eq:DEQ} into $N_{\textrm{sec}}$ inhomogeneous systems 
for the basis in a given sector:
\beq\label{eq:DEQ_sector}
\rd \vec{\mathcal{I}}_{\theta_i}(x,\eps) = A_{\theta_i}(x,\eps) \vec{\mathcal{I}}_{\theta_i}(x,\eps) + \vec{\mathcal{N}}_{\theta_i}(x,\eps)
\,,
\eeq
where 
$\vec{\mathcal{N}}_{\theta_i}(x,\eps)$ collects  contributions from lower sectors. Written in this form, the equations can be solved sector by sector. The procedure involves two steps, which we describe separately.

\paragraph{The associated homogeneous equation.} 
We start by considering the homogenous equation for $\eps=0$ associated to eq.~\eqref{eq:DEQ_sector}:
\beq\label{eq:DEQ_sector_homogeneous}
\rd \vec{\mathcal{I}}_{\theta_i,h}(x) = A_{\theta_i,0}(x) \vec{\mathcal{I}}_{\theta_i,h}(x) \,, 
\quad \text{where }\,A_{\theta_i,0}(x) = \lim_{\eps\to0}A_{\theta_i}(x,\eps)\,.
\eeq
Since this is a linear system of $N_{\theta_i}$ homogeneous first-order differential equations, the general solution takes the form:
\beq\label{eq:general_sector_solution}
\vec{\mathcal{I}}_{\theta_i,h}(x) = \sum_{k=1}^{N_{\theta_i}}c_k\,\vec{\mathcal{I}}_{\theta_i,h}^{(k)}(x)\,,
\eeq
where the $c_k$ are unknown complex constants and the $\vec{\mathcal{I}}_{\theta_i,h}^{(k)}(x)$ 
form a basis for the solution space of eq.~\eqref{eq:DEQ_sector_homogeneous}. 
We can then form a matrix where the columns are these vectors, called the \emph{Wronskian matrix} of 
eq.~\eqref{eq:DEQ_sector_homogeneous} (sometimes also called the \emph{fundamental solution matrix}):
\beq\label{eq:whomogeneous}
W_{\theta_i}(x) = \left(\vec{\mathcal{I}}_{\theta_i,h}^{(1)}(x), \ldots, \vec{\mathcal{I}}_{\theta_i,h}^{(N_{\theta_i})}(x)\right)\,.
\eeq
The general solution in eq.~\eqref{eq:general_sector_solution} can then be cast in the form
\beq
\vec{\mathcal{I}}_{\theta_i,h}(x) = W_{\theta_i}(x)\,\vec{c}\,,
\eeq
where $\vec{c} = (c_1,\ldots, c_{N_{\theta_i}})^T$ is determined by the initial condition, 
and the Wronskian matrix satisfies the equation
\beq
\rd W_{\theta_i}(x) = A_{\theta_i,0}(x)\,W_{\theta_i}(x)\,.
\eeq
We see that the entire information about the general solution to the homogeneous 
equation~\eqref{eq:DEQ_sector_homogeneous} is encoded into the Wronskian matrix, 
which can be interpreted as a matrix solution to the homogeneous equation at $\eps=0$. 
Since the columns of $W_{\theta_i}(x)$ form a basis for the solution space, 
the determinant of $W_{\theta_i}(x)$ must be non-zero (at least for generic values of $x$). 
The determinant satisfies the differential equation
\beq
\rd\det W_{\theta_i}(x) = \left[\textrm{Tr} A_{\theta_i,0}(x) \right]\, \det W_{\theta_i}(x)\,.
\eeq
Although the determinant is often an algebraic function, the individual entries of $W_{\theta_i}(x)$, 
are in general not algebraic, but rather transcendental functions.

Determining the entries of the Wronskian matrix is often a very complicated task and no closed form 
for the solutions is known, especially if $N_{\theta_i}>2$. In that case it can be useful to turn the system 
of $N_{\theta_i}$ first-order equations for the vector $\vec{\mathcal{I}}_{\theta_i,h}(x)$ 
into a system of differential equations of higher order for an individual entry of 
$\vec{\mathcal{I}}_{\theta_i,h}(x)$. It may then be possible to obtain local power-series solutions 
from the Frobenius method using standard techniques, which may be continued to multi-valued solutions 
for all values of $x$. For recent applications in this direction in the context of Feynman integrals, 
see refs.~\cite{Bonisch:2021yfw,Vanhove:2018mto,Klemm:2019dbm,Bonisch:2020qmm}. 
In the following we will assume that we know the expression for $W_{\theta_i}(x)$, 
which is typically the case for $N_{\theta_i}\le 2$.

\begin{rem}
The entries of the Wronskian matrix $W_{\theta_i}(x)$ have an interpretation in terms of maximal cuts. 
Indeed,  the cuts of a Feynman integral satisfy the same differential equations as the full, uncut, integral, 
where we have to put to zero all cut integrals where we would need to cut propagators raised to 
non-positive powers (cf.,~e.g.,~refs.~\cite{Anastasiou:2002yz,Anastasiou:2003yy}). 
It then follows that if we cut all propagators in the sector $\theta_i$, the resulting maximal cuts of 
$\vec{\mathcal{I}}_{\theta_i}(x,\eps)$, denoted $\vec{\mathcal{I}}_{\theta_i}^{\textrm{max-cut}}(x,\eps)$, 
satisfy the homogeneous equation associated to
eq.~\eqref{eq:DEQ_sector}~\cite{Harley:2017qut,Bosma:2017ens,Primo:2016ebd,Primo:2017ipr,Frellesvig:2017aai}:
\beq\label{eq:max_cut_deq}
\rd \vec{\mathcal{I}}_{\theta_i}^{\textrm{max-cut}}(x,\eps) = A_{\theta_i}(x,\eps)\, 
\vec{\mathcal{I}}_{\theta_i}^{\textrm{max-cut}}(x,\eps) \,,
\eeq
Note that eq.~\eqref{eq:max_cut_deq} holds in $D=D_0-2\eps$ dimensions, not just for $\eps=0$.

The number of independent maximal cuts in the sector $\theta_i$ always agrees with the number 
$N_{\theta_i}$ of irreducible basis elements in that sector. We can then interpret the 
Wronskian matrix $W_{\theta_i}(x)$ as the matrix of all independent maximal cuts at $\eps=0$ 
in the sector $\theta_i$. This implies that we can also obtain the entries of the $W_{\theta_i}(x)$ 
by evaluating maximal cuts~\cite{Harley:2017qut,Bosma:2017ens,Primo:2016ebd,Primo:2017ipr,Frellesvig:2017aai}.
It is in fact possible to say a bit more about the homogeneous equation in eq.~\eqref{eq:max_cut_deq}, 
at least at a conjectural level. In ref.~\cite{Lee:2018jsw} it was conjectured that it is always 
possible to find an \emph{algebraic} transformation $M(x,\eps)$ such that $A_{\theta_i}(x,\eps)$ has the form
\beq
A_{\theta_i}(x,\eps) = (\eps+\mu_{\theta_i})\,S_{\theta_i}(x)\,, \quad \mu_{\theta_i} = 0\textrm{ or }1/2 
\textrm{ and } S_{\theta_i}(x)^T = S_{\theta_i}(x)\,,
\eeq
for every irreducible sector $\theta_i$.
\end{rem}

\paragraph{Iterative solution of the inhomogeneous equation.} 
We have established that we can associate the Wronskian matrix $W_{\theta_i}(x)$ 
to equation~\eqref{eq:DEQ_sector}. 
Let us now consider the change of basis 
$\vec{\mathcal{I}}_{\theta_i}(x,\eps) = W_{\theta_i}(x)\vec{\mathcal{J}}_{\theta_i}(x,\eps)$
(we recall that $W_{\theta_i}(x)$ is invertible, so this is a valid change of basis).
We obtain:
\beq\label{eq:DEQ_sector_J}
\rd \vec{\mathcal{J}}_{\theta_i}(x,\eps) = A_{\theta_i,1}'(x,\eps) 
\vec{\mathcal{J}}_{\theta_i}(x,\eps) + \vec{\mathcal{N}}'_{\theta_i}(x,\eps)\,,
\eeq
with 
\beq\begin{split}
A_{\theta_i,1}'(x,\eps) &\,= W_{\theta_i}(x)^{-1} \left[A_{\theta_i}(x,\eps)-A_{\theta_i,0}(x)\right]W_{\theta_i}(x)\,,\\
\vec{\mathcal{N}}'_{\theta_i}(x,\eps) &\,= W_{\theta_i}(x)^{-1}\,\vec{\mathcal{N}}_{\theta_i}(x,\eps)\,.
\end{split}\eeq
Note that, since the entries of $W_{\theta_i}(x)$ will generally be transcendental functions, 
this change of basis will in general be a transcendental transformation, and the new basis will no longer be an IBP-basis.
Since our basis is $\eps$-regular, all quantities admit a Taylor series expansion close to $\eps=0$:
\beq\begin{split}
\vec{\mathcal{J}}_{\theta_i}(x,\eps) &\,= \sum_{k=0}^{\infty}\eps^k\,\vec{\mathcal{J}}_{\theta_i}^{(k)}(x)\,,\\
\vec{\mathcal{N}}_{\theta_i}'(x,\eps) &\,= \sum_{k=0}^{\infty}\eps^k\,\vec{\mathcal{N}}_{\theta_i}^{'(k)}(x)\,,\\
A_{\theta_i,1}'(x,\eps) &\,= \sum_{k=1}^{\infty}\eps^k\,A_{\theta_i,1}^{\prime(k)}(x)\,.
\end{split}\eeq
Equation~\eqref{eq:DEQ_sector_J} can now easily be solved order by order in $\eps$. At $\mathcal{O}(\eps^0)$ we find:
\beq
\rd \vec{\mathcal{J}}_{\theta_i}^{(0)}(x) = \vec{\mathcal{N}}^{\prime(0)}_{\theta_i}(x)\,.
\eeq
This equation is solved by quadrature:
\beq\label{eq:J0_sol}
\vec{\mathcal{J}}_{\theta_i}^{(0)}(x) = \vec{\mathcal{J}}_{\theta_i,x_0}^{(0)} + 
\int_{x_0}^x\vec{\mathcal{N}}^{\prime(0)}_{\theta_i}(x_1)\,,
\eeq
where $\vec{\mathcal{J}}_{\theta_i,x_0}^{(0)}$ is a constant vector of initial conditions 
and the integration runs over a (fixed) path from an initial point $x_0$ to a generic point $x$. Note that the integrability condition in eq.~\eqref{eq:DEQ_integrability} ensures that the solution only depends on the choice of the end points $x_0$ and $x$, but it does not depend on the details of the path.
We will discuss how to fix the initial conditions in 
section~\ref{eq:solving_deqs}.\footnote{The initial condition $\vec{\mathcal{J}}_{\theta_i,x_0}^{(0)}$ 
can be thought of as the value of $\vec{\mathcal{J}}_{\theta_i}^{(0)}(x)$ at $x=x_0$. 
It is often useful to choose as initial point $x_0$ a point where 
$\vec{\mathcal{J}}_{\theta_i}^{(0)}(x)$ is singular, in which case the situation requires special attention, 
because in that case the value at $x=x_0$, as well as the integral in eq.~\eqref{eq:J0_sol}, are ill-defined. 
We will discuss this in more detail in section~\ref{sec:it_regularisation}. 
For now, it suffices to assume that $\vec{\mathcal{J}}_{\theta_i}^{(0)}(x)$ is regular at $x=x_0$.} 
At $\mathcal{O}(\eps^1)$ we find the equation:
\beq
\rd \vec{\mathcal{J}}_{\theta_i}^{(1)}(x) = \vec{\mathcal{N}}^{\prime(1)}_{\theta_i}(x) + 
A_{\theta_i,1}^{\prime(1)}(x)\,\left[\vec{\mathcal{J}}_{\theta_i,x_0}^{(0)}
+\int_{x_0}^x\vec{\mathcal{N}}^{\prime(0)}_{\theta_i}(x_1)\right]\,.
\eeq
The solution is
\beq
\label{eq:J1_sol}
\vec{\mathcal{J}}_{\theta_i}^{(1)}(x) = \vec{\mathcal{J}}_{\theta_i,x_0}^{(1)} 
+ \int_{x_0}^x\left[\vec{\mathcal{N}}^{\prime(1)}_{\theta_i}(x_1)+A_{\theta_i,1}^{\prime(1)}(x_1)\, 
\vec{\mathcal{J}}_{\theta_i,x_0}^{(0)}\right]+\int_{x_0}^xA_{\theta_i,1}^{\prime(1)}(x_1)
\int_{x_0}^{x_1}\vec{\mathcal{N}}^{\prime(1)}_{\theta_i}(x_2)\,.
\eeq
We can of course continue in this way and write down the solution at every order in~$\eps$. 
We see that at every order we have one extra integration compared to the previous order, 
which naturally leads us to a class  of functions called \emph{iterated integrals}. 
They can be defined as follows: consider a path $\gamma: [t_0,t] \to X$ in some space $X$, 
and $\omega_1,\ldots \omega_n$ differential one-forms on $X$. 
We define their iterated integral along $\gamma$ by~\cite{ChenSymbol}
\beq\label{eq:iterated_def}
    \int_{\gamma}\omega_1\cdots \omega_n := \int_{t_0}^t \rd t_n\,f_n(t_n)
    \int_{t_0}^{t_{n}}\rd t_{n-1}\,f_{n-1}(t_{n-1})\int_{t_0}^{t_{n-1}}\cdots 
    \int_{t_0}^{t_2}\rd t_1\,f_1(t_1)\,.
\eeq
where $f_i:\mathbb{C}\to \mathbb{C}$ are complex functions which are the pull-backs 
of the $\omega_i$ along $\gamma$, $\gamma^*\omega_i = \rd t_i\,f_i(t_i)$. 
We will study iterated integrals and their properties in detail in section~\ref{sec:iterated}. 
Our arguments lead to the following conclusion:
\begin{prop}
\label{prop:iterated}
At every order in $\eps$, the basis integrals $\vec{\mathcal{I}}(x,\eps)$ involve rational 
functions in $x$, the maximal cuts for $\eps=0$, and iterated integrals of these functions.
\end{prop}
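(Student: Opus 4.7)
The plan is a double induction organised around the block upper-triangular structure of $A(x,\eps)$. The outer induction runs over sectors $\theta_i$ in the natural partial order. The base case consists of minimal sectors, for which the source term $\vec{\mathcal{N}}_{\theta_i}(x,\eps)$ in \cref{eq:DEQ_sector} vanishes and the basis integrals are obtained entirely from the homogeneous equation; the solution is encoded in the Wronskian $W_{\theta_i}(x)$, whose columns are, by the remark following \cref{eq:max_cut_deq}, precisely the maximal cuts of the sector at $\eps=0$. For the inductive step, I would assume that in every strictly lower sector the basis integrals have already been expressed as finite $\eps$-series whose coefficients lie in the class $\mathcal{C}$ of polynomial combinations of rational functions in $x$, entries of the lower-sector Wronskians and their inverses, and iterated integrals in the sense of \cref{eq:iterated_def} with one-forms drawn from $\mathcal{C}$. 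This already places $\vec{\mathcal{N}}_{\theta_i}(x,\eps)$ in $\mathcal{C}$.

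The key manoeuvre at the inductive step is the transcendental change of basis $\vec{\mathcal{I}}_{\theta_i}(x,\eps)=W_{\theta_i}(x)\vec{\mathcal{J}}_{\theta_i}(x,\eps)$, which removes the $\eps=0$ piece of the coefficient matrix and turns \cref{eq:DEQ_sector} into \cref{eq:DEQ_sector_J} with $A'_{\theta_i,1}(x,\eps)=\mathcal{O}(\eps)$. I would then run an inner induction on the order $k$ in $\eps$. Assuming $\vec{\mathcal{J}}_{\theta_i}^{(j)}(x)\in\mathcal{C}$ for all $j<k$, the order-$\eps^k$ component of \cref{eq:DEQ_sector_J} reads
\[
\rd\vec{\mathcal{J}}_{\theta_i}^{(k)}(x) = \vec{\mathcal{N}}_{\theta_i}^{\prime(k)}(x) + \sum_{j=1}^{k}A_{\theta_i,1}^{\prime(j)}(x)\,\vec{\mathcal{J}}_{\theta_i}^{(k-j)}(x),
\]
and the right-hand side is a one-form whose coefficients are products of rational functions, entries of $W_{\theta_i}^{\pm 1}$, and iterated integrals already in $\mathcal{C}$. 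Solving by quadrature along a path from a fixed base point appends exactly one additional integration sign, which is precisely the recursive shape of \cref{eq:iterated_def}, so $\vec{\mathcal{J}}_{\theta_i}^{(k)}(x)\in\mathcal{C}$. Rotating back via $\vec{\mathcal{I}}_{\theta_i}=W_{\theta_i}\vec{\mathcal{J}}_{\theta_i}$ multiplies by further entries of $W_{\theta_i}$ and does not leave the class.

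The only genuine obstacle is showing that $\mathcal{C}$ is closed under the two operations it encounters: pointwise multiplication (needed both for the Wronskian rotation and for forming the source one-form above), and primitive along a path of a one-form whose coefficients already lie in $\mathcal{C}$. The first is purely algebraic. The second is the content of \cref{eq:iterated_def}, and path-independence of the primitive up to its endpoints is guaranteed by the integrability condition \cref{eq:DEQ_integrability}; closure under multiplication at the level of iterated integrals themselves is handled by the shuffle product to be reviewed in \cref{sec:iterated}, and until that point it is enough to note that products of iterated integrals can be left formally in the class. A technical caveat, flagged already in the footnote to \cref{eq:J0_sol}, is that if $x_0$ lies on a singular locus of the coefficient matrix then the quadrature must be regularised; this is orthogonal to the structural statement proved here and is deferred to \cref{sec:it_regularisation}.
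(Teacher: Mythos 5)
Your proposal is correct and follows essentially the same route as the paper: order the basis by sectors so that $A(x,\eps)$ is block upper-triangular, rotate each sector by its $\eps=0$ Wronskian $W_{\theta_i}(x)$ (whose entries are the maximal cuts at $\eps=0$), and solve the resulting $\mathcal{O}(\eps)$ system order by order by quadrature, each order contributing one further integration in the sense of \cref{eq:iterated_def}. You merely make explicit, as a double induction with a closure argument for the function class $\mathcal{C}$, what the paper presents informally by writing out the first two orders (\cref{eq:J0_sol,eq:J1_sol}) and noting that the pattern continues.
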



\subsection{Differential equations in canonical form}
\label{sec:can_deq}
The strategy spelled out in the previous section is very general, 
but it is often not the most efficient approach to find an explicit expression for the 
Feynman integrals.
It has been suggested in ref.~\cite{Henn:2013pwa} 
that there is a class of distinguished bases in which the differential equations take a particularly 
simple form. The existence of this form for the differential equation is still conjectural, 
though the conjecture is supported by all existing computations of multi-loop Feynman integrals. 
The idea is to find a transformation matrix $M(x,\eps)$ such that the matrix $A'(x,\eps)$ in 
eq.~\eqref{eq:gauge_transformation} is as simple as possible. The main conjecture is the following, 
first formulated in ref.~\cite{Henn:2013pwa} (in the slightly restricted setting discussed below):

\begin{conj}\label{conj:canonical}
For every IBP-basis $\vec \cI(x,\epsilon)$ satisfying the differential equation
\begin{equation}
    \rd \vec\cI(x,\epsilon)=A(x,\epsilon) \,\vec\cI(x,\epsilon)\,,
\end{equation} 
there is a (possibly transcendental) transformation to a new basis
$ \vec{\mathcal{I}}(x,\epsilon)=M(x,\eps)\,\vec{\mathcal{J}}(x,\epsilon)$ such that
\begin{equation}\label{eq:canonical_form}
     \rd \vec{\mathcal{J}}(x,\epsilon)=A'(x,\eps)\,\vec{\mathcal{J}}(x,\epsilon) = \eps\,\widetilde{A}(x)\,\vec{\mathcal{J}}(x,\epsilon)\,,
\end{equation} 
where $\widetilde{A}(x)$ is a matrix of one-forms with at most logarithmic singularities.
\end{conj}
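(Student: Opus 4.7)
The plan is to reduce the problem to a sector-by-sector analysis using the block upper-triangular structure already established, and then, within each sector, engineer the transformation matrix $M(x,\eps)$ by combining a rational reduction to an $\eps$-regular basis with a transcendental rotation by the Wronskian of the associated homogeneous system. First, invoking Proposition~\ref{prop:eps-regular}, I would assume without loss of generality that the starting IBP-basis $\vec{\cI}(x,\eps)$ is $\eps$-regular, so $A(x,\eps)=A_0(x)+A_1(x,\eps)$ with $A_1(x,\eps)=\mathcal{O}(\eps)$. The block structure then allows me to work on $A_{\theta_i}(x,\eps)$ one sector at a time; since the off-diagonal couplings $\vec{\mathcal{N}}_{\theta_i}$ only involve lower sectors already in canonical form, it suffices to produce a transformation $M_{\theta_i}(x,\eps)$ on each diagonal block and then upgrade it to a global transformation by a triangular correction that cancels the induced contributions from lower sectors (this last step is essentially linear algebra in the space of master integrals).

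Within a fixed sector $\theta_i$, the natural candidate for the transformation is $M_{\theta_i}(x,\eps)=W_{\theta_i}(x)\,R_{\theta_i}(x,\eps)$, where $W_{\theta_i}$ is the Wronskian of the homogeneous system~\eqref{eq:DEQ_sector_homogeneous} and $R_{\theta_i}$ is a residual rotation to be determined order by order in $\eps$. The key idea is that conjugation by $W_{\theta_i}$ already eliminates $A_{\theta_i,0}$ from the connection, leaving a matrix that vanishes at $\eps=0$, as made explicit in \cref{eq:DEQ_sector_J}. To promote this to the form $\eps\,\widetilde{A}(x)$, I would expand $A'_{\theta_i,1}(x,\eps)=\sum_{k\ge 1}\eps^k A_{\theta_i,1}^{\prime(k)}(x)$ and iteratively construct $R_{\theta_i}$ so as to absorb the higher-order terms $\eps^k A_{\theta_i,1}^{\prime(k)}$ for $k\ge 2$, using the standard gauge-transformation rule~\eqref{eq:gauge_transformation}. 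The candidate pure basis elements in $\vec{\mathcal{J}}_{\theta_i}$ are then Feynman integrals normalised by their leading singularities, which coincide with the entries of $W_{\theta_i}$ at $\eps=0$; the associated maximal cuts should be algebraic, matching the remark connecting $W_{\theta_i}$ to maximal cuts.

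The hard part is proving two closely related structural facts about the resulting $\widetilde{A}(x)$: first, that the coefficient of each $\rd x_i$ has only logarithmic singularities (no higher-order poles in the external scales, and no square-root or algebraic singularities beyond what can be absorbed into the choice of variables and into the prefactors in $M$); and second, that the iterative cancellation of higher $\eps$-orders actually closes, i.e.\ $R_{\theta_i}$ can be chosen with entries that are rational or at worst algebraic in $x$ and polynomial in $\eps$. Both facts are tied to the expectation that the underlying $D$-module has regular singularities at $\eps=0$ and that the maximal cuts at $\eps=0$ generate a period matrix whose logarithmic derivative is itself logarithmic. For sectors whose maximal cuts are algebraic (so $W_{\theta_i}$ is algebraic), this structure can be checked directly, and the full canonical form is obtained by iterating the construction through all sectors in the natural partial order.

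The main obstacle, and the reason the statement remains a conjecture, is precisely that neither of the two structural facts above is known to hold in general: sectors associated with elliptic curves, $K3$ surfaces, or higher-dimensional Calabi--Yau geometries produce Wronskians $W_{\theta_i}$ whose entries are genuinely transcendental periods, and the resulting connection on such a sector generically fails to be of $\rd\!\log$ type in the original variables. In those cases one must either enlarge the class of one-forms allowed in $\widetilde{A}(x)$ beyond $\rd\!\log$'s, or accept that the conjecture as stated only applies to \emph{polylogarithmic} sectors. Consequently, any honest proof sketch reduces to two ingredients that would have to be supplied externally: an algorithm (or existence argument) producing an algebraic $W_{\theta_i}$ in each sector, and a verification that the logarithmic-singularity property of $\widetilde{A}(x)$ is preserved under the inter-sector coupling by triangular corrections. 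These are exactly the steps where all current constructive approaches, based on leading singularities, $\rd\!\log$ integrands, and intersection-theoretic methods, concentrate their effort.
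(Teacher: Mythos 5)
There is no proof to compare against: the statement you were given is Conjecture~\ref{conj:canonical}, and the paper explicitly says that ``the existence of this form for the differential equation is still conjectural, though the conjecture is supported by all existing computations of multi-loop Feynman integrals.'' Your proposal, as you yourself concede in its final paragraph, is a strategy sketch rather than a proof: the two ``structural facts'' you isolate (that the residual connection after the Wronskian rotation has only logarithmic singularities, and that the order-by-order absorption of higher powers of $\eps$ closes with a controlled transformation matrix) are precisely the unproven content of the conjecture, and deferring them to be ``supplied externally'' means the argument does not establish the claim. That said, the scaffolding you build around the gap is faithful to the paper's own discussion: the reduction to an $\eps$-regular basis via Proposition~\ref{prop:eps-regular}, the sector-by-sector treatment using the block upper-triangular structure, the rotation by the Wronskian $W_{\theta_i}(x)$ of the homogeneous system (eq.~\eqref{eq:DEQ_sector_J}), and the connection to maximal cuts and leading singularities are all exactly the ingredients the paper assembles in sections~\ref{sec:solve_deq} and~\ref{sec:can_deq} when describing how canonical bases are sought in practice.

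One substantive correction: in your last paragraph you partially conflate the canonical form of eq.~\eqref{eq:canonical_form} with the canonical $\dlog$-form of eq.~\eqref{eq:dlog_form}. The conjecture as stated only requires $\widetilde{A}(x)$ to have at most logarithmic singularities; it does not require its entries to be $\dlog$'s of algebraic functions, nor does it require $W_{\theta_i}(x)$ or $M(x,\eps)$ to be algebraic (the transformation is explicitly allowed to be transcendental). Consequently, the existence of sectors with transcendental maximal cuts (elliptic, $K3$, Calabi--Yau) rules out a canonical $\dlog$-form but does not by itself contradict, or even obstruct, the conjecture: the paper points to explicit elliptic examples where a canonical form in the broader sense was found via a transcendental transformation, with $\widetilde{A}(x)$ built from closed one-forms involving modular forms. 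So the honest statement of the gap is not that the conjecture ``only applies to polylogarithmic sectors,'' but that even after allowing arbitrary transcendental $M(x,\eps)$ and non-$\dlog$ letters, there is no general argument that the $\eps$-dependence can always be made to factorise while keeping all singularities of $\widetilde{A}(x)$ logarithmic.
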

\noindent
If the new basis satisfies eq.~\eqref{eq:canonical_form} we say that the differential equations 
are in \emph{canonical form}, and we call the basis a \emph{canonical basis}.
Here, a \emph{differential one-form with logarithmic singularities} should be thought of as a 
differential one-form $\omega_i$ such that, in some appropriate choice of local coordinates 
$\xi = (\xi_1,\ldots,\xi_s)$, all singularities take the form 
\beq\label{eq:log_singularity}
\omega_i \sim \dlog p(\xi) + \cdots = \sum_{j=1}^s\rd\xi_j\,\frac{\partial_{\xi_j}p(\xi)}{p(\xi)}+ \cdots\,,
\eeq
where $p$ is an algebraic function of $\xi$ and the dots denote
power-suppressed terms that are regular at $p(\xi)=0$.

 The simplest examples of canonical differential equations are then those with 
 \beq\label{eq:dlog_form}
 \widetilde{A}(x) = \sum_{i}A_{i}\,\dlog p_i(x)\,,
 \eeq
 where the $A_i$ are constant matrices and $p_i(x)$ are algebraic functions of the external scales. 
 We refer to this case as a \emph{canonical dlog-form}. 
 This is the setup originally discussed in ref.~\cite{Henn:2013pwa}. 
 It is known that there are Feynman integrals for which one cannot obtain a system of differential 
 equations in canonical dlog-form as in eq.~\eqref{eq:dlog_form}. 
 To see this, we start by arguing that, if there is a canonical dlog-form, 
 it can necessarily be reached by an \emph{algebraic} transformation. 
 Indeed, there must be a transformation $M(x,\eps)$ such that
 \beq
 \eps\,\widetilde{A}(x) = M(x,\eps)^{-1}\big[A(x,\eps)M(x,\eps) - \rd M(x,\eps)\big]\,.
 \eeq
We know that $A(x,\eps)$ is a matrix of rational one-forms. The matrix $\widetilde{A}(x)$ 
in eq.~\eqref{eq:dlog_form} is a matrix of algebraic one-forms, i.e., its entries are of 
the form $\sum_{k=1}^s \rd x_k\,a_k(x)$, with the $a_k(x)$ algebraic functions 
(simply take $a_k(x) = {\partial_{x}p(x)}/{p(x)}$). This shows that $M(x,\eps)$ 
must itself be algebraic (unless there are some really unnatural cancellations). 
Since we know that the solutions to the homogeneous differential equations for the maximal cuts 
(cf.~eq.~\eqref{eq:max_cut_deq}) generically involve transcendental functions already 
for $\eps=0$, it follows that the differential equations cannot always be cast in the 
form~\eqref{eq:dlog_form}. A necessary condition for a canonical $\dlog$-form to exist 
is thus that the maximal cuts for $\eps=0$ (which are the solutions to the 
homogeneous equations~\eqref{eq:max_cut_deq}) are all algebraic functions.

\begin{rem} It is an interesting question if the previous condition is also sufficient 
and not only necessary. For all known examples for which the homogeneous equations 
can be solved in terms of algebraic functions, a canonical $\dlog$-form also exists. 
From a purely mathematical standpoint, however, this is not automatic, because 
the matrix $\widetilde{A}(x)$ may involve one-forms without any singularities. 
It would be interesting to understand if, for Feynman integrals, the fact that all 
maximal cuts for $\eps=0$ are algebraic is equivalent to the existence of a canonical $\dlog$-form.
\end{rem}

\begin{ex}[Canonical basis for one-loop bubble]
    $\vec{\mathcal{J}}(p^2,m_1^2;D)$ and $\vec{\mathcal{J}}(p^2,m_1^2,m_2^2;D)$, as defined in
    \cref{eq:cob2mb,eq:cob1mb}, are canonical bases for $D=2-2\epsilon$ dimensions as noted
    in \cref{eq:canBub}. We will in fact later show that they are canonical $\dlog$ bases. Furthermore,
    the factor of $2/(2-D)$ ensures that the bases are regular (see Proposition~\ref{prop:eps-regular}).
    Canonical $\dlog$ bases for the one-loop bubble can also be obtained in $D=4-2\epsilon$
    by squaring propagators, as is clear from \cref{eq:dimShiftBubUp} which relates the
    bubble in $2-2\epsilon$ dimensions to the one in $4-2\epsilon$ dimensions.
    More generally, it is known how to find canonical $\dlog$ bases for any one-loop 
    integral \cite{Abreu:2017enx,Abreu:2017mtm,Arkani-Hamed:2017ahv,Caron-Huot:2021xqj,Chen:2022fyw}.
\end{ex}

\subsubsection{Solving canonical differential equations}
\label{eq:solving_deqs}
One of the main advantages when working with differential equations in canonical form
is that they are particularly easy to solve as a Laurent expansion in $\eps$. 
To see this, consider the differential equation
\beq\label{eq:canonical_deq}
\rd \vec{\mathcal{J}}(x,\eps) = \eps\,\widetilde{A}(x)\,\vec{\mathcal{J}}(x,\eps)\,,
\eeq
and assume that we know the value of $\vec{\mathcal{J}}_0(\eps) = \vec{\mathcal{J}}(x_0,\eps)$ 
at some point $x=x_0$. For simplicity, we assume that $\vec{\mathcal{J}}(x,\eps)$ is regular at 
$x=x_0$. We will return later to the situation where $\vec{\mathcal{J}}(x,\eps)$ develops 
singularities at this point. In order to get the value at some other point $x$, 
consider a path $\gamma$ from $x_0$ to $x$. The value of $\vec{\mathcal{J}}(x,\eps)$ 
is then obtained by parallel transporting the solution from $x_0$ to $x$ along $\gamma$:
\beq
\vec{\mathcal{J}}(x,\eps) = \mathbb{W}(\gamma,\epsilon)\vec{\mathcal{J}}_0(\eps)\,,
\eeq
where $\mathbb{W}(\gamma,\epsilon)$ is given by the path-ordered exponential:
\beq\label{eq:path_ordered_def}
\mathbb{W}(\gamma,\epsilon) = \mathbb{P}\exp\!\left[\eps\,\int_{\gamma}\widetilde{A}(x')\right]\,.
\eeq
Note that $\mathbb{W}(\gamma,\epsilon)$ is actually a Wronskian matrix for the differential equation (that is, a matrix built out of a basis of general
solutions to the differential equation like $W_{\theta_i}(x)$ 
in \cref{eq:whomogeneous} ; unlike $W_{\theta_i}(x)$, however,
$\mathbb{W}(\gamma,\epsilon)$ solves the full differential equation and not just
the homogeneous part). 
In particular, it satisfies eq.~\eqref{eq:canonical_deq}:
\beq
\rd \mathbb{W}(\gamma,\epsilon) = \eps\,\widetilde{A}(x)\,\mathbb{W}(\gamma,\epsilon)\,.
\eeq

It is straightforward to obtain the
 $\eps$-expansion of the path-ordered exponential. 
As an example, consider the case where $\widetilde{A}(x) = \widetilde{A}_1\,\omega_1+\widetilde{A}_2\,\omega_2$, 
with $\omega_i$ one-forms and $\widetilde{A}_i$ constant matrices. We have:
\begin{align}
\nonumber\mathbb{W}(\gamma,\epsilon)&\,= \mathbf{1} + \eps\left(\widetilde{A}_1\,
\int_{\gamma}\omega_1+\widetilde{A}_2\,\int_{\gamma}\omega_2\right)\\
\label{eq:W_exp}&\,+\eps^2\left(\widetilde{A}_1^2\,\int_{\gamma}\omega_1\omega_1+
\widetilde{A}_1\widetilde{A}_2\,\int_{\gamma}\omega_2\omega_1+\widetilde{A}_2\widetilde{A}_1\,
\int_{\gamma}\omega_1\omega_2+\widetilde{A}_2^2\,\int_{\gamma}\omega_2\omega_2\right)\\
\nonumber&\,+\eps^3\left(\widetilde{A}_1^3\,\int_{\gamma}\omega_1\omega_1\omega_1+
\widetilde{A}_1^2\widetilde{A}_2\,\int_{\gamma}\omega_2\omega_1\omega_1+
\widetilde{A}_1\widetilde{A}_2\widetilde{A}_1\,\int_{\gamma}\omega_1\omega_2\omega_1+
\widetilde{A}_2\widetilde{A}_1^2\,\int_{\gamma}\omega_1\omega_1\omega_2\right.\\
\nonumber&\,\left.+\widetilde{A}_2^2\widetilde{A}_1\,\int_{\gamma}\omega_1\omega_2\omega_2+
\widetilde{A}_2\widetilde{A}_1\widetilde{A}_2\,\int_{\gamma}\omega_2\omega_1\omega_2+
\widetilde{A}_1\widetilde{A}_2^2\,\int_{\gamma}\omega_2\omega_2\omega_1+
\widetilde{A}_2^3\,\int_{\gamma}\omega_2\omega_2\omega_2\right)\\
\nonumber&\,+\mathcal{O}(\eps^4)\,.
\end{align}
We see that at every order in $\eps$ the Laurent coefficients are iterated integrals, 
in agreement with the observation made in Proposition~\ref{prop:iterated}. 

We can actually say a bit more if we work with a canonical basis: since the matrix $\widetilde{A}(x)$ 
has at most logarithmic singularities, the iterated integrals arising from the path-ordered exponential 
$\mathbb{W}(\gamma,\epsilon)$ will diverge at most logarithmically, and in particular they have no poles or 
power-like singularities. More precisely, if the differential one-forms $\omega_i$ diverge at most 
like in eq.~\eqref{eq:log_singularity}, then their iterated integrals will behave in the limit $p(\xi)\to0$ like
\beq
\int_{\gamma}\omega_1\cdots\omega_n \sim \sum_{k=0}^nc_k\,\log^kp(\xi) + \cdots\,,
\eeq
where the $c_k$ are constants, and the dots indicate contributions that are power-suppressed in the limit. 
Iterated integrals of this type are often called \emph{pure functions} in the physics literature~\cite{ArkaniHamed:2010gh}. 
We thus conclude
\begin{prop}\label{prop:pure}
Every order in the $\eps$-expansion of a canonical basis involves iterated integrals that are pure functions. 
\end{prop}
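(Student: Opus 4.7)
The plan is to exploit the explicit series expansion of the path-ordered exponential already displayed in eq.~\eqref{eq:W_exp}. Since $\vec{\mathcal{J}}(x,\eps) = \mathbb{W}(\gamma,\epsilon)\,\vec{\mathcal{J}}_0(\eps)$, and $\vec{\mathcal{J}}_0(\eps)$ is a constant vector (after choosing a regular base point $x_0$) that can itself be expanded as a power series in $\eps$, the claim reduces to showing that each coefficient of $\mathbb{W}(\gamma,\epsilon)$ in its $\eps$-expansion is an iterated integral that qualifies as a pure function in the sense defined just before the statement of the proposition.

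First, I would read off from the Dyson-like expansion that the coefficient of $\eps^n$ in $\mathbb{W}(\gamma,\eps)$ is a linear combination, with constant matrix coefficients, of iterated integrals $\int_\gamma \omega_{i_1}\cdots\omega_{i_n}$, where the $\omega_i$ are the matrix-valued entries of $\widetilde{A}(x)$. By hypothesis of canonical form, each $\omega_i$ has at most logarithmic singularities, i.e.\ near any singular locus $p(\xi)=0$ it behaves as in eq.~\eqref{eq:log_singularity}. So from the outset each individual letter is a pure one-form; the question is whether this purity is preserved under iterated integration.

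The core of the proof would be an induction on the length $n$ of the iterated integral. At $n=1$, $\int_\gamma \omega_i$ is the antiderivative of a $\dlog$-form (up to a regular piece), so near $p(\xi)=0$ it goes like $\log p(\xi)$ plus regular contributions, establishing the base case. For the inductive step, I would use the defining recursion
\begin{equation}
\rd\!\int_\gamma \omega_{i_1}\cdots\omega_{i_n} = \omega_{i_n}\,\int_\gamma \omega_{i_1}\cdots\omega_{i_{n-1}}\,,
\end{equation}
together with a local analysis: if the inner $(n-1)$-fold integral, by induction, behaves near $p(\xi)=0$ as $\sum_{k=0}^{n-1} c_k \log^k p(\xi)$ plus power-suppressed terms, and $\omega_{i_n}$ contributes at worst a $\dlog p(\xi)$, then the outer integration of $\log^k p(\xi)\,\dlog p(\xi)$ produces $\log^{k+1}p(\xi)/(k+1)$, and any regular-times-regular contributions integrate to something regular. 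Hence the result has only logarithmic divergences up to degree $n$ and no power-like singularities, giving exactly the behaviour required in the definition of a pure function just before Proposition~\ref{prop:pure}.

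The main obstacle I anticipate is the local analysis at singular loci: one must be careful that the $\omega_i$ may have logarithmic singularities along several irreducible components intersecting at the base or end point of $\gamma$, and the argument above must be made with a genuine local coordinate system in which each relevant letter is $\dlog p_i(\xi) + (\text{regular})$ simultaneously. A clean way around this is to first reduce to the $\dlog$ case on a suitable blow-up (or simply work with integrable symbols), and then transport the statement back. Once the local logarithmic bound is established component by component, combining it with the integrability condition~\eqref{eq:DEQ_integrability} ensures the iterated integrals are well-defined homotopy invariants of $\gamma$, and the purity conclusion follows globally, completing the proof.
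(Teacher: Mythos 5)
Your proposal is correct and follows essentially the same route as the paper: the paper's entire argument is the paragraph preceding the proposition, which expands the path-ordered exponential as in eq.~\eqref{eq:W_exp} and asserts that iterated integrals of one-forms with at most logarithmic singularities behave like $\sum_{k=0}^n c_k\log^k p(\xi)$ plus power-suppressed terms. The only difference is that you supply the length-$n$ induction (and flag the multi-component subtlety) that the paper leaves implicit.
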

Let us point out a major difference between Propositions~\ref{prop:iterated} and~\ref{prop:pure}. 
The fact that only pure functions arise in the $\eps$-expansion is tightly linked to a canonical basis, 
whose existence is still conjectural in the general case. Proposition~\ref{prop:iterated} is not conjectural, 
but it follows from direct considerations of how to solve the system of differential equations. 
By itself, however, Proposition~\ref{prop:iterated} does not allow us to conclude that the iterated integrals are pure.

\begin{rem} If a system of differential equations is in canonical form, then the integrability condition in eq.~\eqref{eq:DEQ_integrability} takes the form:
\beq
\eps\,\rd\widetilde{A}(x) -\eps^2\,\widetilde{A}(x)\wedge \widetilde{A}(x) = 0\,.
\eeq
Since this equation must hold for all values of $\eps$, the integrability condition takes a simpler form for systems in canonical from:
\beq\label{eq:DEQ_integrability_can}
\rd\widetilde{A}(x) = 0 \textrm{~~~~and~~~~} \widetilde{A}(x)\wedge \widetilde{A}(x) = 0\,.
\eeq
In particular, this shows that for a system of differential equations in canonical form, the matrix $\widetilde{A}(x)$ must be a matrix of \emph{closed} one-forms. This is manifestly the case for systems in canonical $\dlog$-form (because all $\dlog$-forms are closed). It was also observed to hold for non-$\dlog$ cases, see refs.~\cite{Adams:2018yfj,Broedel:2018rwm,Bogner:2014mha}.
\end{rem}

\begin{rem}
If the matrix $\widetilde{A}(x)$ satisfies the condition~\eqref{eq:DEQ_integrability_can},
the value of the path-ordered exponential does not depend on the choice of the path $\gamma$ (note that this
is in fact true for any basis: the integrability condition of \cref{eq:DEQ_integrability} is sufficient 
to guarantee path-independence of the solutions).  
More precisely, if $\gamma$ and $\gamma'$ are two paths that can be continuously deformed into 
one another without crossing any singularities, then 
$\mathbb{W}(\gamma,\epsilon) = \mathbb{W}(\gamma',\epsilon)$. 
We may then choose a path which makes the iterated integrals easy to evaluate in terms of known 
classes of special functions. In particular, we may write $\gamma$ as a composition of segments 
where all but one variable are constant. We refer to such a path as \emph{piecewise-constant}. 
On each segment the iterated integrals may be evaluated in terms of iterated integrals in one 
variable, which are easier to evaluate. The value of $\mathbb{W}(\gamma,\epsilon) $ can  be recovered 
as follows: If $\gamma_1$ and $\gamma_2$ are two paths such that the end-point of $\gamma_2$ 
coincides with the initial point of $\gamma_1$, we denote by $\gamma_1\gamma_2$ the path obtained 
by first traversing $\gamma_2$ and then $\gamma_1$. We then have the relation:
\beq
\mathbb{W}(\gamma_1\gamma_2,\epsilon) = \mathbb{W}(\gamma_1,\epsilon)\mathbb{W}(\gamma_2,\epsilon)\,.
\eeq
\end{rem}

Let us conclude this discussion by briefly mentioning how we can fix  the initial conditions. 
Clearly, one way is to know the value of the integral in one point. 
It is often useful to consider the value of the integral at some singular point, 
for instance where some scales vanish and the integral is expected to simplify.
If the differential equation is in canonical form, then all singularities are logarithmic. 
For example, close to the singularity described by eq.~\eqref{eq:log_singularity}, 
the differential equation takes the form
\beq\label{eq:singPointExp}
\rd \vec{\mathcal{J}}(\xi,\eps) \sim \eps\,\dlog p(\xi) \,A_p\,\vec{\mathcal{J}}(\xi,\eps)+\cdots\,,
\eeq
where $A_p$ is a constant matrix and the dots indicate terms that are power-suppressed 
in the limit $p(\xi)\to 0$. The solution to this equation is simply
\beq\label{eq:singPointSol}
\vec{\mathcal{J}}(\xi,\eps)\sim p(\xi)^{\eps A_p}\,\vec{\mathcal{J}}_p(\eps) +\cdots= R\,p(\xi)^{\eps J}\,R^{-1}\,\vec{\mathcal{J}}_p(\eps)+\cdots\,,
\eeq
where in the last step we have inserted the Jordan decomposition for the matrix $A_p = RJR^{-1}$ and $\vec{\mathcal{J}}_p(\eps)$ is related to the value of $\vec{\mathcal{J}}(\xi,\eps)$ in the limit $p(\xi)\to0$. The matrix $J$ is block diagonal:
\beq
J = \left(\begin{matrix}
J_1 & 0 & \cdots &0\\
0 & J_2 & \cdots &0\\
\vdots&&\ddots &\vdots\\
0&\cdots &0& J_r
\end{matrix}\right)\,, \qquad J_i = \underbrace{\left(\begin{matrix}
\lambda_i & 1 & 0 &\cdots &0\\
0 & \lambda_i &1& \cdots &0\\
\vdots &\vdots&&\ddots& \vdots\\
0&0&0&\cdots &\lambda_i
\end{matrix}\right)}_{r_i}\,,
\eeq
where the $\lambda_i$ are the eigenvalues of the matrix $A_p$ and $r_i$ is the size of the $i^{\textrm{th}}$ Jordan block.
The matrix exponential is then easy to carry out for each Jordan block $J_i$:
\beq
p(\xi)^{\eps J_i} = p(\xi)^{\lambda_i\eps}\,\left(\begin{matrix}
1 & \frac{\eps}{1!}\log p(\xi)& \frac{\eps^2}{2!}\log^2 p(\xi)& \cdots &\frac{\eps^{r_i-1}}{(r_i-1)!}\log^{r_i-1} p(\xi) \\
0&1 & \frac{\eps}{1!}\log p(\xi) & \cdots & \frac{\eps^{r_i-2}}{(r_i-2)!}\log^{r_i-2} p(\xi)  \\
&\vdots&&\ddots&\vdots\\
0&0&0&\cdots& 1
\end{matrix}\right)\,.
\eeq
We see that in a canonical basis we can easily predict the leading-power logarithmic behaviour close to the singularities. 
The asymptotic behaviour of Feynman integrals can also be analysed with independent techniques, 
like Mellin-Barnes integrals or the Strategy of Regions~\cite{Beneke:1997zp,Smirnov:2002pj,Pak:2010pt,Semenova:2018cwy}. 
By matching the two perspectives, we can fix the initial condition at the singular point. 
We will not say more about this approach here, 
and refer instead to the literature (cf.,~e.g.,~refs.~\cite{Henn:2013nsa,Dulat:2014mda}), 
because in many cases the initial condition can be fixed purely from physics input, 
without the need to explicitly evaluate the integrals in a limit. 
The idea is that the differential equation exposes \emph{all} the singularities that its solutions may have. 
The Feynman integrals, however, are very special solutions, and the structure of their singularities 
is very constrained from physical principles such as unitarity. In particular, a Feynman integral can only be singular 
if we go to a kinematic configuration where intermediate particles go on-shell. 
This implies that several singularities, present in the differential equation and therefore also in the general solution, 
must be absent in the particular solution we are looking for. 
It has been observed in many cases that by imposing this condition we are in fact able to determine
all the initial conditions (up to some simple one-scale integrals, that are simply a choice of normalisation
of the solution).

\begin{ex}[One-loop bubble with a single massive propagator]
    As an example of how to solve a differential equation in canonical form, we return to the example of the one-loop
    bubble integral. For simplicity, we fix $m^2_2=0$ and we will consider
    the canonical $\dlog$ basis $\vec{\mathcal{J}}(p^2,m_1^2;\epsilon)$ obtained by setting 
    $D=2-2\epsilon$ in \cref{eq:cob1mb}. We start by changing variables from $(p^2,m_1^2)$
    to $(u=p^2/m_1^2,m_1^2)$ and defining:
    \begin{equation}
        \vec{\mathcal{J}}(p^2,m_1^2;\epsilon)=(m_1^2)^{-\epsilon}\vec{\mathcal{J}}(u;\epsilon)\,.
    \end{equation}
    The differential equation satisfied by $\vec{\mathcal{J}}(u;\epsilon)$ is found to be
    \begin{equation}\label{eq:diffEqPureBub}
    \partial_{u}\,\vec{\mathcal{J}}(u;\epsilon)
    =\epsilon\left[
    \begin{pmatrix}
    -2&0\\
    0&0
    \end{pmatrix}\dlog(1-u)+
    \begin{pmatrix}
    1&-1\\
    0&0
    \end{pmatrix}\dlog u
    \right]\vec{\mathcal{J}}(u;\epsilon)\,,
    \end{equation}
    where we have made it explicit that it is in canonical $\dlog$ form.
    Since the single-scale tadpole integral only depends on $m_1^2$,
    this differential equation does not constrain ${\mathcal{J}}_2(u;\epsilon)$. This integral
    is considered trivial compared to the bubble, and was already quoted in \cref{eq:tad}.
    Adjusting the normalisation according to the basis in  \cref{eq:cob1mb}, it is:
    \begin{equation}\label{eq:J2_tad}
        {\mathcal{J}}_2(u;\epsilon)=e^{\gamma_E\epsilon}\Gamma(1+\epsilon)\,.
    \end{equation}
    Let us now focus our attention on solving the differential equation \eqref{eq:diffEqPureBub}
    order by order in $\epsilon$, and let ${\mathcal{J}}_{i,u_0}(\epsilon)$ be the initial condition for ${\mathcal{J}}_{i}(u;\epsilon)$ at $u=u_0$.  Equation~\eqref{eq:J2_tad} then implies $ {\mathcal{J}}_{2,u_0}(\epsilon)= e^{\gamma_E\epsilon}\Gamma(1+\epsilon)$, for all values of $u_0$. We define
    \begin{equation}
        {\mathcal{J}}_{i,u_0}(\epsilon)=\sum_{k=0}^\infty\eps^k\,{\mathcal{J}}^{(k)}_{i,u_0}\,.
    \end{equation} 
    The first two orders of $ {\mathcal{J}}_1(u;\epsilon)$ are:
    \begin{equation}\label{eq:solW1}
        {\mathcal{J}}_1(u;\epsilon)={\mathcal{J}}^{(0)}_{1,u_0}
        +\epsilon\left[ {\mathcal{J}}^{(1)}_{1,u_0}+
        \left(\mathcal{J}^{(0)}_{1,u_0}-{\mathcal{J}}^{(0)}_{2,u_0}\right)\int_{u_0}^u\frac{\rd t}{t}
        -2\,\mathcal{J}^{(0)}_{1,u_0}\int_{u_0}^u\frac{\rd t}{1-t}
        \right]+\mathcal{O}\left(\epsilon^2\right)\,.
    \end{equation} 
    Let us now note that \cref{eq:diffEqPureBub}  is singular
    at $u=0$, corresponding to $p^2=0$, hence any general solution to \cref{eq:diffEqPureBub} should have
    a logarithmic divergence at that point. This is clearly the case in \cref{eq:solW1}. The one-loop
    bubble, however, should only have singularities at the thresholds $m_1^2=0$ and $p^2=m_1^2$
    (see, e.g.,~ref.~\cite{SMatrix} for a textbook discussion of the singularities of Feynman integrals, and
    refs.~\cite{Gaiotto:2011dt,Abreu:2015zaa} for a more modern perspective that applies to
    Feynman integrals with massive propagators). It is clear that this implies that 
    $\mathcal{J}^{(0)}_{1,u_0}={\mathcal{J}}^{(0)}_{2,u_0}$, that is the leading term of the
    bubble integral should be equal to the leading term of the tadpole integral, which we have already
    computed above. It is in fact particularly convenient to choose $u_0=0$,
    and we will now use the strategy outlined above to determine
    the boundary condition to all orders in $\epsilon$
    (since this is a singular point of the differential equation,
    we refer the reader to \cref{sec:it_regularisation}, in particular \cref{eq:exampleRef},
    for a discussion on the regularisation of the integrals). In the neighbourhood of $u=0$
    we can expand the differential equation as in \cref{eq:singPointExp}:
    \begin{equation}
    \partial_{u}\,\vec{\mathcal{J}}(u;\epsilon)
    \sim\dlog u\begin{pmatrix}
    1&-1\\
    0&0
    \end{pmatrix}\vec{\mathcal{J}}(u;\epsilon)+\ldots\,.
    \end{equation}
    The solution is of the form of \cref{eq:singPointSol}, that is
    \begin{equation}
       \vec{\mathcal{J}}(u;\epsilon)
        \sim R\,u^{\epsilon J}\,R\,\begin{pmatrix}
        \mathcal{J}_{1,0}(\epsilon)\\
        \mathcal{J}_{2,0}(\epsilon)
        \end{pmatrix}+\ldots
        =\begin{pmatrix}
        \left(\mathcal{J}_{1,0}(\epsilon)-\mathcal{J}_{2,0}(\epsilon)\right)
        u^\epsilon+\mathcal{J}_{2,0}(\epsilon)\\
        \mathcal{J}_{2,0}(\epsilon)
        \end{pmatrix}+\ldots\,,
    \end{equation}
    where
    \begin{equation}
        R=\begin{pmatrix}
        1&1\\
        1&0
        \end{pmatrix}\,,\qquad
        J=\begin{pmatrix}
        0&0\\
        0&1
        \end{pmatrix}\,.
    \end{equation}
    The requirement that $\vec{\mathcal{J}}(u;\epsilon)$ should be
    regular at $u=0$ order by order in the $\eps$-expansion implies
    \begin{equation}
        \mathcal{J}_{1,0}(\epsilon)=\mathcal{J}_{2,0}(\epsilon) = e^{\gamma_E\epsilon}\Gamma(1+\epsilon)\,.
    \end{equation}
    The solution for the bubble integral is then:
    \begin{equation}
        {\mathcal{J}}_1(u;\epsilon)=1
        -2\,\epsilon\log(1-u)+\epsilon^2\left(\frac{\pi^2}{12}+2\log^2(1-u)+2\,\textrm{\emph{Li}}_2(u)\right)
        +\mathcal{O}\left(\epsilon^3\right)\,.
    \end{equation}
    After accounting for the different normalisation, 
    this expression is found to agree with the expansion of the all-order solution in \cref{eq:oneMBubAllOrder}.
\end{ex}

\subsubsection{Finding a canonical form}
The discussion of the previous sections makes it clear that it is of great 
advantage to bring a system of differential equations into canonical form. 
The existence of a canonical basis is however conjectural, and there is no 
general algorithm to find a transformation that brings a given system into 
canonical form. In fact, it is only known how to find a canonical basis for an 
arbitrary number of external legs and for arbitrary propagator masses in the 
case of one-loop integrals~\cite{Abreu:2017enx,Abreu:2017mtm,
Arkani-Hamed:2017ahv,Caron-Huot:2021xqj,Chen:2022fyw}.

Over the last couple of years substantial progress has been made in 
understanding how to find a canonical $\dlog$-form, if it exists. We have 
already argued that a necessary condition for a canonical $\dlog$-form to 
exist is that all maximal cuts at $\epsilon=0$ evaluate to algebraic 
functions. This is usually equivalent to requiring that the loop integrand can be 
completely localised via a sequence of residues. The algebraic 
functions obtained in this way are closely related to the maximal cuts of 
the integral at $\eps=0$, and they are often referred to as 
\emph{leading singularities}~\cite{Cachazo:2008vp}. It was conjectured in 
ref.~\cite{ArkaniHamed:2010gh} that an integral evaluates to a pure function 
if an only if it can be normalised in such a way that all its non-vanishing 
leading singularities are $\pm1$. This observation was at the heart of the 
original conjecture of ref.~\cite{Henn:2013pwa}, which states that a basis 
is canonical if and only if it has unit leading singularities. A very 
important strategy for finding a canonical $\dlog$-form therefore consists 
in finding a basis of master integrals with unit leading singularity. 
Closely related to integrals with unit leading singularity are Feynman 
integrals whose integrals can be written in terms of $\dlog$-forms with 
algebraic arguments, cf.~ref.~\cite{Henn:2020lye}. Since leading 
singularities are also closely related to maximal cuts, which themselves solve
the associated homogeneous system of differential equations 
(cf.~section~\ref{sec:solve_deq}), many strategies to finding a canonical 
basis start by solving these homogeneous equations 
(cf.,~e.g.,~\cite{Gehrmann:2014bfa}). There are several public computer 
codes that can be used  to find a canonical basis, for example 
\texttt{Fuchsia}~\cite{Gituliar:2017vzm}, 
\texttt{Canonica}~\cite{Meyer:2017joq}, \texttt{Libra}~\cite{Lee:2020zfb} 
and~\texttt{Initial}~\cite{Dlapa:2020cwj}, all of which are tailored
to slightly different situations. A method to find a canonical 
basis based on the computation of intersection numbers 
(see section~\ref{sec:intersection}) has also been 
proposed~\cite{Chen:2022fyw,Chen:2022lzr}. For large systems of integrals depending on
many scales, we find that a combination of all approaches is often
necessary to find a canonical basis, see e.g.~refs.~\cite{Abreu:2018rcw,Abreu:2018aqd,Chicherin:2018old,Abreu:2020jxa,Abreu:2021smk}.

In cases where the maximal cuts cannot be evaluated in terms of algebraic 
functions for $\eps=0$, no canonical $\dlog$-form can exist. While several 
examples exist where it was possible to find a canonical form 
nonetheless~\cite{Adams:2018yfj,Adams:2018bsn,Adams:2018kez,Broedel:2018rwm,
Bogner:2019lfa}, there is no systematic understanding of how to do so. 
In the future, it would be interesting to 
clarify the connection between generalisations of the concept of leading 
singularity to such cases~\cite{Bourjaily:2020hjv} and the existence of the 
canonical basis.


\section{Iterated integrals}
\label{sec:iterated}

It follows from Propositions~\ref{prop:iterated} and~\ref{prop:pure} that the natural class of functions that arise from the Laurent-expansion of dimensionally-regulated Feynman integrals are iterated integrals. In this section we present this class of functions in detail.

\subsection{General definitions}
\label{sec:iterated_def}

In the following we consider a geometric space $X$, and we always fix a set of local coordinates $\xi = (\xi_1,\ldots,\xi_s)$. If $\gamma$ is a path in $X$, and $\omega_1,\ldots,\omega_n$ are one-forms on $X$, we have already defined the iterated integral of $\omega_1\cdots\omega_n$ along $\gamma$ in eq.~\eqref{eq:iterated_def}. In the following it will be useful to refer to the one-forms $\omega_i$ as \emph{letters}
 and to $\omega_1\cdots\omega_n$ as a \emph{word} of length $n$. 
The set of all (independent) letters is called the \emph{alphabet}.
It will be useful to consider linear combinations of words
(unless stated otherwise, we always consider linear combinations with rational numbers as coefficients), and integration is linear:
\beq
\int_\gamma\left(\alpha\,\omega_1\cdots\omega_n + \beta\, \omega_1'\cdots\omega_m'\right) 
= \alpha\int_{\gamma}\omega_1\cdots\omega_n  +\beta\int_{\gamma}\omega_1'\cdots\omega_m'\,.
\eeq
By convention, it is useful to define the integral of the empty word to be $\int_{\gamma}()=1$.

Iterated integrals satisfy several general properties~\cite{ChenSymbol}:
\begin{enumerate}
\item \underline{Shuffle product:}
\beq
\int_{\gamma}\omega_1\cdots\omega_n\cdot \int_{\gamma}\omega_1'\cdots\omega_m' = \int_{\gamma}(\omega_1\cdots\omega_n)\shuffle (\omega_1'\cdots\omega_m')\,,
\eeq
where in the right-hand side we have introduced the shuffle product on words, defined recursively by
\beq\begin{split}
(\omega_1\cdots\omega_n)&\shuffle (\omega_1'\cdots\omega_m') \\
&\,= \omega_1\big((\omega_2\cdots\omega_n)\shuffle (\omega_1'\cdots\omega_m')\big) +\omega_1'\big((\omega_1\cdots\omega_n)\shuffle (\omega_2'\cdots\omega_m')\big)\,,
\end{split}\eeq
and $(\omega_1\cdots\omega_n)\shuffle () = ()\shuffle(\omega_1\cdots\omega_n) = (\omega_1\cdots\omega_n)$.
\item \underline{Path composition:} 
\beq\label{eq:path_composition}
\int_{\gamma_1\gamma_2}\omega_1\cdots\omega_n = \sum_{k=0}^n\int_{\gamma_1}\omega_1\cdots\omega_k\cdot \int_{\gamma_2}\omega_{k+1}\cdots\omega_n\,.
\eeq
\item \underline{Path reversal:} 
\beq
\int_{\gamma^{-1}}\omega_1\cdots\omega_n = (-1)^n \int_{\gamma}\omega_n\cdots\omega_1\,,
\eeq
where $\gamma^{-1}$ denotes the path $\gamma$ traversed in the opposite direction.
\end{enumerate}


\subsection{Homotopy invariance}

We have seen in section~\ref{sec:diffeqs} that whenever the matrix $A$ satisfies the integrability condition in eq.~\eqref{eq:DEQ_integrability}, then the solution does not depend on the details of the path $\gamma$, but only on its endpoints (as long as we do not cross any singularity). As a consequence, the iterated integrals that arise in the $\eps$-expansion should have the same property.
The iterated integrals defined in section~\ref{sec:iterated_def}, however, will in general depend on the details of the path $\gamma$, and not just on its endpoints. For example, consider the case $X = \mathbb{C}^2$ with coordinates $\xi = (\xi_1,\xi_2)$ and $\omega_i=\dlog \xi_i$. We consider the line segments
\beq\begin{split}
\gamma_1(t) = (1+(\xi_{10}-1)t,1)\,, &\quad \gamma_2(t) = (\xi_{10},1+(\xi_{20}-1)t)\,,\\
\gamma_3(t) =(1,1+(\xi_{20}-1)t)\,, &\quad \gamma_4(t) = (1+(\xi_{10}-1)t,\xi_{20})\,,
\end{split}\eeq
with $0\le t\le 1$.
Then $\gamma_{12}=\gamma_1\gamma_2$ and  $\gamma_{34}=\gamma_3\gamma_4$ are two paths from the point $(1,1)$ to the point $(\xi_{10},\xi_{20})$. These two paths clearly have the same endpoints, but the iterated integrals of the word $\omega_1\omega_2$ depend on the details of the paths. Indeed, using the path composition formula~\eqref{eq:path_composition}, one finds:
\beq
\int_{\gamma_{12}}\omega_1\omega_2= \log \xi_{10}\,\log\xi_{20} \textrm{~~~and~~~} \int_{\gamma_{34}}\omega_1\omega_2 = 0\,.
\eeq
If instead we consider the linear combination of words $w=\omega_1\omega_2+\omega_2\omega_1 = \omega_1\shuffle\omega_2$, then we have
\beq
\int_{\gamma_{12}} w = \int_{\gamma_{34}} w =\log \xi_{10}\,\log\xi_{20}\,.
\eeq
This illustrates the fact that the $\eps$-expansion of the path-ordered exponential can only furnish special linear combinations of words such that the iterated integrals only depend on the endpoints of the path, provided that the differential-equation matrix satisfies the integrability condition in eq.~\eqref{eq:DEQ_integrability}.
In the remainder of this section we describe a necessary and sufficient condition that these special linear combinations need to satisfy.

Two paths $\gamma_1$ and $\gamma_2$ that can be continuously deformed into each other while keeping the endpoints fixed are called \emph{homotopic}. Homotopy defines an equivalence relation on paths, and the equivalence classes are called \emph{homotopy classes}. An (iterated) integral that does not depend on the details of the path, but only on its endpoints, is called \emph{homotopy invariant}. The iterated integrals that arise from the $\eps$-expansion of the path-ordered exponential are homotopy invariant, provided that the differential-equation matrix satisfies the integrability condition~\eqref{eq:DEQ_integrability}.

In order to understand the criteria for homotopy invariance, let us start by understanding the case of length $n=1$. Consider a one-form $\omega_1$ and two homotopic paths $\gamma_1$  and $\gamma_2$ with the same endpoints. It is easy to see that the integrals of $\omega_1$ along the two paths agree if and only if
\beq
\int_{\gamma_1\gamma_2^{-1}}\omega_1 = 0\,.
\eeq
It is clear that $\gamma_1\gamma_2^{-1}$ is a closed curve. 
Consider now a surface $D$ whose boundary is $\partial D = \gamma_1\gamma_2^{-1}$. Stokes' Theorem implies
\beq
0 = \int_{\gamma_1\gamma_2^{-1}}\omega_1 = \int_{\partial D}\omega_1 = \int_{D}\rd \omega_1\,.
\eeq
If we want this  identity to hold for all paths, we see that  the integral of $\omega_1$ along a path is homotopy invariant if and only if $\omega_1$ is closed, $\rd\omega_1=0$.
Hence, for $n=1$, the criterion for homotopy invariance reduces to requiring that the form is closed, i.e., it is annihilated by the total differential. This criterion can in fact be generalised to higher length~\cite{ChenSymbol}: if $w$ is a linear combination of words, then the iterated integrals of $w$ are homotopy invariant, and thus independent of the details of the path, if and only if $Dw=0$, where $D$ is the differential that acts on words of one-forms $\omega_1\cdots\omega_n$ via
\beq\label{eq:bar_differential}
D(\omega_1\cdots\omega_n) = \sum_{i=1}^n\omega_1\cdots(\rd\omega_i)\cdots\omega_n + \sum_{i=1}^{n-1}\omega_1\cdots(\omega_i\wedge\omega_{i+1})\cdots\omega_n\,.
\eeq
A linear combination of words $w$ that is annihilated by this differential is 
called \emph{integrable}. We note that this is just another incarnation of the
integrability condition of \cref{eq:DEQ_integrability}. 
This discussion leads us to the following conclusion~\cite{ChenSymbol}:
\begin{prop}
An iterated integral is homotopy invariant if and only if the linear combination of words is integrable.
\end{prop}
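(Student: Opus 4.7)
The plan is to prove both implications simultaneously by computing, along an arbitrary smooth homotopy $H:[0,1]\times[0,1]\to X$ with fixed endpoints, the difference $f(1)-f(0)$, where $f(s) = \int_{\gamma_s} w$ and $\gamma_s(t) = H(s,t)$. I would express this difference as an integral over $[0,1]^2$ whose integrand depends linearly on $Dw$. Once such a formula is in hand, the ``if'' direction is immediate, and the ``only if'' direction follows by a standard localisation argument.

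To derive the key formula, first I would pull back each letter as $H^*\omega_i = a_i(s,t)\,\rd s + b_i(s,t)\,\rd t$, so that $f(s)$ becomes an integral over the $n$-simplex $\Delta_n = \{0\le t_1\le\cdots\le t_n\le 1\}$ of $\prod_i b_i(s,t_i)$. Differentiating in $s$ and substituting $\partial_s b_i - \partial_t a_i = c_i$, where $c_i$ is the coefficient of $\rd s\wedge\rd t$ in $H^*(\rd\omega_i)$, I would then integrate by parts in each $t_i$. Two families of terms should emerge. The bulk contributions from $c_i$ reassemble into iterated integrals of the words $\omega_1\cdots(\rd\omega_i)\cdots\omega_n$, matching the first sum in \cref{eq:bar_differential}; the boundary contributions from the faces $t_i = t_{i+1}$ of $\Delta_n$ produce precisely $a_i b_{i+1} - a_{i+1} b_i$ at the merged variable, which is the coefficient of $\rd s\wedge\rd t$ in $H^*(\omega_i\wedge\omega_{i+1})$ and matches the second sum of \cref{eq:bar_differential}. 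Collecting, $f(1)-f(0)$ is an integral over $[0,1]^2$ of a quantity built from $Dw$, extended by linearity to arbitrary linear combinations of words. The ``if'' direction then follows since $Dw=0$ makes the integrand vanish identically; for the converse, specialising $H$ to a bump deformation supported on a small coordinate patch provides enough freedom to force $Dw=0$ pointwise whenever the integral vanishes for all homotopies.

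The main obstacle will be the combinatorial bookkeeping in the integration by parts: matching the $\Delta_n$-face boundary terms $a_i b_{i+1} - a_{i+1} b_i$ with the wedge product $\omega_i\wedge\omega_{i+1}$, and tracking signs so that both families of terms line up precisely with the two sums in $D$. This is where one sees why the bar differential must take the exact form of \cref{eq:bar_differential} rather than some superficially similar variant. A secondary subtlety is that the $2$-form $\rd\omega_i$ cannot itself appear in an iterated integral of $1$-forms along a path, so the intermediate expressions have to be interpreted consistently as $2$-dimensional integrals on $[0,1]^2$ until the contributions from all $n$ letters reassemble into the final formula involving $Dw$.
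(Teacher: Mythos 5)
Your variational approach is in fact the standard one (it is essentially Chen's own proof), and it is a genuine extension of what the paper does: the paper only carries out the Stokes'-theorem argument for length $n=1$ and then cites Chen for the general statement, whereas you propose to prove it. The core of your plan is sound. Writing $H^*\omega_i = a_i\,\rd s + b_i\,\rd t$, using $\partial_s b_i-\partial_t a_i = $ (coefficient of $\rd s\wedge\rd t$ in $H^*(\rd\omega_i)$), and integrating by parts in each $t_i$ over the simplex does produce exactly the two families of terms in \cref{eq:bar_differential}: the bulk terms carry $\rd\omega_i$, and the internal faces $t_i=t_{i+1}$ carry $a_ib_{i+1}-a_{i+1}b_i$, which is the $\rd s\wedge\rd t$ coefficient of $H^*(\omega_i\wedge\omega_{i+1})$, with the signs working out as you describe. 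One small item to add to the bookkeeping: the boundary terms at the outer faces $t_1=0$ and $t_n=1$ must be discarded, and this uses that the homotopy fixes the endpoints, so $a_i(s,0)=a_i(s,1)=0$. With that, the ``if'' direction is complete: $Dw=0$ kills the integrand and $f(1)=f(0)$.

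The genuine gap is in the ``only if'' direction. Your variation formula does not express $f(1)-f(0)$ as the integral of $Dw$ ``evaluated pointwise''; each term is a product of (i) an iterated integral of the \emph{prefix} letters along $\gamma|_{[0,t^*]}$, (ii) the two-form slot of $Dw$ paired with the tangent plane of the bump at $\gamma(t^*)$, and (iii) an iterated integral of the \emph{suffix} letters along $\gamma|_{[t^*,1]}$. Localising with a bump therefore only tells you that a certain \emph{sum} of such products vanishes for all base paths and all bump orientations. To conclude that each two-form coefficient of $Dw$ vanishes separately — i.e., that $Dw=0$ as an algebraic element — you need the lower-length prefix/suffix iterated integrals to be linearly independent as functions of the partial paths, which is a nontrivial statement (the paper records a version of it as a separate proposition, with hypotheses on the letters) and is usually supplied by an induction on the word length. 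Without it, the correct conclusion of the localisation argument is only that $Dw$ vanishes modulo the relations satisfied by the shorter iterated integrals, which is why Chen's theorem is properly formulated in the (reduced) bar complex. You should either invoke such a linear-independence/induction step explicitly, or weaken the ``only if'' conclusion accordingly.
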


\begin{rem}
In the case where all one-forms in an integrable word $w$ are $\dlog$-forms, $\omega_i=\dlog a_i(\xi)$ for some algebraic functions $a_i(\xi)$, then this integrable word can be identified with the \emph{symbol} of the iterated integral $\int_{\gamma}w$, which has prominently appeared in the context of Feynman integrals and scattering amplitudes, cf. refs.~\cite{ChenSymbol,Goncharov:2009tja,Goncharov:2010jf,Brown:2009qja,Duhr:2011zq}. In that context the word $\omega_1\cdots\omega_n$ is usually written $a_1(\xi)\otimes\cdots\otimes a_n(\xi)$.
\end{rem}

\begin{rem}
If all the one-forms $\omega_i$ are closed, $\rd\omega_i=0$, then eq.~\eqref{eq:bar_differential} takes the simpler form:
\beq
D(\omega_1\cdots\omega_n) =  \sum_{i=1}^{n-1}\omega_1\cdots(\omega_i\wedge\omega_{i+1})\cdots\omega_n\,.
\eeq
This is in particular the case when all one-forms are $\dlog$-form, in which case the condition $Dw=0$ reduces to the well-known \emph{integrability condition for the symbol~$w$}, see for example ref.~\cite{Gaiotto:2011dt}. More generally, this is the case for iterated integrals that arise from differential equations in canonical from, cf.~eq.~\eqref{eq:DEQ_integrability_can}.
A special case is when $X$ is one-dimensional and all one-forms $\omega_i$ are holomorphic. Then not only are all one-forms on $X$ automatically closed, but we also have $\omega_i\wedge \omega_{i+1}=0$. Hence, if $X$ is one-dimensional, all words are integrable, and so all iterated integrals are homotopy invariant.
\end{rem}


\subsection{Regularisation}
\label{sec:it_regularisation}
We have argued that in applications it might be useful to choose a singular point to fix the initial condition of a system of (canonical) differential equations for Feynman integrals. However, if an endpoint of the path $\gamma$ is a singular point, then the iterated integrals arising from the expansion of the path-ordered exponential will typically be divergent. In this scenario, we need to replace the integrals that arise from expanding the path-ordered exponentials by a suitably regularised version. The regularisation should satisfy certain conditions. For example, it should agree with the naive definition of iterated integrals in section~\ref{sec:iterated_def} whenever the integral converges, and all algebraic properties (like the shuffle product, path composition formula, etc.) should apply also to the regularised version. In the remainder of this section we present such a regularisation, sometimes called \emph{shuffle-regularisation} or \emph{tangential base-point regularisation} (cf.,~e.g.,~ref.~\cite{DeligneTangential}).

We restrict the discussion to the case where all singularities are logarithmic (which is automatically the case for systems of differential equations in canonical form) and where the space $X$ is one-dimensional. This case is typically sufficient for applications, as we may always choose a path that is piecewise constant, and on each segment we can reduce the problem to a one-dimensional case. We therefore assume from now on, without loss of generality, that the path $\gamma$ goes from the origin $\xi=0$ to the point $\xi=x$, and some of the one-forms $\omega_i$ have a logarithmic singularity at the origin,
\beq\label{eq:omega_infty}
\omega_i = a_i\,\dlog \xi+\ldots\,, \quad a_i\in\mathbb{C}\,.
\eeq
We also assume that there is no other singularity on the integration contour.
The regularised value of $\int_{0}^x\omega_1\cdots \omega_n = \int_{\gamma}\omega_1\cdots \omega_n$ is then defined as follows:
\begin{enumerate}
\item We introduce a small cut-off $\varepsilon$, i.e., we replace the integral $\int_{0}^x\omega_1\cdots \omega_n$ by $\int_{\varepsilon}^x\omega_1\cdots \omega_n$. This integral is convergent for all $\varepsilon\neq 0$.
\item Since all singularities are logarithmic, the integral behaves in the limit $\varepsilon\to 0$ as
\beq
\int_{\varepsilon}^x\omega_1\cdots \omega_n = \sum_{k=0}^n I_k(x)\log^k\varepsilon+\mathcal{O}(\varepsilon)\,.
\eeq
\item The regularised value of $\int_{0}^x\omega_1\cdots \omega_n$ is then defined to be the constant term $I_0(x)$.
\end{enumerate}
It is easy to check that this regularisation satisfies all our requirements:
\begin{itemize}
\item If the original integral is convergent, then it agrees with the regularised version:
\beq
\int_{0}^x\omega_1\cdots \omega_n = \lim_{\varepsilon\to0} \int_{\varepsilon}^x\omega_1\cdots \omega_n = I_0(x)\,.
\eeq
\item The regularisation is consistent with the shuffle product, etc. We can perform all algebraic manipulations in a naive way, without having to worry about the regularisation. Indeed, $ \int_{\varepsilon}^x\omega_1\cdots \omega_n$ is convergent for $\varepsilon\neq0$, and so all naive manipulations apply. Moreover, the projection onto the constant term $I_0(x)$ is consistent with multiplication (the constant term of a product is the product of the constant terms), and so the regularisation and multiplication operations commute.
\end{itemize}

\begin{ex}\label{ex:log_reg}
Let us illustrate this on the example of the integral $\int_0^x\frac{\rd\xi}{\xi}$. Clearly, this integral is divergent. We introduce a cut-off, and consider the integral $\int_{\varepsilon}^x\frac{\rd\xi}{\xi} = \log x - \log\varepsilon$. The regularised value of the integral is then $\log x = \int_{1}^x\frac{\rd\xi}{\xi}$.
\end{ex}

There is one important issue we need to address: The regularised value obtained in this way depends on a `choice of regularisation scheme'. Indeed, we are free to rescale the cut-off by some non zero constant, $\varepsilon\to v\,\varepsilon$, and the regularised value will depend on the choice of $v$. This rescaling factor can be interpreted as the angle and speed of approach to the singularity. It is therefore natural to identify ${v}$ with a tangent vector of $\gamma$ at the point $x_0=0$ (called a \emph{tangential base-point}), and we will denote it by $\vec{v}_{x_0}=\vec{v}_0$. The regularised value $\int_{0}^x\omega_1\cdots \omega_n$ with the tangential base-point $\vec{v}_0$ at the origin is then denoted by $\int_{\vec{v}_0}^x\omega_1\cdots \omega_n$. Note that, if $\int_{0}^x\omega_1\cdots \omega_n$ converges, its value is independent of the choice of tangent vector.

\begin{ex}
Let us return to Example~\ref{ex:log_reg}. If we choose as a cut-off $v\,\varepsilon$, we have $\int_{v\,\varepsilon}^x\frac{\rd\xi}{\xi} = \log x-\log v - \log\varepsilon$. Hence, the shuffle-regularised value with respect to the tangent vector $\vec{v}_0$ is
\beq
\int_{\vec v_0}^x\frac{\rd \xi}{\xi} = \log x - \log v\,.
\eeq
\end{ex}

\begin{rem}
The previous discussion implies that whenever we want to fix the initial condition of a system of differential equations at a singular point, then we have to interpret the iterated integrals that arise from the $\eps$-expansion of the path-ordered exponential as their regularised versions. This regularisation then depends on a `scheme-choice', namely the choice of the tangent vector $\vec v_{x_0}$ at the initial point $x_0$, and the Laurent coefficients of the path-ordered exponential will depend explicitly on $\log v$. The final result for the integral can of course not depend on the scheme choice: the initial condition $\vec{\mathcal{J}}_0(\eps)$ will also depend on the choice of $v$ in such a way that the dependence cancels. It is
customary to choose $v=1$, so that no explicit logarithms of $v$ appear.
\end{rem}

The regularisation we just described seems to be very hard to implement in practice, because it requires one to introduce a cut-off, expand in the regulator $\varepsilon$ and then project onto the constant term. It is actually possible to obtain a closed formula for the shuffle-regularised version. We only present here the result, and refer to ref.~\cite{Brown:mmv}, section~4, for details. It is useful to define $\omega_i^{\infty} = a_i\,\dlog\xi$, so that $\omega_i = \omega_i^{\infty} + \cdots$  (cf.~eq.~\eqref{eq:omega_infty}). We then have~\cite{Brown:mmv}
\beq\begin{split}\label{eq:Brown_R}
\int_{\vec{v}_0}^x\omega_1\cdots \omega_n &\,= \sum_{k=0}^n \int_{\vec{v}_0}^x \omega_1^{\infty}\cdots \omega_k^{\infty}\, \int_0^xR[\omega_{k+1}\cdots\omega_n]\\
&\,= \sum_{k=0}^n \frac{a_1\cdots a_k}{k!}\,\log^k\frac{x}{v}\,\int_0^xR[\omega_{k+1}\cdots\omega_n]\,,
\end{split}\eeq
where the map $R$ is defined by
\beq
R[\omega_{1}\cdots\omega_n] = \sum_{k=0}^n(-1)^k\,( \omega_k^{\infty}\cdots \omega_1^{\infty})\shuffle (\omega_{k+1}\cdots\omega_n)\,.
\eeq
The map $R$ replaces the word $\omega_{1}\cdots\omega_n$ by a linear combination of words so that their iterated integral is convergent over the whole range $[0,x]$. 

\begin{rem} Equation~\eqref{eq:Brown_R} takes a very simple form if there is a single one-form that diverges at $\xi=0$. In that case eq.~\eqref{eq:Brown_R} is equivalent to unshuffling all the occurences of this one-form. For example, for $\omega_1=\dlog \xi$ and $\omega_2 = \dlog (1-\xi)$, eq.~\eqref{eq:Brown_R} is equivalent to the well known shuffle-regularisation formula:
\beq\begin{split}\label{eq:exampleRef}
\int_{\vec{v}_0}^{x}\omega_1\omega_2 &\,= \int_{\vec{v}_0}^{x}\omega_1 \int_{\vec{v}_0}^{x}\omega_2 - \int_{\vec{v}_0}^{x}\omega_2\omega_1\\
&\,= \log\frac{x}{v}\int_{0}^{x}\omega_1  - \int_{0}^{x}\omega_2\omega_1\\
&\,= \log\frac{x}{v} \log(1-x)  + \textrm{Li}_2(x)\,,
\end{split}\eeq
where $\textrm{Li}_2(x)$ is the dilogarithm function, defined for $|x|<1$ by
\beq
\textrm{Li}_2(x) = \sum_{n=1}^{\infty}\frac{x^n}{n^2} = -\int_0^x\frac{\rd \xi}{\xi}\,\log(1-\xi)\,.
\eeq
\end{rem}

\begin{rem} The discussion in this section only applies to logarithmic singularities, which is sufficient to cover systems of differential equations in canonical form. If also one-forms with higher-order poles are present, the regularisation becomes more involved. For a discussion of this more general case, see ref.~\cite{matthes2021iterated}.
\end{rem}


\subsection{Linear independence of iterated integrals}
In applications one is often interested in knowing that the special functions introduced to express the answer are (linearly) independent. Indeed, working with an independent set of objects often leads to shorter analytic expressions that are free of hidden cancellations. The goal of this section is to state a linear independence result for iterated integrals (in particular those from systems of differential equations in canonical form). The mathematical background can be found in refs.~\cite{ChenSymbol,DDMS}.

Let us consider a system with $A(x,\eps) = \eps\,\widetilde{A}(x) = \eps\sum_i\widetilde{A}_i\omega_i$ (we follow the notation of section~\ref{sec:diffeqs}, but we do not explicitly require the $\omega_i$ to have only logarithmic singularities for the purpose of this section). At every order of the $\eps$-expansion, the path-ordered exponential will involve iterated integrals of words in the one-forms $\omega_i$. Clearly, if we work with arbitrary one-forms $\omega_i$, these iterated integrals will not be independent. For example, every linear relation between the $\omega_i$ will induce linear relations among the iterated integrals. Moreover, if a one-form is a total derivative, $\omega_i = \rd f$ for some function $f$, then we can integrate out $\omega_i$, again leading to relations among the iterated integrals. Here we have to make an important comment: of course, we can always locally find a primitive $f$ for every one-form $\omega$, and this $f$ will in general be a transcendental function. In applications, however, we are typically interested only in primitives taken from some restricted subalgebra $\mathcal{C}$ of functions such that in local coordinates $\omega_i = \sum_jc_{ji}(\xi)\,\rd\xi_j$ with $c_{ij}(\xi)\in\mathcal{C}$. For example, in the case of $\dlog$-forms $\omega_i=\dlog a_i(\xi)$ with algebraic $a_i(\xi)$, we would take $\mathcal{C}$ to be the field of algebraic functions in $\xi$ (in the case of non-$\dlog$-forms non-algebraic functions may also appear, such as modular forms). More generally, we will then say that the $\omega_i$ are linearly dependent with respect to $\mathcal{C}$ if there exists a function $f\in\mathcal{C}$ and constants $\alpha_i$ not all zero such that
\beq\label{eq:linear_dependence}
\sum_{i}\alpha_i\,\omega_i = \rd f\,.
\eeq
Linear independence with respect to $\mathcal{C}$ is defined in the obvious way, such that every identity of the form~\eqref{eq:linear_dependence} implies $\alpha_i=0$ for all $i$. 
It turns out that this linear independence of the one-forms is sufficient to guarantee that iterated integrals are independent \emph{as functions}.

\begin{prop}
The iterated integrals arising from the path-ordered exponential are linearly independent over $\mathcal{C}$ as functions if and only if the one-forms $\omega_i$ are linearly independent with respect to $\mathcal{C}$.
\end{prop}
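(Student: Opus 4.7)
The plan is to handle the two implications separately. The necessary direction is direct: suppose the iterated integrals $F_w := \int_\gamma w$ are $\mathcal{C}$-linearly independent. If the $\omega_i$ were $\mathcal{C}$-linearly dependent, say $\sum_i \alpha_i\,\omega_i = \rd f$ for constants $\alpha_i$ not all zero and $f \in \mathcal{C}$, then integrating along $\gamma$ from the base point $x_0$ to $x$ would yield $\sum_i \alpha_i F_{\omega_i}(x) = f(x) - f(x_0)$, i.e., the nontrivial $\mathcal{C}$-linear relation $\sum_i \alpha_i F_{\omega_i} - f\,F_\emptyset + f(x_0)\,F_\emptyset = 0$ among the length-$\leq 1$ iterated integrals, contradicting the hypothesis.

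For the substantive direction, the plan is to argue by induction on the maximum length $N$ of words appearing with nonzero coefficient in a hypothetical $\mathcal{C}$-linear relation $\sum_{|w|\leq N} c_w F_w = 0$. The base case $N = 0$ is immediate. For $N \geq 1$, I apply $\rd$ to the relation and use the derivation rule $\rd F_w = \omega_{\text{last}(w)}\,F_{w^-}$, where $w^-$ denotes $w$ with its terminal letter deleted, to obtain the one-form identity
\[
\sum_{|v|=N}(\rd c_v)\,F_v \;+\; \sum_{|v|<N}\Bigl(\rd c_v + \sum_j c_{v\omega_j}\,\omega_j\Bigr)\,F_v = 0.
\]
Granted that each one-form coefficient in this identity vanishes individually, for $|v| = N$ one reads off $\rd c_v = 0$, so each top-length $c_v$ is a constant, and for $|v| = N - 1$ one obtains $\sum_j c_{v\omega_j}\,\omega_j = \rd(-c_v)$, i.e., a combination of the $\omega_j$ with constant coefficients equal to the exterior derivative of a function in $\mathcal{C}$. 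The $\mathcal{C}$-linear independence hypothesis on the $\omega_j$ then forces $c_{v\omega_j} = 0$ for all $j$ and all $v$ with $|v| = N - 1$, so every length-$N$ coefficient in the original relation vanishes, and the inductive hypothesis on $N$ closes the argument.

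The hard part is justifying that each one-form coefficient in the displayed identity vanishes separately, since this is essentially a form-valued refinement of the proposition itself. My plan is to establish it by a parallel induction that treats $k$-form coefficients in tandem with the length induction: applying $\rd$ to a $k$-form relation produces a $(k+1)$-form relation of the same maximum length, and the bootstrap closes because at the top form degree $k = \dim X$ one can work in local coordinates and divide by a nowhere-vanishing volume form to recover a scalar relation, at which point the induction hypothesis on $N$ applies. An alternative route, which I would adopt in a polished write-up, is to package all iterated integrals into the generating series $T = \sum_w F_w\,X_w$ with noncommutative formal variables $X_w = X_{i_1}\cdots X_{i_n}$ satisfying the universal equation $\rd T = \Omega\,T$ for $\Omega = \sum_i \omega_i X_i$, reducing the problem to the injectivity of the evaluation map from Chen's shuffle Hopf algebra to the algebra of functions on $X$; this injectivity follows from Radford's structure theorem together with the direct transfer of the $\mathcal{C}$-linear independence of the $\omega_i$ to the noncommutative letters in $\Omega$.
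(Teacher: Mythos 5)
The paper does not actually present a proof of this proposition: it defers entirely to ref.~\cite{DDMS} (with a footnote that the reference treats only one variable). Your overall skeleton --- the easy ``only if'' direction by integrating the relation $\sum_i\alpha_i\omega_i=\rd f$, and an induction on the maximal word length $N$ in which one differentiates the putative relation, reads off $\rd c_v=0$ at length $N$, and then feeds $\sum_j c_{v\omega_j}\omega_j=\rd(-c_v)$ with constant $c_{v\omega_j}$ into the independence hypothesis on the $\omega_j$ --- is exactly the structure of the argument in that reference, and the displayed one-form identity and its consequences are computed correctly. However, the step you yourself flag as the hard part is genuinely not established by either of your two proposed routes. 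Route one fails already for $\dim X=1$: there the one-form relation is at top form degree from the start, and dividing by the volume form returns a scalar relation whose top-length coefficients are the $c_v'$ for $\lvert v\rvert=N$ --- a relation still of maximal length $N$, to which the induction hypothesis (which concerns length $\le N-1$) does not apply; the argument is circular. Route two is not a proof: Radford's theorem describes the shuffle algebra as an abstract polynomial algebra on Lyndon words, but says nothing about the injectivity of the realisation $w\mapsto\int_\gamma w$ into functions, which is precisely the statement at issue; ``transferring'' independence of the $\omega_i$ to the formal letters $X_i$ is vacuous since formal letters are automatically independent.

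The missing ingredient is an extremality argument, which is how ref.~\cite{DDMS} closes the loop. Among all nontrivial $\mathcal{C}$-relations, choose one of minimal top length $N$ and, among those, with the minimal number of nonzero length-$N$ coefficients; since $\mathcal{C}$ is a field you may normalise one such coefficient to $1$. Each scalar component (in local coordinates $\partial_{\xi_j}$) of your differentiated identity is then a $\mathcal{C}$-relation of maximal length $\le N$ in which the normalised top coefficient has become $0$, i.e.\ a relation with strictly fewer nonzero length-$N$ coefficients. By minimality it must be the trivial relation, and \emph{this} is what entitles you to set every coefficient in your displayed identity to zero individually. From there your conclusion ($\rd c_v=0$ at length $N$, hence constants; then $c_{v\omega_j}=0$ by the hypothesis on the $\omega_j$; then descent to length $\le N-1$) goes through verbatim. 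You should also record the standing hypotheses this uses: $\mathcal{C}$ must be a differential field (closed under the partial derivatives appearing in $\rd c_w$) containing the coefficient functions of the $\omega_i$, which is satisfied in the paper's intended setting of algebraic functions.
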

The proof of this statement can be found in ref.~\cite{DDMS}.\footnote{Note that ref.~\cite{DDMS} strictly only considers the case of a single variable. It is not difficult to adjust the proof to more variables.}

\begin{rem} It is important to understand that we consider the linear independence of the iterated integrals as functions. If we consider evaluations at special values, there may be additional relations. For example, if $\omega_1=\dlog\xi$ and $\omega_2=\dlog(1-\xi)$, then the iterated integrals $\int_{\vec{1}_0}^x \omega_1\omega_2$ and $\int_{\vec{1}_0}^x \omega_2\omega_1$ are linearly independent as functions of $x$. If we evaluate these integrals at $x=1$, we find
\beq
\int_{\vec{1}_0}^1 \omega_1\omega_2 = \frac{\pi^2}{6} = - \int_{\vec{1}_0}^1 \omega_2\omega_1\,.
\eeq
\end{rem}


\subsection{Iterated integrals and Feynman integrals}
We have already seen that Propositions~\ref{prop:iterated} and~\ref{prop:pure} imply that iterated integrals naturally arise from differential equations satisfied by dimensionally-regulated Feynman integrals.
In applications it is often desirable to relate the results of a computation to known definitions of special functions which have been studied independently in the literature, e.g., there may be computer codes and algorithms for their manipulation and evaluation. There are several classes of iterated integrals that are well studied in mathematics and physics, which we briefly review in this section. Before doing so, however, we note that in some applications it is beneficial to consider classes of iterated integrals that are specific to a given kinematic configuration, such as pentagon functions for five-point Feynman integrals \cite{Gehrmann:2018yef,Chicherin:2020oor,Chicherin:2021dyp} or hexagon functions for specific six-point integrals \cite{Dixon:2013eka,Caron-Huot:2019bsq}, even if these very specialised iterated integrals can be related to more general classes of iterated integrals.

The arguably most prominent class of iterated integrals that arise from Feynman integral computations are \emph{multiple polylogarithms} (MPLs). They have appeared in mathematics a few hundred years ago (cf.,~e.g.,~refs.~\cite{Leibniz:1696,Abel_Oeuvres,Kummer,Lappo:1927}). Over the last decades, they resurfaced in both pure mathematics~\cite{Goncharov:2009tja,Brown:2009qja,Goncharov:2010jf,GoncharovMixedTate,Goncharov:1998kja} and physics~\cite{Remiddi:1999ew,Gehrmann:2000zt,Blumlein:2000hw,Blumlein:2009fz,Blumlein:2009ta}.
MPLs can be defined as the iterated integrals
\beq\label{eq:MPL_def}
G(a_1,\ldots,a_n;x) = \int_0^x\frac{\rd t}{t-a_1}\,G(a_2,\ldots,a_n;t)\,,
\qquad \textrm{with~} G(\,;x) \equiv 1.
\eeq
Note that if $a_n=0$, this integral is divergent, and we need to regularise it. This is typically done by introducing a tangential base-point $\vec{1}_0$ for the lower integration boundary. This is equivalent to the special definition
(see \cref{eq:Brown_R})
\beq
G(\underbrace{0,\ldots,0}_{n};x) = \frac{1}{n!}\,\log^nx\,.
\eeq
MPLs contain the ordinary logarithm and the classical polylogarithms as special cases, e.g., if $a\neq0$,
\beq
G(a;x) = \log\left(1-\frac{x}{a}\right) \textrm{~~and~~} G(\underbrace{0,\ldots,0}_{n-1},a;x) = -\textrm{Li}_n\left(\frac{x}{a}\right)\,.
\eeq
It is easy to check that MPLs are in fact the prime examples of pure functions (see section~\ref{sec:can_deq}).
The properties of MPLs, in particular their algebraic structure and the relations they satisfy, are well understood. An important tool in the study of MPLs is their so-called {symbol} and their coaction/coproduct, cf.~refs.~\cite{ChenSymbol,Goncharov:2009tja,Goncharov:2010jf,Brown:2009qja,Duhr:2011zq,Goncharov:2005sla,brownmixedZ,Brown:2011ik,Duhr:2012fh,Duhr:2014woa}. Moreover, there are several publicly available computer tools to manipulate and evaluate MPLs~\cite{Panzer:2014caa,Gehrmann:2001pz,Gehrmann:2001jv,Vollinga:2004sn,Ablinger:2009ovq,Buehler:2011ev,Ablinger:2013cf,Frellesvig:2016ske,Ablinger:2018sat,Duhr:2019tlz,Naterop:2019xaf,Wang:2021imw}.

It has been known from the early days of quantum field theory that not all Feynman integrals can be expressed in terms of MPLs. Indeed, A. Sabry discovered new functions of elliptic type in the calculation of the two-loop corrections
to the electron propagator in QED with massive electrons already in 1962~\cite{Sabry}. Since then, many other Feynman integrals have been identified that cannot be expressed in terms of MPLs~\cite{Aglietti:2007as,Mistlberger:2018etf,Adams:2018bsn,Adams:2018kez,Bogner:2019lfa,Broadhurst:1987ei,Bauberger:1994by,Bauberger:1994hx,Caffo:1998du,Laporta:2004rb,Kniehl:2005bc,Caffo:2008aw,Brown:2010bw,Muller-Stach:2011qkg,MullerStach:2011ru,CaronHuot:2012ab,Nandan:2013ip,Czakon:2013goa,Brown:2013hda,Remiddi:2013joa,Bloch:2013tra,Adams:2013kgc,Adams:2015gva,Adams:2015ydq,Adams:2014vja,Bloch:2016izu,Adams:2016xah,Remiddi:2016gno,vonManteuffel:2017hms,Broedel:2017siw}. There is a natural class of iterated integrals that generalise MPLs to functions of elliptic type, called elliptic multiple polylogarithms (eMPLs)~\cite{LevinRacinet,BrownLevin}. eMPLs have first appeared in physics in the context of one-loop scattering amplitudes in string theory~\cite{Broedel:2014vla,Broedel:2018izr,Broedel:2015hia,Broedel:2017jdo}, but they have also been used to express several multi-loop Feynman integrals that cannot be expressed in terms of eMPLs~\cite{Broedel:2019hyg,Broedel:2018qkq,Blumlein:2018jgc,Bezuglov:2020tff,Campert:2020yur,Kristensson:2021ani} (see refs.~\cite{Bloch:2013tra,Adams:2013kgc,Adams:2015gva,Adams:2015ydq,Ablinger:2017bjx,Blumlein:2018aeq} for alternative definitions of elliptic generalisations of polylogarithmic functions that are closely related to eMPLs). Closely related to eMPLs are iterated integrals of modular forms~\cite{Brown:mmv,ManinModular}, including meromorphic modular forms~\cite{matthes2021iterated}, which have also appeared in the context of Feynman integrals~\cite{Broedel:2018rwm,Adams:2017ejb,Broedel:2018iwv,Broedel:2019kmn,Abreu:2019fgk,Broedel:2021zij}. 
Iterated integrals of holomorphic modular forms and eMPLs are examples of pure functions~\cite{Broedel:2018qkq}, and the corresponding Feynman integrals satisfy differential equations in canonical form~\cite{Adams:2018yfj,Broedel:2018rwm,Bogner:2019lfa} (though it is not possible to bring them into canonical $\dlog$-form). The properties of these functions are not yet as well understood as in the case of MPLs, although substantial progress was made in recent years to understand their algebraic properties and numerical evaluation, see,~e.g.,~refs.~\cite{Bogner:2017vim,Duhr:2019rrs,Broedel:2019tlz,Walden:2020odh}.

MPLs, eMPLs and iterated integrals of modular forms are not the only classes of iterated integrals that have been observed to arise from Feynman integral computations. Other classes of special functions have been discovered~\cite{Lee:2019wwn,Adams:2018bsn,Adams:2018kez,Ablinger:2013cf,Remiddi:2016gno,Blumlein:2018jgc,Ablinger:2017bjx,Ablinger:2011te,Ablinger:2014bra,Remiddi:2017har,Chen:2017soz,Kniehl:2019vwr,Lee:2020obg,Lee:2020mvt,Bezuglov:2020ywm,Bezuglov:2021jou,Bezuglov:2021tax,Badger:2021owl,Kreimer:2022fxm}, even though for some of them it is known by now that they are related to the classes of functions discussed above. A completely new class of special functions has recently been discovered that is related to higher-dimensional generalisations of elliptic curves known as Calabi-Yau manifolds~\cite{Klemm:2019dbm,Bonisch:2020qmm,Bloch:2016izu,Bloch:2014qca,Bourjaily:2018ycu,Bourjaily:2018yfy,Bourjaily:2019hmc}, with new classes of iterated integrals which generalise the iterated integrals of modular forms to Calabi-Yau varieties of higher dimension~\cite{Bonisch:2021yfw}.

\paragraph{MPLs and systems in canonical $\dlog$-forms.}
Let us conclude this discussion of iterated integrals that arise from Feynman integrals by commenting on iterated integrals of $\dlog$-forms, i.e., iterated integrals of words of letters of the form $\dlog a_i(\xi)$, with $a_i(\xi)$ an algebraic function. So far in all cases relevant to Feynman integrals these algebraic functions only involve square roots.\footnote{It is  interesting to ask if  algebraic functions not involving square roots can also arise in Feynman integral computations.}

It is clear that iterated integrals of $\dlog$-forms are closely related to MPLs.
Indeed, if the $a_i(\xi)$ are linear functions, then we immediately reproduce the definition in eq.~\eqref{eq:MPL_def}. In many cases it is still possible to evaluate the resulting iterated integrals in terms of MPLs, even if the $a_i(\xi)$ are more general rational functions or even involve square roots.
\begin{itemize}
\item If all the $a_i(\xi)$ are rational functions of $\xi$, we can evaluate the iterated integrals in terms of MPLs in an algorithmic fashion. Indeed, we can assume without loss of generality that the $a_i(\xi)$ are polynomials. We can choose a piecewise constant path to evaluate the path-ordered exponential (cf. section~\ref{eq:solving_deqs}). On the segment $\gamma_{i_0}$ where all elements of $\xi$ but $\xi_{i_0}$ are constant, we can factor each polynomial $a_i(\xi)$ into linear factors in $\xi_{i_0}$, $a_i(\xi) = c_{i}\,\prod_{k}(\xi_{i_0}-\xi_{i_0,k})$, and we obtain only $\dlog$-forms depending on linear arguments:
\beq
\gamma_{i_0}^*\dlog a_i(\xi) = \sum_k\dlog(\xi_{i_0}-\xi_{i_0,k})\,.
\eeq
\item If the $a_i(\xi)$ involve square roots, the situation is different. We have to distinguish two cases. First, if we can find a change of variables $\xi = \varphi(\chi)$ that rationalises all square roots, then the functions $a_i(\varphi(\chi))$ are rational in $\chi$, and we reduce the problem to the previous case. In some cases it is possible to find such a change of variables in an algorithmic manner, or to show that is does not exist cf.,~e.g.,~refs.~\cite{Festi:2018qip,Besier:2018jen,Besier:2019hqd,Besier:2019kco,Besier:2020klg,Besier:2020hjf,Festi:2021tyq}. If not all square roots can be rationalised, then one cannot conclude a priori if a representation in terms of MPLs (with algebraic arguments) exists. Indeed, in ref.~\cite{Brown:2020rda} an explicit example of a double-iterated integral of $\dlog$-forms with a square root was constructed that \emph{cannot} be expressed in terms of MPLs. So far, however, all instances of systems in canonical $\dlog$-form with non-rationalisable square roots in the context of Feynman integrals can still be evaluated in terms MPLs, cf.~refs.~\cite{Heller:2019gkq,Kreer:2021sdt,Duhr:2021fhk}.
\end{itemize}


\section{Intersection theory for Feynman integrals}
\label{sec:intersection}

The previous sections have discussed the evaluation of Feynman integrals by fairly direct approaches. Recently there has been some interest in applications of the mathematics going by the name of intersection theory, which deals with certain multivalued integrals. One area of application is an alternative approach to computing IBP reduction coefficients, which might help to overcome some of the shortcomings of current approaches. 
Others involve an algebraic operation called the coaction, which can expose the behaviour of functions under differentiation or operations computing discontinuities.

In this section, we review the formalism of intersection theory for a class of integrals that includes Feynman integrals in dimensional regularisation, as well as the Euler-type hypergeometric integrals that arise upon their evaluation. 
A principal classic reference for the mathematics is ref.~\cite{AomotoKita}. For recent treatments including applications to physics, see refs.~\cite{Weinzierl:2022eaz,Mizera:2019gea,Mizera:2019ose,Cacciatori:2021nli}.

An integral is a pairing between two objects, the integrand and the contour of integration. Denoting these objects respectively by $\omega$ and $\gamma$, the integral is represented as 
\begin{equation}
\int_\gamma \omega.
\end{equation}
We will consider cases where
the differential form $\omega$ is a holomorphic differential form of degree $n$,
and the real dimension of the integration contour is $n$, within the complex projective space ${\mathbb{P}^{n}(\mathbb{C})}$. 

The integral remains unchanged if we add a surface term to the integrand, and $\omega$ is properly understood as a cohomology class, or cocycle. Likewise, the integration contour $\gamma$ is a homology class, or cycle. 
In the types of integrals with which we are concerned, the integrand involves multivalued functions. This can be seen in the parametric
representations of \cref{sec:paramRep}. For instance, 
let us consider the Baikov representation of a Feynman integral in eq.~\eqref{eq:Baikov}. The Baikov polynomial is raised to the power $\frac{D-K-1}{2} = n+\frac{\mu}{2}-\eps$, where $n$ is an integer and $\mu\in\{0,1\}$. 
Thus the integrands of Feynman integrals in dimensional regularisation are always multivalued functions.  Because of this multivaluedness, we need to work in the frameworks of {\em twisted} cohomology and homology, which we review below.

An example that we will keep in mind and follow through this section is the integral representation of the hypergeometric function ${}_2F_1$ defined in eq.~\eqref{eq:2f1def}.
While this function is not a Feynman integral by itself, some simple Feynman integrals such as the bubble studied in the previous sections are naturally expressed in terms of ${}_2F_1$ functions (see \cref{eq:oneMBubAllOrder,eq:twoMBubAllOrder}), and the additional prefactors do not affect the main points of intersection theory. Feynman integrals, as presented in \cref{eq:Feynman} and \cref{eq:Baikov}, can be viewed in terms of generalisations of this integral, i.e., polynomials in the kinematic invariants raised to complex powers. For the purpose of the present discussion, we will simplify this integral even further by dividing through by the gamma functions. Furthermore, we relabel the exponents to make their independence manifest. Thus we write our basic reference integral as
\begin{equation}\label{eq:2f1int}
	\int_0^1 u^{\alpha_0} (1-u)^{\alpha_1} (1-xu)^{\alpha_{1/x}} \,\rd u\,.
\end{equation}

\subsection{Twisted cohomology and twisted homology}
\paragraph{Twisted cohomology.}

We begin by considering the integrand $\omega$, which we take to be a holomorphic $n$-form
\begin{equation}\label{eq:factorizedIntegrand}
	\omega=\rd{u}\prod_{I} P_I({u})^{\alpha_I}\,,
\end{equation}
where $u=(u_1,u_2,\ldots,u_n)$ and  $\rd{u}=\rd u_1\wedge\ldots\wedge \rd u_n$,
where the $P_I$ are polynomials in the integration variables $u_i$ and additional kinematic variables $x_j$, 
and with complex exponents, $\alpha_I\in\mathbb{C}$. 
In the case of Feynman integrals 
in dimensional regularisation, we would assume that the exponents take the form
$\alpha_I=n_I+a_I\epsilon$, with $n_I\in\mathbb{Z}$,
$a_I\epsilon\in\mathbb{C}^*$, $\sum_I a_I \neq 0$, and where $\epsilon$ can be taken to be infinitesimally small. 
We restrict ourselves to the case where the coefficients in the Laurent expansion of the integral $\int_{\gamma}\omega$  can be given in terms of multiple polylogarithms, which requires the integer values of $n_I$ and some further restrictions on the form of the polynomials $P_I({u})$.

We decompose the integrand as 
$\omega=\Phi\varphi$, where $\Phi$ is a multi-valued function and $\varphi$ is a single-valued differential form,
\begin{equation}
\label{eq:Phiphi}
\Phi = \prod_{I} P_I({u})^{a_I \eps}\quad \textrm{and} \quad
\varphi = \rd{u} \prod_{I} P_I({u})^{n_I}\,.
\end{equation}

We define the \emph{twist} $\dlog\Phi$ 
and consider the covariant differential
\begin{equation}\label{eq:covd}
	\nabla_\Phi=\rd+\dlog\Phi\wedge\,.
\end{equation}
We then have $\rd(\Phi\xi) = \Phi\,\nabla_{\Phi}\xi$, where $\xi$ can be any smooth differential form. Stokes' Theorem implies that
for an arbitrary smooth $(n-1)$-form $\xi$ we have
\begin{equation}\label{eq:connection}
	\int_\gamma\Phi\varphi=
	\int_\gamma\Phi(\varphi+\nabla_\Phi\xi)\,.
\end{equation}
Thus the integrand is only defined up to adding a total covariant derivative, and 
 we are therefore considering elements of the (twisted) cohomology groups
\begin{equation}\label{eq:H_dR_def}
	H^n(X,\nabla_\Phi)=
	\{\varphi|\nabla_\Phi\varphi=0\}
	/
	\{\nabla_\Phi\xi\}\,.
\end{equation}
One might consider the twisted cohomology group  with the differential of \cref{eq:covd} acting on differential forms of different degrees, but a theorem of Aomoto \cite{Aomoto1975OnVO} states that $n$ is the only dimension with nonvanishing twisted cohomology.

Equation~\eqref{eq:connection} provides an alternative way to generate the IBP identities discussed in \cref{sec:ibps}.
The expansion of these covariant derivatives in terms of the original polynomials $P_I({u})$ has the effect of shifting the integer parts $n_I$ of the exponents in \cref{eq:factorizedIntegrand}, while leaving the complex part $a_I\eps$ unchanged. 
The space generated by the integrands with different values of $n_I$ gives the full space of twisted cohomology.

\begin{ex}
Let us return to our example of the integral in \cref{eq:2f1int}. Comparing \cref{eq:2f1int} with \cref{eq:factorizedIntegrand}, we see that $n=1$, $|I|=3$, and the set of polynomials can be taken to be $\{P_0=u,\,P_1=1-u,\,P_{1/x}=1-x u\}$. We have set $u_1=u$ for simplicity, not to be confused with the multi-index $u$ in general formulas such as \cref{eq:factorizedIntegrand}. 
Then $\varphi= u^{n_0} (1-u)^{n_1} (1-xu)^{n_{1/x}} \,\rd u$, $\Phi= u^{a_0\eps} (1-u)^{a_1\eps} (1-xu)^{a_{1/x}\eps}$, and
the twist 1-form is 
\begin{equation}\label{eq:2F1twist}
 \dlog\Phi = a_0\eps \frac{\rd u}{u} - a_1\eps \frac{\rd u}{1-u} - a_{1/x}\eps\frac{x\,\rd u}{1-x u}.
\end{equation}
The space of integrands modulo the IBP relations is known to be two-dimensional. We will discuss this fact along with a choice of basis shortly.
\end{ex}

\paragraph{Twisted homology.}

The integration contour $\gamma$
is a $n$-dimensional (relative) cycle in
\begin{equation}\label{eq:Xdef}
	X(\mathbb{C})={\mathbb{P}^{n}(\mathbb{C})}\setminus\bigcup\limits_{I} \{P_I({u})=0\}\,,
\end{equation}
Strictly speaking, $\{P_I({u})=P_I(u_1,\ldots,u_n)=0\}$ is an affine variety in $\mathbb{C}^n$. We use the same notation for the affine variety and its lift to projective space. In other words, $\gamma$ is a domain whose boundary is contained in the union of the varieties defined by $P_I({ u})=0$ (or equivalently $\Phi=0$). Note that this is in particular the case for the Baikov representation in eq.~\eqref{eq:Baikov}, where the integration cycle $\Delta$ is bounded by the variety where the Baikov polynomial vanishes.

Since $a_I\neq0$, then 
$\Phi$ vanishes on the boundary of $\gamma$, at least for some ranges of values of $\epsilon$, and thus for all values by analytic continuation. 
Therefore, if $a_I\neq0$, there are no boundary
contributions when performing integration by parts.

The concept of a twisted cycle, as an element of twisted homology, is essentially a version of the equivalence class of $\gamma$ that is taken together with a specific choice of branch of $\Phi$ to deal with the multi-valuedness. The ordinary Stokes' Theorem holds on the branch and corresponds to the version with the covariant derivative given above.

The effect of the twist is to replace the boundary of the cycle by a contour that encircles the singularity. The simplest twisted cycle is a loop around a singular point. Precise definitions and explanations may be found in section 2.3 of ref.~\cite{AomotoKita}.  When used as an integration contour, the result is the same as the physicists' notion of  the so-called `plus-prescription' that regularises the integral in the $\eps$ expansion (see e.g.~\cite{Plehn:2009nd}).

In the case of the integral in \cref{eq:2f1int}, we are given the contour $\gamma=[0,1]$, which is consistent with the description in \cref{eq:Xdef}. We will not write the twist explicitly, but it can be constructed with the knowledge of $\Phi$, and its use is implicit in regularising divergences if we perform a series expansion in $\eps$ before evaluating the integral. We note that there are various other cycles that can be constructed with boundaries contained in the union of varieties defined by $\Phi=0$, namely cycles whose endpoints belong to the set $\{0,1,1/x,\infty\}$. It turns out that at most two of them can be linearly independent (up to boundaries), as we discuss below in the context of a basis choice.

\paragraph{Bases of twisted cohomology and twisted homology.}

It will be useful to identify explicit bases for the twisted cohomology and homology groups associated to a particular integral. The problem of constructing  bases algorithmically appears to be difficult to solve. In practice, it is helpful to first identify the dimension of these groups, then select a set of elements of that cardinality, and finally test for linear independence. 
It is not obvious that the dimension of twisted cohomology should equal the dimension of twisted homology, but in the case where a universal coefficient theorem holds, there is a natural isomorphism between these groups, leading to the bilinear pairing expressed by integration (see Lemma 2.3 of ref.~\cite{AomotoKita}). 
For all cases of interest in this review, the dimensions of the two groups coincide.
Even in such cases, however, determining the dimension of the groups is not straightforward in general. 
There is an upper bound given by the number of 
critical points of the function $\Phi$, i.e., the 
number of independent solutions to the equation
\begin{equation}
	\dlog\Phi=0\,,
\end{equation}
and in many cases this bound is saturated, for example under conditions outlined in 
ref.~\cite{AomotoKita} (see also refs.~\cite{Lee:2013hzt,Bitoun:2018afx,Frellesvig:2019kgj}).

\begin{ex}
In the example of the integral in \cref{eq:2f1int}, we see from \cref{eq:2F1twist} that the equation $\dlog\Phi=0$ has two solutions generically, since we can factor out a quadratic polynomial in $u$ after multiplying the denominators through. Accordingly, the dimension of co/homology in this case is 2. To choose a basis of cohomology, we can try two different sets of integers $\{n_0,n_1,n_{1/x}\}$ in
$\varphi= u^{n_0} (1-u)^{n_1} (1-xu)^{n_{1/x}} \,\rd u$ and check their linear independence as cohomology classes. To choose a basis of homology, we can choose two sets of endpoints among $\{0,1,1/x,\infty\}$. In order to follow this example through the following sections, let us make a concrete choice and take
\beq\label{eq:2F1_contours}
\gamma_1=[0,1]\qquad \textrm{and} \qquad\gamma_2=[0,1/x]\,,
\eeq 
and
\begin{equation}\label{eq:canonicForm2F1}
	\varphi_1=\dlog\frac{u}{u-1}\,,
	\qquad
	\varphi_2=\dlog\frac{u}{u-1/x}\,,
\end{equation}
Here we have chosen differential forms that have logarithmic singularities exactly at the endpoints of the cycles. This type of differential form is called a canonical form associated to the cycle \cite{Arkani-Hamed:2017tmz}.\footnote{The notion of canonical form used here is a differential form and should not be confused with the differential equations in canonical form introduced in section~\ref{sec:diffeqs}.}
\end{ex}

In the following subsection, we use pairings of bases of twisted co/homology to construct  the period matrix and matrices of intersection numbers in twisted co/homology. The linear independence of a putative basis can be confirmed by checking the rank of these matrices. 
\begin{rem}One of the assumptions made in the mathematical treatment of intersection theory is that none of the exponents $\alpha_I$ is an integer, and moreover that their total sum is also not an integer. This assumption is not valid for typical Feynman integrals and their associated hypergeometric functions, but it is found in practice that physical results can be obtained by taking suitable limits of the generic case.
\end{rem}

\subsection{Pairings}

Now that we have introduced the twisted cohomology and homology groups associated to an integral $\int_\gamma \omega$, we consider functions that pair their elements through complex-valued bilinear maps. The first pairing is the familiar integral map, with a possible physical interpretation of a Feynman integral or a cut Feynman integral. Other pairings can be constructed between elements of the same group, giving intersection numbers of the twisted cohomology or twisted homology.

\paragraph{Integral pairings.}

The integral pairing brings us back to our starting point, namely the integral $\int_\gamma \omega$ obtained from the pairing of $\gamma$ and $\omega$.  Now that we have identified these objects as elements of twisted homology and cohomology groups, respectively, we can consider the full space obtained from this pairing, giving integrals related to the original one.

Let $\vec\gamma$ denote a basis of the (twisted) homology group, and let
 $\vec\varphi$ denote a basis of the (twisted) cohomology group. The integral pairings of the 
cycles $\gamma_l\in\vec\gamma$ with the forms
$\varphi_k\in\vec\varphi$ give entries of the so-called
\emph{period matrix}, 
\begin{equation}\label{eq:periodMat}
{ P}_{kl}(\vec\varphi,\vec\gamma)
=\gb{\gamma_l|\varphi_k}
	=\int_{\gamma_l}\Phi\varphi_k\,,
\end{equation}
where each row is associated to a differential form,  and each column is associated to a cycle.
Here we have also introduced the bra-ket notation of ref.~\cite{Mizera:2017rqa}, which will also be used for the other types of pairings. This notation should properly be accompanied by a subscript $\Phi$ to keep track of the twist, but since the twist remains constant throughout our discussion, we will suppress it.

In the context of Feynman integrals, each column of the period matrix comes from a basis of contours associated to generalised cuts, while each row of the period matrix comes from a basis of integrands associated to master integrals. The period matrix can in fact be interpreted as (the transpose of) a fundamental solution matrix for the system of differential equations satisfied by the master integrals, and is thus closely related to the Wronskian that was extensively used in \cref{sec:diffeqs}.

The period matrix ${P}$ is a square matrix whose dimension is given by the
dimension of the (co)homology group. 
It follows that any integral pairing of elements of the twisted homology and cohomology groups (with the same twist) can be written as a 
 linear combination of the elements of the period
matrix,
\begin{equation}
\label{eq:genericAsPeriods}
	\int_\gamma\omega
=	\sum_{k,l}c_{kl}\,{P}_{kl}(\vec\varphi,\vec\gamma)\,.
\end{equation}
The algebraic properties of any
integral of this type can then be studied
from the entries of the period matrix.

\begin{ex}
For the bases given in \cref{eq:2F1_contours,eq:canonicForm2F1}, we find the following entries of the period matrix:
\begin{align*}
P_{11}(\vec\varphi,\vec\gamma) &= 
\frac{\Gamma \left( a_0\epsilon \right) \Gamma
   \left( a_1\epsilon \right) }{\Gamma
   \left( a_0\epsilon+a_1\epsilon\right)} 
   \, _2F_1\left(
   a_0\epsilon ,- a_{1/x}\epsilon ;
   a_0\epsilon+a_1\epsilon ;x\right)\,, 
   \\
P_{12}(\vec\varphi,\vec\gamma) &= 
\frac{\pi  x^{-a_0 \epsilon } \csc \left(\pi  a_0
   \epsilon \right)  \Gamma \left( 1+
   a_{1/x}\epsilon \right)}{\Gamma \left(1-
   a_0\epsilon \right) \Gamma \left(1+  
  a_0\epsilon+a_{1/x}\epsilon\right)}
   \, _2F_1\left( 
   a_0\epsilon,1-  a_1\epsilon; 
   1 + a_0\epsilon+a_{1/x}\epsilon;\frac{1}{x}\right)\,,
    \\
P_{21}(\vec\varphi,\vec\gamma) &= 
\frac{\Gamma \left( a_0\epsilon \right) \Gamma
   \left( 1+ a_1\epsilon\right) }{\Gamma
   \left(1+ a_0\epsilon +a_1\epsilon \right)}
   \,
   _2F_1\left( a_0\epsilon ,1- 
   a_{1/x}\epsilon; 1+
   a_0\epsilon+a_1\epsilon;x\right) \,,
   \\
P_{22}(\vec\varphi,\vec\gamma) &= 
\frac{\pi  x^{-a_0 \epsilon } \csc \left(\pi  a_0
   \epsilon \right)  \Gamma \left(
   a_{1/x}\epsilon \right)}{\Gamma \left(1-
   a_0\epsilon \right) \Gamma 
   \left(a_0\epsilon +a_{1/x}\epsilon \right)}
\, _2F_1\left(
   a_0\epsilon ,- a_1\epsilon ; 
  a_0\epsilon+a_{1/x}\epsilon;\frac{1}{x}\right) \,.
\end{align*}
\end{ex}

\paragraph{Cohomology intersection numbers.}

A less obvious pairing can be constructed between
two differential forms. Consider
two bases $\vec\varphi$ and $\vec\psi$, not necessarily distinct, of the same twisted cohomology group.  We can then
compute \emph{cohomology intersection numbers} 
$\langle \varphi_i|\psi_j\rangle$ between these forms.
To be more precise, 
we must first construct a dual twisted cohomology group, which
is also generated by $\vec\psi$ but for which
the covariant differential is $\nabla_{\Phi^{-1}}$. For dimensionally regularised integrals, this dual operation corresponds to taking $\epsilon\to-\epsilon$
in $\Phi$. We can then pair generators 
$\langle\varphi_i|$ of the cohomology with elements
$|\psi_j\rangle$ of the dual cohomology \cite{AomotoKita,Mizera:2017rqa},
\begin{equation}\label{eq:intNumbDef}
	\langle \varphi_i|\psi_j\rangle
	=\frac{1}{(2\pi i)^2}\int_{X(\mathbb{C})}
	\varphi_i\wedge\iota_\Phi(\psi_j)\,,
\end{equation}
where $X(\mathbb{C})$ is as defined in \cref{eq:Xdef}, and
$\iota_\Phi$ is the map that associates to a form $\psi_j$
a form $\iota_\Phi(\psi_j)$ in the same cohomology class
but with compact support, so that the integral is well 
defined~\cite{Matsumoto1998,Mizera:2017rqa}. 
That is, the form $\iota_\Phi(\psi_j)$ vanishes in a neighborhood of the space $\Phi=0$ in ${\mathbb{P}^{n}(\mathbb{C})}$.
Unlike integral pairings, the cohomology intersection numbers are single-valued. 
The intersection
numbers of basis elements can  be arranged in the matrix
\begin{equation}\label{eq:C_matrix_def}
{C}_{kl}(\vec\varphi,\vec\psi)=\vev{\varphi_k|\psi_l}
\end{equation} 
which has the same dimensions as the period matrix ${P}$, and they similarly form a basis of all intersection numbers in the twisted cohomology.

The definition of the intersection numbers in \cref{eq:intNumbDef} 
is not always the most convenient for practical calculations,
even though it has recently been used in refs.~\cite{Caron-Huot:2021iev,Caron-Huot:2021xqj},
so alternative formulations have been found.
 One version pertains to the case
 where $n=1$, and the $\varphi_i$ and $\psi_j$ 
are $\dlog$-forms, i.e.\ wedge products of 1-forms $\dlog\alpha_k$, where the singularities of these forms are contained within the boundary variety $\Phi=0$.
In this case, 
a more explicit formula for the intersection numbers is given by
\cite{Mizera:2017rqa,Mastrolia:2018uzb,Mizera:2019gea}
\begin{equation}
	\label{eq:intRes}
	\vev{\varphi_i|\psi_j}
	=
	\sum_{u_p\in \mathcal{P}(\Phi)}
	\frac{\res_{u=u_p}\varphi_i\,\res_{u=u_p}\psi_j}
	{\res_{u=u_p}\dlog\Phi}\,,
\end{equation}
where 
$\mathcal{P}(\Phi)$ is the set of poles of $\dlog\Phi$.
This formula has been generalised to the case
where $n>1$ in ref.~\cite{Frellesvig:2019uqt}.

When the $\varphi_i$ and $\psi_j$ are not necessarily $\dlog$-forms,
other alternative formulas were proposed in ref.~\cite{Mizera:2017rqa}.
For $n=1$, and setting $u_1=u$, 
\begin{equation}
\label{eq:intnumbersell1}
\vev{\varphi_i|\psi_j} = \sum_{u^*} \left(
\frac{\partial^2\log\Phi}{\partial{u}^2}
\right)^{-1}
\left.
\widehat\varphi_i \,\widehat\psi_j
\right|_{u=u^*}\,,
\end{equation}
where the sum is over the critical points, i.e., the points $u^*$ satisfying 
$\dlog\Phi(u^*)=0$, and $\varphi_i=\widehat\varphi_i\, \rd u$ and similarly
for $\psi_j$.
In the case $n=2$, with $(u_1,u_2)=(u,v)$, 
\begin{equation}\label{eq:intnumbersell2}
\vev{\varphi_i|\psi_j} = \sum_{(u^*,v^*)} 
\left[\det
\left(
\begin{array}{cc}
 \frac{\partial^2\log\Phi}{\partial{u}^2} & 
 \frac{\partial^2\log\Phi}{\partial u \,\partial v} \\
 \frac{\partial^2\log\Phi}{\partial u \,\partial v} & 
 \frac{\partial^2\log\Phi}{\partial{v}^2}
\end{array}
\right)
\right]^{-1}
\left.
\widehat\varphi_i \,\widehat\psi_j
\right|_{(u,v)=(u^*,v^*)}\,,
\end{equation}
where  the sum extends over the critical points $(u^*,v^*)$ satisfying 
\[
\partial_u\log\Phi(u^*,v^*)=\partial_v\log\Phi(u^*,v^*)=0,\]
and these formulas generalise naturally to higher values of $n$. 
Reference~\cite{Weinzierl:2020xyy} presents a different algorithm using a Gr\"obner basis for generating a basis of twisted cohomology, with the feature that algebraic extensions such as square roots are avoided.

Cohomology intersection numbers are algebraic functions of the kinematic variables $x_j$ and exponents $\alpha_I$. 
For a $\dlog$ basis, the leading order in $\eps$ of the period matrix agrees with the matrix of intersection numbers \cite{Mastrolia:2018uzb}, which gives yet another method of computing these numbers.

\begin{ex}
In the example of the integral in \cref{eq:2f1int}, and the basis of \cref{eq:canonicForm2F1}, it is straightforward to see from 
\cref{eq:intRes} that the matrix of cohomology intersection numbers is
\begin{equation}\label{eq:matToDiag2F1}	
	{C}(\vec\varphi,\vec\varphi)
	=\vev{\vec\varphi|\vec\varphi}
	= 
	\begin{pmatrix}
	\dfrac{1}{a_0\epsilon}+\dfrac{1}{a_1\epsilon}  & \dfrac{1}{a_0\epsilon}
	\\[3mm]
	\dfrac{1}{a_0\epsilon}& \dfrac{1}{a_0\epsilon}+\dfrac{1}{a_{1/x}\epsilon} 
	\end{pmatrix}\,.
\end{equation}
\end{ex}

\paragraph{Homology intersection numbers.}

For completeness, we comment briefly on homology intersection numbers, although these 
have not yet been applied as widely in the context of physics.
In the definition of cohomology intersection numbers above, we saw that one of the differential forms needed to be given a compact support. The dual construction for twisted cycles is to use a locally finite chain group. Then the intersection numbers $[\gamma_k|\gamma_l]$ can be computed directly as geometric intersections of the cycles $\gamma_k$ and $\gamma_l$. We refer to section 2.3.3 of ref.~\cite{AomotoKita} for a fuller explanation and an illustrative example. Given a pair of bases $\vec\gamma, \vec\delta$ of twisted homology, the intersection numbers can be collected in matrix form, 
\begin{equation}
{H}_{kl}=[\gamma_k|\delta_l]\,.
\end{equation}
Here the bra elements $\sqbra{\gamma_l}$ are the twisted cycles described above, while the ket elements $\sqket{\delta_l}$ are generators of a locally finite version of twisted homology.

\paragraph{Twisted period relations.}

The pairings expressed in the period matrix co/homology intersection numbers give rise to the following quadratic relations \cite{ChoMatsumoto}:
\begin{align}\begin{split}\label{eq:twistedperiodrelations}
{P}_+\left({H}^{-1}\right)^T{P}_- &= {C}\,, \\
\left({P}_-\right)^T{C}^{-1}{P}_+ &= {H}\,. 
\end{split}\end{align}
The subscripts on the period matrix refer to two possible pairings: ${P}_{-kl}(\vec\varphi,\vec\gamma)=\gb{\varphi_k|\gamma_l}$ is the usual integral $\int_{\gamma_l}\Phi \varphi_k$, while ${P}_{+kl}(\vec\varphi,\vec\gamma)=\tgb{\varphi_k|\gamma_l}$ is a pairing between the corresponding compactified cocycles and locally finite cycles, which can simply be evaluated as the integral 
$\int_{\gamma_l}\Phi^{-1} \varphi_k$. For dimensionally regularised Feynman integrals, this duality is simply the transformation $\eps \to -\eps$.

As applied to the Gauss hypergeometric function, the twisted period relation can be expressed as the following quadratic relation:
\begin{align*}
{}_2F_1(\alpha,\beta;\gamma;x)
&{}_2F_1(1-\alpha,1-\beta;2-\gamma;x)
=\\
&\qquad{}_2F_1(\alpha+1-\gamma,\beta+1-\gamma;2-\gamma;x)
{}_2F_1(\gamma-\alpha,\gamma-\beta;\gamma;x)\,.
\end{align*}
Since twisted period relations are a generic feature of twisted cohomology and homology theories, they imply that there are  generically quadratic relations among dimensionally-regulated Feynman integrals. It would be interesting to explore these relations more systematically, and to relate them to the quadratic relations among Feynman integrals discussed in section~\ref{sec:other_rels}.

\subsection{Applications of intersection theory}

 This section presents a brief review of a few different applications of the intersection theory introduced above to the computation of Feynman integrals. We would also like to mention that intersection theory has found physical applications in the context of amplitudes in string theory and related theories, as initiated in ref.~\cite{Mizera:2017cqs,Mizera:2017rqa},
but we will not discuss those ideas here.

\subsubsection{Reduction and differential equations}

Intersection theory was applied to compute the coefficients in a reduction to master integrals in  refs.~\cite{Mastrolia:2018uzb,Frellesvig:2019kgj,Caron-Huot:2021iev,Caron-Huot:2021xqj}.
Suppose that the integral of interest can be represented as
\begin{equation}
\gb{\varphi|\gamma} = \int_\gamma \Phi\varphi\,.
\end{equation}
The reduction to master integrals can be represented as
\begin{equation} \label{eq:int-reduction}
\int_\gamma \Phi\varphi= \sum_i c_i \int_\gamma \Phi\varphi_i\,,
\end{equation}
where $\varphi_i$ is a basis of twisted cohomology, and the reduction coefficients are denoted by $c_i$.
This equation is fundamentally an operation on the cocycles, which we can write as
\begin{equation}
\bra{\varphi} = \sum_i c_i\bra{\varphi_i}\,.
\end{equation}
By projecting this equation onto the basis of cocycles, one finds that the reduction coefficients can be computed by cohomology intersection numbers:
\begin{equation}
c_i = \vev{\varphi|\varphi_j} \left({C}^{-1}\right)_{ji}\,.
\end{equation}
Thus the computation of cohomology intersection numbers can be substituted for reduction techniques. In particular,
this approach avoids having to solve large systems of equations, addressing one of the bottlenecks identified in \cref{sec:solIBPs}.
However, and despite much progress in the last few years, the difficulty associated with computing intersection numbers still makes the Laporta algorithm the method of choice in practical calculations.

When applied to the example of the ${}_2F_1$ integral, we find a version of the Gauss contiguous relations, which express the property that the space generated by integer shifts of the parameters $\alpha, \beta, \gamma$ in ${}_2F_1(\alpha,\beta,\gamma;x)$ is two-dimensional. In this case, it is equivalent to the statement that the twisted cohomology group is two-dimensional. 

Intersection theory can also provide the coefficients of the differential equations of \cref{eq:DEQ} \cite{Mastrolia:2018uzb}. If the differential operator is applied to the integral pairing, and if the integration contour does not depend on the kinematic variables, then we obtain
\begin{equation}\label{eq:int-diff}
\partial_{x_i} \int_\gamma \Phi\varphi_k
= \int_\gamma \Phi  \nabla_{\Phi,x_i} \varphi_k
\end{equation}
with the covariant derivative $\nabla_{\Phi,x_i}=\partial_{x_i} + \partial_{x_i}\log\Phi\wedge$, where here we take $\Phi=\mathcal{B}(z)^{\frac{D-K-1}{2}}$. The right-hand side can be expanded in master integrals as in \cref{eq:int-reduction}, and the same reasoning followed to obtain the coefficients in terms of intersection numbers. Thus we can identify the matrix of coefficients as
\beq
A_{x_i}(x,\eps) = \vev{\nabla_{\Phi,x_i}\varphi_j|\varphi_l}  \left(C^{-1}\right)_{lk}\,.
\eeq
In the Baikov representation, uncut (or less than maximally cut) Feynman integrals have singularities that are not regulated by the parameter $\epsilon$. These singularities can be treated with analytic regularisation \cite{speer1969}, which then allows the application of intersection theory \cite{Frellesvig:2019uqt,Frellesvig:2020qot}. A complementary approach is given in refs.~\cite{Caron-Huot:2021xqj,Caron-Huot:2021iev}, which focus on the cohomology that is dual to the cohomology of Feynman integrands, for the purpose of constructing intersection numbers as above. These papers find that the correct mathematical framework is that of 
 relative twisted co/homology  as introduced recently in \cite{matsumoto2019relative}, which extends the scope of intersection theory to allow relative boundaries.

\begin{rem}
The decomposition into a basis of master integrals with the help of intersection numbers has a dual application for integration cycles. Assume that we have fixed a basis of cut integrals, i.e., we have fixed a basis of integration cycles $\vec\gamma$. Then every other integration cycle $|\gamma]$ can be decomposed into this basis using the approach we have just described, but using the intersection numbers among cycles instead of intersections numbers among differential forms:
\beq
|\gamma] = \sum_i|\gamma_i]\,h_i\,,\textrm{~~~with~~~} h_i = \left(H^{-1}\right)_{ij}[\gamma_j|\gamma]\,.
\eeq
In this way we can decompose a given cut integral $\langle\gamma|\omega]$ into the basis of cut integrals $\langle\gamma_i|\omega]$.
\end{rem}

\subsubsection{Coaction}

MPLs are well known to be equipped with a coaction
\cite{GoncharovMixedTate,Goncharov:2005sla,brownmixedZ,Duhr:2012fh},
a mathematical operation that  is naturally compatible with the actions of differential operators and taking discontinuities across branch cuts. As such, it can be employed as a computational tool.

There is also a natural mathematical coaction on Feynman integrals~\cite{Brown:coaction, Panzer:2016snt,Brown:2015fyf}. 
Motivated by these observations,
it was conjectured that there exists a diagrammatic coaction on Feynman integrals which,
for the class of integrals admitting a Laurent expansion in $\epsilon$ in terms of MPLs, corresponds precisely to the coaction on MPLs. 
 This conjecture was expressed 
in refs.~\cite{Abreu:2017enx,Abreu:2017mtm,Abreu:2021vhb}, which proposed a general form for a coaction on integrals:\footnote{This form of a coaction is also believed to apply to the disk integrals in string tree-level amplitudes \cite{Schlotterer:2012ny,Drummond:2013vz}, as well as the more general genus-zero integrals in ref.~\cite{Britto:2021prf}.}
\begin{equation}\label{eq:masterformula}
	\Delta\left(\int_\gamma\omega\right)=
	\sum_{ij}c_{ij}\int_\gamma\omega_i\otimes
	\int_{\gamma_j}\omega\,,
\end{equation}
where the $c_{ij}$ are rational or algebraic coefficients, the $\{\omega_i\}$ are a basis of the (twisted) cohomology 
group associated with the integral on the left-hand side, and the $\{\gamma_j\}$ 
are are a basis of the corresponding homology group.\footnote{See also ref.~\cite{Kreimer:2020mwn} for a separate proposal of a diagrammatic coaction.}
It is straightforward to check that \cref{eq:masterformula} satisfies the algebraic properties of a coaction, such as coassociativity.
For one-loop integrals, the diagrammatic coaction was made fully explicit by providing bases of forms and contours for arbitrary one-loop integrals~\cite{Abreu:2017enx,Abreu:2017mtm}.

It was further conjectured in refs.~\cite{Abreu:2019wzk,Abreu:2019xep} that \cref{eq:masterformula} can be applied to generalized hypergeometric functions in their integral representations, of the type considered in this section, of which the prototypical example is the function ${}_2F_1$ that we have been following. The coaction operation proposed for hypergeometric functions is claimed to be compatible with the diagrammatic coaction, as well as to commute with taking the Laurent expansion in $\eps$, at least in the case where there is an expansion in terms of MPLs. 
For this class of functions, a rigorous (motivic) mathematical treatment has been developed in ref.~\cite{Brown:2019jng} and applied specifically to Lauricella $F_D$ functions. 
It is hoped that compatible coactions could be formulated for more general Feynman integrals, but this is more speculative, as it requires formulating coactions on the new classes of functions arising in the $\eps$ expansion.  This has been done for elliptic eMPLs in  ref.~\cite{Broedel:2018iwv}, and  an example of a corresponding diagrammatic coaction  is given in ref.~\cite{Hidding:2021xot}.

In the case where canonical forms $\Omega(\vec\gamma)$  can be constructed corresponding to the basis $\vec\gamma$,
the coaction on the period
matrix  is simply obtained by matrix multiplication:
\begin{equation}\label{eq:coactionPMat}
	\Delta P_{kl}(\vec\gamma,\vec\varphi)
	=
	\sum_{i,j}
	\left[C^{-1}\!\left(\Omega(\vec\gamma),\vec\varphi\right)\right]_{ij}
	P_{ki}(\vec\gamma,\vec\varphi)
	\otimes
	P_{jl}(\vec\gamma,\vec\varphi)\,.
\end{equation}
If one can choose generators of the (co)homology
such that $C\!\left(\Omega(\vec\gamma),\vec\varphi\right)=\delta_{ij}$,
then the coaction takes a particularly simple form:
\begin{equation}\label{eq:coactionDiagBasis}
	\Delta P_{kl}=
	\sum_{i}
	P_{ki}
	\otimes
	P_{il}\,.
\end{equation}

\begin{ex}
Throughout this section, we have followed the example of the hypergeometric function ${}_2F_1$. Any choice of bases will give an expression for the coaction using the general formula \cref{eq:coactionPMat}. But let us see how to arrive at a particularly elegant expression. 

For the basis of contours, let us take  $\gamma_1=[0,1]$ from the usual Euler integral, and $\gamma_2=[1/x,\infty)$.  One benefit of using $\gamma_2$ is that its pairing with the general integrand 
can be recognised as a ${}_2F_1$ function of $x$ after a simple change of integration variable from $u$ to $v=1/(xu)$.

It may be desirable to choose a second basis of differential forms $\varphi_i$, such that 
the matrix of intersection numbers ${C}(\vec\varphi,\vec\psi)$ 
has a minimum number of nonvanishing off-diagonal elements.
Indeed,
${C}$ will be diagonal if each $\varphi_i$ is taken to be a $\dlog$-form whose
singularities overlap with the boundary components of  $\gamma_j$ if and only 
if $i=j$. 
In this case, our chosen cycles $\gamma_1$ and $\gamma_2$ have no common boundaries. The choice
$\vec\varphi=(\varphi_1,\varphi_2)$ with
\begin{equation}\label{eq:cohom2f1}
\varphi_1 = \frac{a_0 a_1}{a_0+a_1}\, \eps\, \frac{du}{u(1-u)}, \qquad 
\varphi_2 = \frac{(a_0+a_1+a_{1/x})a_{1/x}}{a_0+a_1}\,\eps\,\frac{x\,du}{1-xu},
\end{equation}
uses $\dlog$ forms normalised so that the  matrix of cohomology intersection numbers is the identity.

With the bases of cycles $\vec\gamma$ and forms $\vec\varphi$, where $C\big(\Omega(\vec\gamma);\vec\varphi;\Phi\big)$ is the identity matrix, the coaction takes the form
\begin{eqnarray}
\label{eq:coaction2f1}
\Delta \left(\int_{\gamma_{1}} \Phi\varphi \right)
= \int_{\gamma_{1}} \Phi\varphi_1 
\otimes \int_{\gamma_{1}} \Phi\varphi
+ \int_{\gamma_{1}} \Phi\varphi_2 
\otimes \int_{\gamma_{2}} \Phi\varphi\,,
\end{eqnarray}
If we evaluate the integrals explicitly, and also include the coaction on the beta-function prefactors in \cref{eq:2f1def}, we find 
\begin{equation*}
\begin{split}
\Delta\Big({}_2F_1(\alpha,\beta;&\gamma;x)\Big) =
{}_2F_1(a\eps,b\eps;c\eps;x) \otimes {}_2F_1(\alpha,\beta;\gamma;x) \\
\label{eq:coaction2F1}
&- x\frac{b\eps(c\eps-b\eps)}{c\eps(1+c\eps)}\,
{}_2F_1(1+a\eps,1+b\eps;2+c\eps;x) \\
& \otimes 
e^{-i\pi(\alpha+\beta-\gamma)}x^{1-\gamma}
\frac{B(1-\beta,1+\beta-\gamma)}{B(\alpha,\gamma-\alpha)}
{}_2F_1\left(1+\alpha-\gamma,1+\beta-\gamma;2-\gamma;x\right),
\end{split}
\end{equation*}
where $\alpha=n_\alpha+a\eps$, $\beta=n_\beta+b\eps$ and
$\gamma=n_\gamma+c\eps$, and the coaction has also been applied to the beta function prefactors.
The hypergeometric functions in the last entry, ${}_2F_1(\alpha,\beta;\gamma;x)$  and $x^{1-\gamma}{}_2F_1\left(1+\alpha-\gamma,1+\beta-\gamma;2-\gamma;x\right)$, may be recognised as two independent solutions of Euler's hypergeometric differential equation.
\end{ex}

\section*{Acknowledgments}
We would like to thank Matteo Becchetti, Johannes Br\"odel, Kilian B\"onisch, Francis Brown, Fabian Fischbach, Einan Gardi, Harald Ita, Albrecht Klemm, Nils Matthes, James Matthew, Andrew McLeod, Sebastian Mizera, Christoph Nega, Ben Page, Lorenzo Tancredi, and Mao Zeng for fruitful discussions on some of the topics presented in this review.
This work  was supported by the European Union's Horizon 2020 research and innovation programme under the Marie Sk\l{}odowska-Curie grant agreement No.~764850 {\it ``\href{https://sagex.org}{SAGEX}''}.

\section*{References}
\bibliography{References}

\end{document}